\newcounter{fdefncount}
\definecolor{shadecolor}{rgb}{1,1,1}
\newtheorem{niall1}{Theorem}[section]
\newtheorem{niall2}{Corollary}[section]
\newtheorem{niall3}{Lemma}[section]
\newtheorem{niall4}{Definition}[section]
\newtheorem{niall5}{Experiment}[section]
\newtheorem{niall6}{Algorithm}[section]
\newtheorem{niall7}{Hypothesis}[section]
\newtheorem{niall8}{Example}[section]
\newtheorem{niall10}[fdefncount]{Proof}
\newtheorem{niall9}{Proof}[section]
\newtheorem{niall11}{Counter~Example}[section]
\newenvironment{thm}{\begin{niall1}}{\end{niall1}}
\newenvironment{counterex}{\begin{niall11}}{\end{niall11}}
\newcommand{\bmat}{\left[ \begin{array}}
\newcommand{\emat}{\end{array} \right]}
\newcommand{\bpmat}{\begin{pmatrix}}
\newcommand{\epmat}{\end{pmatrix} ]}
\newcommand{\bbmat}{\begin{bmatrix}}
\newcommand{\ebmat}{\end{bmatrix} ]}
\newcommand{\vc}{{\vec{c}}}
\newcommand{\vd}{{\vec{d}}}
\newcommand{\ve}{{\vec{e}}}
\newcommand{\vC}{{\vec{C}}}
\newcommand{\vD}{{\vec{D}}}
\newcommand{\vE}{{\vec{E}}}
\newcommand{\vF}{{\vec{F}}}
\newcommand{\bbR}{\mathbb{R}}
\newcommand{\bbC}{\mathbb{C}}
\newcommand{\bbN}{\mathbb{N}}
\newcommand{\bbox}{\definecolor{shadecolor}{rgb}{0.8,0.8,0.8}\begin{shaded}}
\newcommand{\ebox}{\end{shaded}\definecolor{shadecolor}{rgb}{0.9,0.9,0.9}}
\newcommand{\degree}{^{\circ}}
\def\vc{{\vec{c}}}
\def\Da{{$D1$}}
\def\Db{{$D2$}}
\def\Dc{{$D3$}}
\def\Dd{{$D4$}}
\def\De{{$P1$}}
\def\Df{{$P2$}}
\def\Aa{{$A1$}}
\def\Ab{{$A2$}}
\newcommand{\sa}{{s_1}}
\newcommand{\saa}{{s_2}}
\def\beq{\begin{eqnarray*}}
\def\eeq{\end{eqnarray*}}
\newcommand{\half}{\frac{1}{2}}
\title{Comparing Measures of Sparsity}
\author{Niall~Hurley~and~Scott~Rickard\thanks{N. Hurley and S. Rickard are with the  Sparse Signal Processing Group in, University College Dublin, Ireland
This material is based upon works supported by the Science Foundation
Ireland under Grant No. 05/YI2/I677.}}
\begin{document}

\maketitle

%\pagenumbering{arabic}
%\pagestyle{headings}

\begin{abstract}
Sparsity of representations of signals has been shown to be a key
concept of fundamental importance in fields such as blind source
separation, compression, sampling and signal analysis.  The aim of
this paper is to compare several commonly-used sparsity measures based
on intuitive attributes.  Intuitively, a sparse representation is one
in which a small number of coefficients contain a large proportion of
the energy.  In this paper six properties are discussed: ({\em Robin
Hood}, {\em Scaling}, {\em Rising Tide}, {\em Cloning}, {\em Bill
Gates} and {\em Babies}), each of which a sparsity measure should
have. The main contributions of this paper are the proofs and the
associated summary table which classify commonly-used sparsity
measures based on whether or not they satisfy these six propositions.  Only one of these measures satisfies
all six: the Gini Index.
\end{abstract}

\section{Introduction}
\label{sec:intro} Whether with sparsity constraints or with sparsity
assumptions, the concept of sparsity is readily used in diverse areas
such as oceanic engineering \cite{li07estimation}, antennas and
propagation \cite{adams08sparse}, face recognition
\cite{wright08robust}, image processing
\cite{mairal08sparse,aharon08sparse} and medical imaging
\cite{leung08sparse}. Sparsity has also played a central role in the
success of many machine learning algorithms and techniques such as
matrix factorization \cite{gupta97highly}, signal recovery/extraction
\cite{ tropp06just}, denoising \cite{candes07enhancing, zhu08stable},
compressed sensing \cite{mishali08reduce}, dictionary learning
\cite{aharon06algorithm}, signal representation \cite{akcakaya08frame,
gribonval03sparse}, support vector machines \cite{licheng07sparse},
sampling theory \cite{blu08sparse,goyal08compressive} and source
separation/localization \cite{ogrady05survey,he07convolutive}. For
example, one method of source separation is to transform the signal to
a domain in which it is sparse (e.g.  time-frequency or wavelet) where
the separation can be performed by a partition of the transformed
signal space due to the sparsity of the representation
\cite{rickard07DUET, yilmaz04blind}. There has also been research in
the uniqueness of sparse solutions in overcomplete representations
\cite{elad06sparse,bruckstein08uniqueness}.

There are many measures of sparsity. Intuitively, a sparse
representation is one in which a small number of coefficients contain
a large proportion of the energy. This interpretation leads to further
possible alternative measures. Indeed, there are dozens of measures of
sparsity used in the literature. Which of the sparsity measures is the
best?  In this paper we suggest six desirable characteristics of
measures of sparsity and use them to compare fifteen popular sparsity
measures.

Considering the nebulous definition of sparsity we begin by examining
how a sparsity measure should behave in certain scenarios.  In
Sec.~\ref{sec:crit} we define six such scenarios and formalize these
scenarios in six mathematical criteria that capture this desirable
behavior. We prove two theorems showing that satisfaction of some
combinations of criteria result in automatic compliance with a
different criteria. In Sec.~\ref{sec:meas} we introduce the most
commonly-used sparsity measures in the literature. We elaborate on one
of these measures, the Gini Index, as it has many desirable
characteristics including the ability to measure the sparsity of a
distribution. We also show graphically how some measures treat
components of different magnitude.  In Sec.~\ref{sec:comp} we present
the main result of this work, namely, the comparison of the fifteen
commonly-used sparsity measures using the six criteria. We show that
the only measure to satisfy all six is the Gini Index.  Proofs of the
table are attached in Appendices~\ref{sec:cex} and \ref{sec:proofs}. A
preliminary report on these results (without proofs) appeared in
\cite{hurley08comparing}. We then compare the fifteen measures
graphically on data drawn from two sets of parameterized
distributions. We select distributions for which we can control the
`sparsity'. This allows us to visualize the behavior of the sparsity
measures in view of the sparse criteria. In Sec.~\ref{sec:concl} we
present some conclusions. The main conclusion is that from the fifteen
measures, only the Gini Index satisfies all six criteria, and, as
such, we encourage its use and study.

\section{The Six Criteria}
\label{sec:crit}
The following are six desirable attributes of a measure of
sparsity. The first four, \Da~through \Dd, were originally applied in
a financial setting to measure the inequity of wealth distribution in
\cite{dalton20measurement}. The last two, \De~and \Df, were proposed
in \cite{rickard04gini}. Distribution of wealth can be used
interchangeably with distribution of energy of coefficients and where
convenient in this paper, we will keep the financial interpretation in
the explanations.  Inequity of distribution is the same as
sparsity. An equitable distribution is one with all coefficients
having the same amount of energy, the least sparse distribution.
%\cite{rickard04gini,arnold86majorization,dalton20measurement,gini21measurement,lorenz05methods},

\begin{itemize}
\item[\Da] {\em Robin Hood} - Robin Hood decreases sparsity (Dalton's
1st Law). Stealing from the rich and giving to the poor decreases the
inequity of wealth distribution (assuming we do not make the rich poor
and the poor rich). This comes directly from the definition of a
sparse distribution being one for which most of the energy is
contained in only a few of the coefficients.
\item[\Db] {\em Scaling} - Sparsity is scale invariant (Dalton's
modified 2nd Law \cite{arnold86majorization}). Multiplying wealth by a
constant factor does not alter the effective wealth distribution. This
means that relative wealth is important, not absolute wealth. Making
everyone ten times more wealthy does not affect the effective
distribution of wealth. The rich are still just as rich and the poor
are still just as poor.
\item[\Dc] {\em Rising Tide} - Adding a constant to each coefficient
decreases sparsity (Dalton's 3rd Law).  Give everyone a trillion
dollars and the small differences in overall wealth are then
negligible so everyone will have effectively the same wealth. This is
intuitive as  adding a constant energy to each coefficient reduces
the relative difference of energy between large and small
coefficients. This law assumes that the original distribution contains
at least two individuals with different wealth. If all individuals
have identical wealth, then by \Db~ there should be no change to the
sparsity for multiplicative or additive constants.
\item[\Dd] {\em Cloning} - Sparsity is invariant under cloning
(Dalton's 4th Law). If there is a twin population with identical
wealth distribution, the sparsity of wealth in one population is the
same for the combination of the two.
\item[\De] {\em Bill Gates} - Bill Gates increases sparsity. As one
individual becomes infinitely wealthy, the wealth distribution becomes
as sparse as possible.
\item[\Df] {\em Babies} - Babies increase sparsity. In populations
with non-zero total wealth, adding individuals with zero wealth to a
population increases the sparseness of the distribution of wealth.
\end{itemize}

%\Da~comes directly from the definition of a sparse distribution being
%one for which most of the energy is contained in only a few of the
%coefficients. Alternatively stated, a sparse distribution is one for
%which most of the wealth is in the hands of a few. \Db~ means that relative wealth is important, not absolute wealth. Making everyone ten times more wealthy does not affect the effective distribution of wealth. The rich are still just as rich and the poor are still just as poor.   \Dc~is
%intuitive as by adding a constant energy to each coefficient reduces
%the relative difference of energy between the larger coefficients and
%the smaller.

%%------------------------------------------------------%%
%%------------------------------------------------------%%
%%------------------------------------------------------%%

These criteria give rise to the sparsest distribution being one with
one individual owning all the wealth and the least sparse being one
with everyone having equal wealth.

Dalton \cite{dalton20measurement} proposed that multiplication by a
constant should decrease inequality. This was revised to the more
desirably property of scale invariance. Dalton's fourth principle,
\Dd, is somewhat controversial. However, if we have a distribution
from which we draw coefficients and measure the sparsity of the
coefficients which we have drawn, as we draw more and more
coefficients we would expect our measure of sparsity to converge.
\Dd~ captures this concept.
\begin{quote}
`Mathematically this [\Dd] requires that the measure of inequality of
the population should be a function of the sample distribution
function of the population. Most common measures of inequality
satisfy this last principle.'\cite{arnold86majorization}
\end{quote}
Interestingly, most measures of sparsity do not satisfy this
principle, as we shall see.

We define a sparse measure $S$ as the a function with the following
mapping
\begin{equation}\label{eq:spmeasC}
S:\left(\bigcup_{n\geq1}\bbC^n \right)\rightarrow\bbR
\end{equation}
where $n \in \bbN$ is the number of coefficients. Thus $S$ maps
complex vectors to a real number.

There are two crucial, core, underlying attributes which our sparsity
measures must satisfy. As all measures satisfy these two conditions
trivially we will not comment on them further except to define them.
\begin{itemize}
\item[\Aa] $S(\vc) = S(\Pi\vc)$ where $\Pi$ denotes permutation, that
is, the sparsity of any permutation of the coefficients is the
same. This means that the ordering of the coefficients is not
important.
\item[\Ab] The sparsity of the coefficients is calculated using the magnitudes of the coefficients. This
means we can assume we are operating in the positive orthant, without
loss of generality.
%% $S(\vc) = S(|\vc|)$ where $|\cdot|$ denotes element-wise
%% absolute value, that is, the sparsity of the absolute value of the
%% coefficients is the same as the sparsity of the coefficients. This
%% means we can assume we are operating in the positive orthant, without
%% loss of generality.
\end{itemize}

By \Ab~ we can assume we are operating in the positive orthant, and as
such we can rewrite (\ref{eq:spmeasC}) as
\begin{equation}\label{eq:spmeasR}
S:\left(\bigcup_{n\geq1}\bbR_{+}^n \right)\rightarrow\bbR,
\end{equation}
which is more consistent with the wealth interpretation.

We will use the convention that $S(\vc)$ increases with increasing
sparsity where $\vc=\bmat{cccc} c_1& c_2&\cdots \emat$ are the
coefficient strengths. Given vectors
\begin{eqnarray*}
\vc&=&\bmat{ccccc} c_1& c_2&\cdots &c_N\emat\\
\vd&=&\bmat{ccccc} d_1& d_2&\cdots &d_M\emat
\end{eqnarray*}
we define concatenation, which we use $\|$ to denote, as
\[
\vc\|\vd = \bmat{cccccccc} c_1& c_2&\cdots &c_N&d_1& d_2&\cdots &d_M\emat.
\]
We also define the addition of adding a constant to a vector as the
addition of that constant to each element of the vector, that is,
for $\alpha in \bbR$,
\[
\vc + \alpha  =\bmat{ccccc} c_1 + \alpha& c_2 + \alpha&\cdots &c_N + \alpha\emat.
\]
The six sparse criteria can be formally defined as follows:
\begin{itemize}
\item[\Da] {\em Robin~Hood}:\\{\small $S(\bmat{cccccc}c_1& \cdots& c_i-\alpha&
\ldots& c_j+\alpha& \ldots\emat )$ $<S(\vc)$} for all $\alpha, c_i, c_j $
such that $c_i>c_j$ and $0<\alpha < \frac{c_i-c_j}{2}$.
\item[\Db] {\em Scaling}:\\ $S(\alpha\vc) = S(\vc)$, $\forall \alpha \in \mathbb{R},~\alpha > 0$.
\item[\Dc] {\em Rising Tide}:\\$S(\alpha + \vc)<S(\vc)$, $\alpha \in
\mathbb{R},~\alpha > 0$ (We exclude the case $c_1 = c_2 = c_3 = \cdots
= c_i = \cdots \forall i$ as this is equivalent to scaling.).
\item[\Dd] {\em Cloning}:\\$ S(\vc) = S(\vc\|\vc) = S(\vc\|\vc\|\vc) = S(\vc\|\vc\|\cdots\|\vc)$.
\item[\De] {\em Bill Gates}:\\  $\forall i \exists \beta = \beta_i>0$, such that $\forall \alpha >0:$ {\small \[S(\bmat{cccc}c_1& \ldots& c_i+\beta+\alpha&\ldots\emat)>S(\bmat{cccc}c_1&\ldots&c_i+\beta& \ldots\emat).\]}
\item[\Df] {\em Babies}:\\$S(\vc||0) > S(\vc)$.
\end{itemize}

As stated above, when proving {\em Rising Tide} we exclude the
scenario where all coefficients are equal. In this case, adding a
constant is actually a form of scaling. Another interpretation is that
the case with all coefficients equal is, in fact, the minimally sparse
scenario and hence adding a constant cannot decrease the sparsity.

\subsection{Two Proofs}
As one would surmise there is some overlap between the criteria. We
present and prove two theorems which demonstrate this overlap.
Theorem~\ref{thm:1} states that if a measure satisfies both criteria
\Da~ and \Db, the sparsity measure also satisfies \De~ by default.
Theorem~\ref{thm:2} states that a measure satisfying \Da, \Db~ and
\Dd~ necessarily satisfies \Df.
%___________________________________________________________________________%
\begin{thm}{\em $D1~\&~D2\Rightarrow P1$, that is, if both \Da~ and \Db~
  are satisfied, \De~ is also satisfied.}
\label{thm:1}
\end{thm}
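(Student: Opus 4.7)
\begin{proofs}
The plan is to fix an index $i$ and a vector $\vc = (c_1,\ldots,c_N)$, and to show that once the value at coordinate $i$ is pushed high enough, any further increase in that coordinate can be ``unwound'' into a sequence of wealth transfers flowing \emph{into} $i$ from the remaining coordinates. Each such transfer will be a reverse Robin Hood move, which by \Da~strictly increases $S$. The role of \Db~will be to let me re-normalize the total mass so that a pure increase of $c_i$ becomes a sum-preserving rearrangement on which \Da~can actually be applied.

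Concretely, write $\vc(\gamma)$ for the vector obtained from $\vc$ by replacing $c_i$ with $c_i+\gamma$, and let $T(\gamma)=\sum_k c_k+\gamma$. I would choose
\[
\beta_i \;>\; \max\Bigl(0,\ \max_{j\neq i}(c_j-c_i)\Bigr),
\]
so that for every $\gamma\geq\beta_i$ the coordinate at $i$ strictly dominates every other coordinate in $\vc(\gamma)$. Using \Db, the claim $S(\vc(\beta_i+\alpha))>S(\vc(\beta_i))$ is equivalent to $S(\tilde\vc(\beta_i+\alpha))>S(\tilde\vc(\beta_i))$, where $\tilde\vc(\gamma)=\vc(\gamma)/T(\gamma)$ is the normalized version with coordinates summing to $1$.

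Now compare $\tilde\vc(\beta_i)$ and $\tilde\vc(\beta_i+\alpha)$. Because $T$ is strictly increasing in $\gamma$, the coordinate at $i$ grows from $\tilde c_i(\beta_i)$ to $\tilde c_i(\beta_i+\alpha)$ by a positive amount, and every other coordinate $j$ shrinks from $c_j/T(\beta_i)$ to $c_j/T(\beta_i+\alpha)$ by an amount $\delta_j\geq0$, with $\sum_{j\neq i}\delta_j$ equal to the gain at $i$. I would then transform $\tilde\vc(\beta_i)$ into $\tilde\vc(\beta_i+\alpha)$ by iterating over the indices $j\neq i$: at each step transfer $\delta_j$ from coordinate $j$ to coordinate $i$. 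Since $i$ is strictly richest at the start (by the choice of $\beta_i$) and only becomes more dominant as mass is shovelled onto it, the inequality $\tilde c_i>\tilde c_j$ holds at the moment of each transfer. Each individual step is therefore the \emph{reverse} of a Robin Hood move covered by \Da, and hence strictly increases $S$; chaining these strict increases gives $S(\tilde\vc(\beta_i+\alpha))>S(\tilde\vc(\beta_i))$, which is the desired conclusion.

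The main obstacle is the legitimacy of the reverse Robin Hood step, which I would justify as follows: if $\vc'$ is obtained from $\vc''$ by moving $\delta>0$ from a poorer coordinate $j$ to a richer coordinate $i$ (with $\tilde c_i>\tilde c_j$ in $\vc''$), then applying \Da~to $\vc'$ with the transfer of the same amount $\delta$ back from $i$ to $j$ is valid because in $\vc'$ the gap $\tilde c_i'-\tilde c_j'$ equals $(\tilde c_i-\tilde c_j)+2\delta>2\delta$, so the Robin Hood admissibility bound $\delta<(\tilde c_i'-\tilde c_j')/2$ is satisfied; hence $S(\vc'')<S(\vc')$. Apart from this, the only bookkeeping point is to note that coordinates with $c_j=0$ contribute $\delta_j=0$ and can simply be skipped, and that the argument requires $N\geq 2$ (consistent with the statement of \De, which is vacuous or scaling-degenerate when $N=1$).
\end{proofs}
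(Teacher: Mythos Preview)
Your argument is correct and follows essentially the same route as the paper: combine scaling (\Db) with a chain of reverse Robin Hood transfers (\Da) that move mass from every other coordinate into the target one. The only cosmetic difference is the order of operations---the paper first performs the proportional inverse Robin Hood steps $c_j\mapsto c_j-\Delta c_j$ and then scales by $1/(1-\Delta)$, whereas you first normalize both vectors to the simplex and then observe that the difference decomposes into the same proportional transfers $\delta_j\propto c_j$; your treatment of the admissibility bound for each reverse step is in fact more careful than the paper's.
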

\begin{proof}
  Without loss of generality, we begin with the vector $\vc$ sorted in
  ascending order
\begin{equation}
  \nonumber \vc=\bmat{cccc} c_1& c_2&\cdots&c_N \emat
\end{equation}
with $c_{1}\leq c_{2} \leq\cdots\leq c_{N}$. We then perform a
series of inverse Robin Hood steps to get a vector $\vd$, that is, we
take from smaller coefficients and give to the largest coefficient
\begin{eqnarray*}
  d_i  &=& c_i - \Delta c_i  ~~~~~~~~ \forall i = 1,2,\ldots,N-1\\
 d_N &=& c_N + \Delta c_i
 \end{eqnarray*}
with condition $\Delta<1$. As these are inverse Robin Hood steps
(inverse \Da), they increase sparsity and result in the vector
{\footnotesize
\begin{equation}
 \nonumber \vd = \begin{array}{ccl} \left[ (c_1-\Delta c_1)\right.& (c_2-\Delta c_2)&\cdots\\
                             \cdots&{\footnotesize (c_{N-1}-\Delta c_{N-1})}&\left.{\tiny (\Delta c_1+\cdots+\Delta c_{N-1} +c_N)}\right]\end{array}.
\end{equation}}
Without affecting the sparsity we can then scale (\Db) $\vd$ by $\frac{1}{1-\Delta}$ to get
{\small \begin{eqnarray*}
 \ve &=& \begin{array}{ccl}\left[ c_1\right.& c_2&\cdots\\\cdots&c_{N-1}&\left.{\footnotesize\frac{1}{1-\Delta}( \Delta c_1+\Delta c_2+\ldots+\Delta c_{N-1} +c_N)}\right]\end{array}\\
 &=&\begin{array}{ccl}\left[ c_1\right.& c_2&\cdots\\\cdots&c_{N-1}&\left.\alpha + c_N\right],\end{array}
\end{eqnarray*}}
where
\begin{eqnarray*}
  \alpha &=& \frac{1}{1-\Delta}(\Delta c_1+\Delta c_2+\ldots+\Delta c_{N}).
\end{eqnarray*}
It is clear that
\begin{equation}
  \nonumber S(\ve) = S(\vd)>S(\vc),
\end{equation}
which is equivalent to \De~ with the given $\alpha$ and $\beta=0$.  If we wish to operate on $c_i$ (instead of $c_N$ as
above), $\beta$ can be chosen sufficiently large to make the desired coefficient the
largest, that is, we set
\[
\beta > c_N - c_i
\]
\end{proof}
\begin{thm}
{\em $D1~\&~D2~\&~D4\Rightarrow P2$, that is, if  \Da, \Db~ and \Dd~
 are satisfied, \Df~ is also satisfied.}\label{thm:2}
\end{thm}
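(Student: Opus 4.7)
The plan is to use cloning and scaling to recast the comparison $S(\vc\|0)$ versus $S(\vc)$ as a comparison of two vectors with matching length and total mass, and then read off the inequality from a sequence of strict Robin Hood moves. First, by \Dd~ cloned $N+1$ times, $S(\vc)=S(\vc\|\vc\|\cdots\|\vc)$ on a vector of length $N(N+1)$ containing $N+1$ copies of each $c_i$; scaling by $\tfrac{N}{N+1}$ via \Db~ then produces a vector $\vec{u}$ of length $N(N+1)$, total $NT$ (where $T=\sum_i c_i$), with multiplicity $N+1$ for each value $\tfrac{N c_i}{N+1}$, and $S(\vec{u})=S(\vc)$. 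In parallel, \Dd~ applied to $\vc\|0$ with $N$ copies and then permuted via \Aa~ gives a vector $\vec{v}$ of length $N(N+1)$, total $NT$, holding $N$ copies of each $c_i$ together with $N$ zeros, and $S(\vec{v})=S(\vc\|0)$.

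With $\vec{u}$ and $\vec{v}$ on equal length and total mass, it suffices to show $S(\vec{v})>S(\vec{u})$, which I would obtain by writing $\vec{u}$ as the endpoint of a finite chain of strict Robin Hood transfers starting at $\vec{v}$, so that transitivity of the strict inequality in \Da~ yields the result. The natural construction designates, for each class $i$, one of the $N$ zeros as its sink and transfers $\tfrac{c_i}{N+1}$ out of each of the $N$ copies of $c_i$ into that sink, leaving class $i$ with $N+1$ entries at $\tfrac{N c_i}{N+1}$ and matching $\vec{u}$ on that class.

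The main obstacle is that the final within-class transfer for each $i$ lies exactly on the boundary of \Da~ (the transfer amount equals half the current gap), because it would equalize two previously unequal entries. I would navigate this by stopping each within-class schedule a small amount $\delta$ short of equalization---a strictly valid Robin Hood step---leaving each class in the symmetric state $\tfrac{N c_i}{N+1}\pm\delta$, and then invoking a cross-class Robin Hood between classes $i,j$ with $c_i>c_j$: transferring $\delta$ from the $+\delta$ entry of class $i$ to the $-\delta$ entry of class $j$ stays strictly inside \Da~ and cancels both offsets at once. Iterating such cross-class cancellations, together with a cloning/limit argument to mop up any residual symmetric pair in the smallest class, completes the reduction to $\vec{u}$. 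The fully constant case $\vc=c\mathbf{1}_N$, in which no two classes differ, is dispatched separately by observing that \Db~ and \Dd~ pin $S$ to a single value on all positive constant vectors, so that \Df~ on constants reduces to the single inequality $S(\mathbf{1}_N\|0)>S(\mathbf{1}_M)$, which follows from a direct Robin Hood step within $\vc\|0$ combined with the cloning-and-scaling machinery above.
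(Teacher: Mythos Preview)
Your construction is essentially the paper's own argument, reorganised: the paper clones $\vc$ $N{+}1$ times, redistributes one copy into the other $N$ by ``inverse Robin Hood'' steps to reach $N$ copies of $\tfrac{N+1}{N}\vc$ together with $N$ zeros, then scales and unclones to $\vc\|0$. Your $\vec u$ and $\vec v$ are exactly the paper's $\vC$ (after scaling) and $\vE$, so at the structural level the two proofs coincide.

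Where you go further than the paper is in flagging the boundary issue in \Da. You are right to worry about it: the paper's very first within--coordinate transfer moves mass between two \emph{equal} entries, and reversing that step requires $\alpha<\tfrac{c_i-c_j}{2}$ with $c_i-c_j=2\alpha$, which is excluded by \Da. So the paper glosses over precisely the point you isolate.

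However, your proposed repair does not close the gap. After stopping each class $\delta$ short, the state has, for every $i$, one entry at $\tfrac{Nc_i}{N+1}+\delta$ and one at $\tfrac{Nc_i}{N+1}-\delta$. Your cross--class move (from class $i$'s $+\delta$ to class $j$'s $-\delta$ with $c_i>c_j$) is a valid strict \Da\ step, but it removes only one offset from each class. Chaining these down the ordered classes leaves class $1$ with a residual $-\delta$ entry and class $N$ with a residual $+\delta$ entry; the transfer that would finish the job moves mass from the smaller value $\tfrac{Nc_N}{N+1}+\delta$ to the larger $\tfrac{Nc_1}{N+1}-\delta$, which is an \emph{inverse} Robin Hood, not a forward one. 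The phrase ``cloning/limit argument to mop up'' does not supply the missing step. Your treatment of the constant case has the same difficulty: a single \Da\ step applied inside $\mathbf 1_N\|0$ produces a vector that is neither constant nor a clone of anything you can compare to $S(\mathbf 1_M)$, so the claimed reduction is circular.

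A cleaner rescue is to exploit the freedom in \emph{where} you perform the first transfer. Assuming the $c_i$ are not all equal (say $c_1>c_N$), do one strict inverse \Da\ across coordinates first---e.g.\ move a tiny $\epsilon$ from a $c_N$-entry of $\hat{\vc}$ to a $c_1$-entry of another copy. This breaks the within--coordinate ties for both coordinates $1$ and $N$, after which every subsequent within--coordinate transfer has strictly positive slack in \Da. Repeating this symmetry--breaking for each coordinate (using any available strictly smaller neighbour) keeps every step strictly inside \Da\ and reaches $\vec u$ exactly. Only the all--equal case needs a separate argument, and there too one must produce a genuine strict \Da\ chain rather than a single step.
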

\begin{proof}
We begin with vector $\vc$
\begin{equation}
  \nonumber \vc=\bmat{cccc} c_1& c_2&\ldots&c_N \emat.
\end{equation}
We then clone (\Dd) this $N+1$ times to get
\begin{equation}
\nonumber
 \begin{array}{cccc}
  \vC&=& \left[ \underbrace{\begin{array}{cccc} \vc& \vc&\ldots&\vc \end{array}}\right].& \\
&&N+1&
  \end{array}
\end{equation}
We then take one of the $\vc$ from $\vC$, which we shall refer to as
$\vec{\hat{c}}$ and by a series of inverse Robin Hood operations (\Da)
we distribute this $\vec{\hat{c}}$ in accordance with the size of each element
to form new vector $\vD$. That is to say, each $c_i$ of each $\vc$
(excluding $\vec{\hat{c}}$) becomes $c_i +\frac{c_i}{N}$ by $N$ consecutive inverse
Robin Hood operations which increase sparsity. The result is
{\small
\begin{equation}
  \nonumber \begin{array}{ccccc}
  \vD &=&\left[\underbrace{ \begin{array}{cccc}\vc +\frac{\vc}{N}& \vc +\frac{\vc}{N}&\cdots&\vc +\frac{\vc}{N} \end{array}}\right.&\left.\underbrace{\begin{array}{cccc}0&0&\cdots&0\end{array}}\right].& \\
&&$N$&$N$&
  \end{array}
\end{equation}}
We can then scale (\Db) $\vD$ by a factor of $\frac{N}{1+N}$ without
affecting the sparsity to get
\begin{equation}
  \nonumber \begin{array}{ccccc}
  \vE &=&\left[\underbrace{ \begin{array}{cccc}\vc& \vc&\cdots&\vc \end{array}}\right.&\left.\underbrace{\begin{array}{cccc}0&0&\cdots&0\end{array}}\right],& \\
&&$N$&$N$&
  \end{array}
\end{equation}
which by cloning (\Dd) we know is equivalent to
\begin{equation}
  \nonumber \vF = \bmat{cc}\vc &0\emat.
\end{equation}
In summation, we have shown that
\begin{equation}
 S(\vc) =  S(\vC)< S(\vD) =  S(\vE) = S(\vF),
\end{equation}
that is,
\begin{equation}
 S(\vc) < S(\vc\|0)
\end{equation}
which is also known as \Df.
\end{proof}
%___________________________________________________________________________%
\section{The Measures of Sparsity}
\label{sec:meas}
In this section we discuss a number of popular sparsity
measures. These measures are used to calculate a number which
describes the sparsity of a vector $ \vc=\bmat{cccc} c_1&
c_2&\ldots&c_N \emat$. The measures' monikers and their definitions
are listed in Table~\ref{table:measurerefs}.  Some measures in
Table~\ref{table:measurerefs} have been manipulated (in general
negated) to ensure that the an increase in sparsity results in a
(positive) increase in the sparse measure.
%\cite{karvanen03measuring,rao99affine}

%___________________________________________________________________________%
%\hspace*{-.5 in}
\begin{table}[ht]
\caption{Commonly used sparsity measures modified to become more positive for increasing sparsity.}
\begin{center}
\begin{tabular}{|c|l|}
\hline
Measure &  Definition \\
\hline
\hline
%measure
$\ell^0$ & $\#\left\{j,c_j = 0\right\}$ \\
\hline
$\ell^0_\epsilon$ & $\#\left\{j,c_j\leq \epsilon\right\}$ \\
\hline
$-{\ell^1}$ & $-{\left(\sum_j c_j \right)}$ \\
\hline
$-{\ell^p}$ & $-{\left(\sum_j c_j^p\right)^{1/p},\,\,\,0<p<1}$  \\
\hline
$\frac{\ell^2}{\ell^1}$ &$\frac{\sqrt{\sum_j c_j^2}}{\sum_j c_j}$ \\
\hline
$-{\tanh_{a,b}}$  & $-{\sum_j\tanh\left(\left(ac_j\right)^b\right)}$  \\
\hline
  $-\log$ & $-\sum_j\log\left(1+c_j^2\right)$ \\
\hline
  $\kappa_4$ & $\frac{\sum_j c_j^4}{\left(\sum_j c_j^2\right)^2}$ \\
\hline
   ~   &    $1-\min_{i=1,2,\ldots,N-\lceil \theta N\rceil+1}\frac{c_{(i+\lceil \theta N\rceil -1)}-c_{(i)}}{c_{(N)} -c_{(1)}|} $        \\
      $u_\theta$        &  $\text{ s.t. } \lceil \theta N \rceil \neq N$  $\text{ for ordered data, }$     \\
    ~             & $c_{(1)}\leq c_{(2)} \leq\cdots\leq c_{(N)}$ \\
\hline
  $-\ell^p_{-}$ & $-\sum_{j,c_j\neq 0}c_j^p,\,\,\, p<0$ \\
\hline
  $H_G$  & $-\sum_j\log  c_j ^2$ \\
\hline
 $H_S$ & $-\sum_j \tilde{c_j}\log  \tilde{c_j} ^2 \text{ where }
\tilde{c_j}=\frac{c_j^2}{\|\vc\|_2^2}$ \\
\hline
 $H_S^\prime$ & $-\sum_j  {c_j}\log   {c_j} ^2$ \\
\hline
 Hoyer & $(\sqrt{N}-\frac{\sum_j c_j}{\sqrt{\sum_j c_j^2}})(\sqrt{N}-1)^{-1}$ \\
\hline
  ~   &   $1 - 2\sum_{k=1}^N \frac{c_{(k)}}{\|\vc\|_1}\left(\frac{N-k+\frac{1}{2}}{N}  \right) $\\
      Gini        &    $\text{ for ordered data, }$     \\
    ~             & $c_{(1)}\leq c_{(2)} \leq\cdots\leq c_{(N)}$ \\
\hline
\end{tabular}
\label{table:measurerefs}
\end{center}
\end{table}

In \cite{karvanen03measuring} the $\ell^0$, $\ell_{\epsilon}^0$,
$\ell^1$, $\ell^p$, $\tanh_{a,b}$, $\log$ and $\kappa_4$ were
compared.  The most commonly used and studied sparsity measures are
the $\ell^p$ norm-like measures,
\[
\|\vc\|_p = \left( \sum_j c_j^p \right)^{1/p} ~~~\mbox{for~} 0\leq p \leq 1.
\]
%% When $0<p<1$, $\ell^p$ is a quasi-norm, as it does not satisfy the
%% triangle inequality.  For $p=0$, $\ell^p$ is not linear with respect
%% to scalar multiplication and hence is not a norm or even a
%% quasi-norm. We shall refer to this as the $\ell^0$ measure. We refer
%% to all of the measures as measures or costs to avoid confusion with
%% norm, quasi-norm,and so on.
The $\ell^0$ measure simply calculates
the number of non-zero coefficients in $\vc$,
\[
\|\vc\|_0 = \# \{ c_j\neq 0, j = 1, \ldots,N\}.
\]
 The $\ell^0$ measure is the traditional sparsity measure in many
mathematical settings. However, it is unsuited to most practical
scenarios, as an infinitesimally small value is treated the same as a
large value. This means that the derivative of the measure contains no
information and as such the $\ell^0$ cannot be used in optimization
problems. Exhaustive search is the only method of finding the sparsest
solution when using the $\ell^0$ measure and approximations are
usually used \cite{fuchs05recovery, donoho06stable}.  The presence of
noise makes the $\ell^0$ measure completely inappropriate. In noisy
settings, the $\ell^0$ measure is sometimes modified to
$\ell^0_{\epsilon}$ where we are interested in the number of
coefficients, $c_j$ that are greater than a threshold $\epsilon$
\cite{rath08sparse}.  Clearly, the value of $\epsilon$ is crucial for
$\ell^0_\epsilon$ to be meaningful. This is undesirable.  As
optimization using $\ell^0_{\epsilon}$ is difficult because the
gradient yields no information, $\ell^{p}$ with $0<p<1$ is often used
in its place, \cite{xu07norm}.  The $\ell^1$ measure, that is,
$\ell^{p}$ with $p=1$, approximates the $\ell^0$ measure and is easily
calculated. Under this measure, large coefficients are considered more
important than small coefficients unlike the $\ell^0$ measure. In most
settings, the $\ell^1$ solution can be used to find the support of the
$\ell^0$ solution \cite{ balan05equivalence}. The $\ell^1$ measure is
used in many optimization problems, as linear programming offers a
fast, computationally efficient solution \cite{candes05decoding,
donoho08fast}.

In \cite{karvanen03measuring} several alternative measures of sparsity
are noted which approximate the $\ell^0$ measure but emphasize
different properties. $\tanh_{a,b}$ is sometimes used in place of
$\ell^{p}$, $0<p<1$, as it is limited to the range $(0, 1)$ and
better models $\ell^0$ and $\ell^0_{\epsilon}$ in this respect. A
representation is more sparse if it has one large component, rather
than dividing up the large component into two smaller
ones. $\tanh_{a,b}$ and $\ell^p$ preserve this.  In
\cite{rickard04gini} it is shown that the $\log$ measure enforces
sparsity outside some range, but for distributions with low energy
coefficients the opposite is achieved by effectively spreading the
energy of the small components. $\kappa_4$ is the kurtosis which
measures the peakedness of a distribution
\cite{olshausen04sparse}. $u_{\theta}$ measures the smallest range
which contains a certain percentage of the data. This is achieved by
sorting the data and determining the minimum difference between the
largest and smallest sample in a range containing the specified
percentage ($\theta$) of data points as a fraction of the total range
of the data. The reason that a continuous parameter $\theta$ is used in the model is to maintain compatibility with pre-existing literature.

For measuring `diversity', \cite{rao99affine,kreutz-delgado98measures}
use some different measures. Three of these are entropy measures: the
Shannon entropy diversity measure $H_S$, a modified version of the
Shannon entropy diversity measure $H_{S}\prime $ and the Gaussian
entropy diversity measure $H_G$. They also extend the $\ell^p$ measure
to negative exponents, that is, $-1<p<0$. We call this measure
$\ell_{-}^p $ to avoid confusion.

Some of the measures can be normalized to satisfy more of the
constraints, although in general for the measures, forcing
satisfaction of one constraint means breaking another. The exception
to this is the Hoyer measure \cite{hoyer04nonnegative} which is a
normalized version of the $\frac{\ell^2}{\ell^1}$ measure as is
obvious from its definition,
$(\sqrt{N}-\frac{\ell^1}{\ell^2})(\sqrt{N}-1)^{-1}$.
\begin{figure*}[htpb]
\begin{center}
\includegraphics[width=15cm]{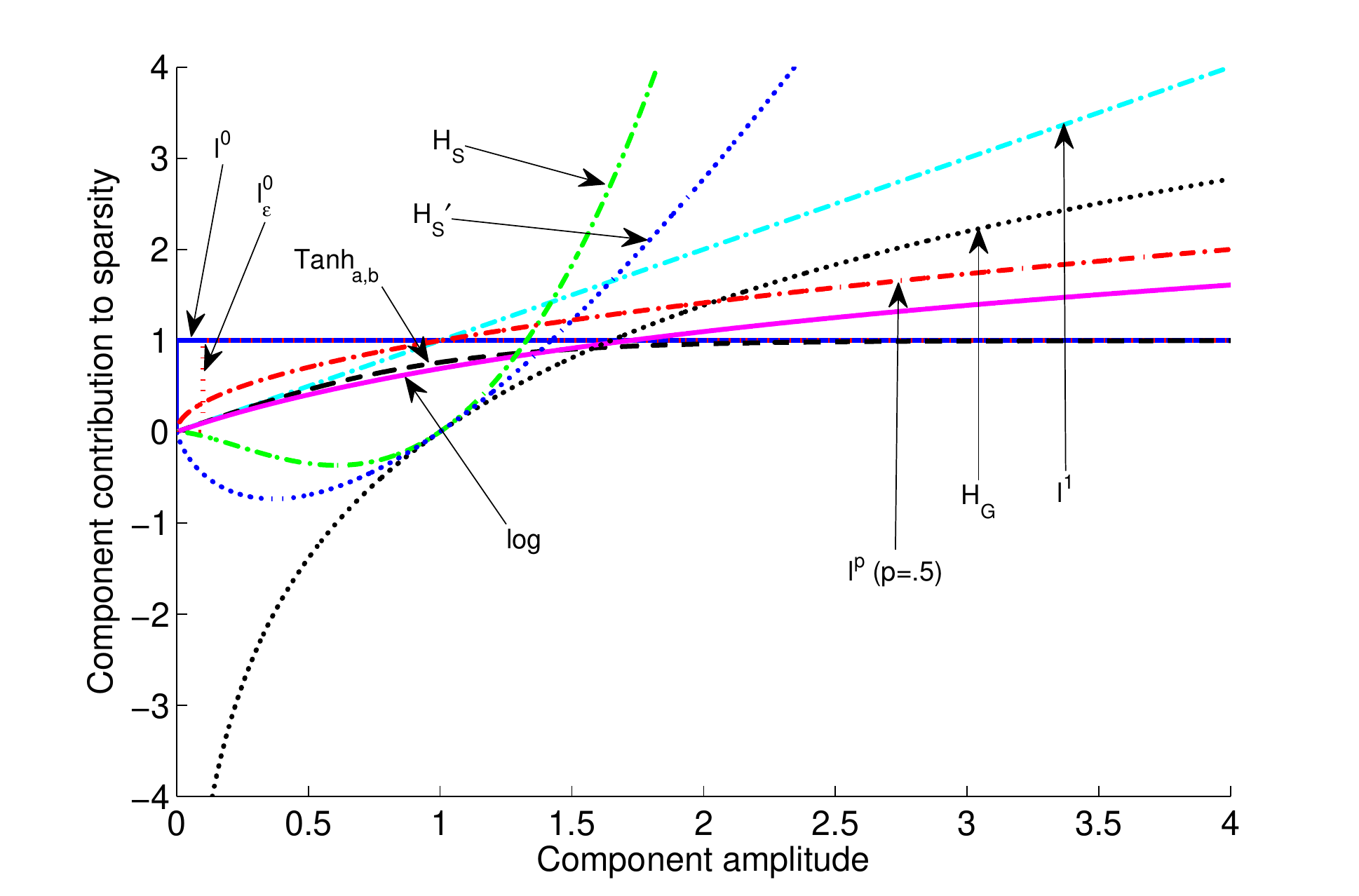}
\caption{Component contribution to sparsity measure vs component amplitude.}\label{fig:comp}
\end{center}
\end{figure*}
In Fig.~\ref{fig:comp} we can get an insight into how component
magnitude affect certain measures. In general, the smaller the
magnitude the less it impinges on the sparsity of the measure. We can
see how many of the measures approximate the $\ell^0$ measure but as
they are not flat like the $\ell^0$ measure, they have a gradient that
can be used in optimization problems. The $\ell^0$, $\ell^0_\epsilon$,
$\tanh$, $\log$, $\ell^p (0<p<1)$, $\ell^1$ measures all prefer
components to be zero or near zero. Oddly, the Shannon entropy based
measures $H_S$ and $H_S\prime$ prefer components to be at a non-zero
value less than 1.

\subsection{The Gini Index}
\label{sec:gini} Having perused the measures thus far, some
desirable aspects of a sparsity measure emerge. Like
 $\frac{\ell^2}{\ell^1}$ and Hoyer, a measure should be some kind of
 weighted sum of the coefficients. This means that unlike $\ell^0$
 when a coefficient changes slightly we have a weighted effect on the
 corresponding change in the value of the sparsity measure based on
 how `important' that particular coefficient is to the overall
 sparsity. Large coefficients should have a smaller weight than the
 small coefficients so that they do not overwhelm them to the point
 that smaller coefficients have a negligible (or no) effect on the
 measure of sparsity. If even one of the smaller coefficients is
 changed, that change should be reflected by a change in the value of
 the sparsity measure. A weighted sum achieves this.  In other words,
 we have a gradient which we can use in optimization problems. Another
 important aspect of a sparsity measure is normalization. A set of
 coefficients should not be rated more or less sparse simply because
 it has more coefficients than another set, nor should it be deemed
 more or less sparse simply due to having louder or quieter
 coefficients. In short, there should be two forms of
 normalization. Firstly, the measure of sparsity should be dependent
 on the relative values of coefficients as a fraction of the total
 value. Secondly, the measure of sparsity should be independent of the
 number of coefficients so that sets of different size can be
 compared. Lastly, it would be useful if the measure was 0 for the
 least sparse case and 1 for the most sparse case. All these qualities
 are embodied by the Gini Index, which we now define.

Given a vector, $\vc=\bmat{cccc} c_1& c_2&c_3&\cdots \emat$,
we order from smallest to largest , $c_{(1)}\leq
c_{(2)}\leq\cdots\leq c_{(N)}$ where $(1),(2),\ldots,(N)$ are the
new indices after the sorting operation. The Gini Index is given by
\begin{equation}
S(\vc) = 1 - 2\sum_{k=1}^N \frac{c_{(k)}}{\|\vc\|_1}\left(\frac{N-k+\frac{1}{2}}{N}  \right).
\end{equation}

The Gini Index also has an interesting graphical interpretation which
we see in Fig.~\ref{fig:lor}. If percentage of coefficients versus
percentage of total coefficient value is plotted for the sorted
coefficients we can define the Gini Index as twice the area between
this line and the 45$\degree$ line.  The 45$\degree$ line represents
the least sparse distribution, that with all the coefficients being
equal.

If we have a distribution from which we draw coefficients and measure
the sparsity of the coefficients which we have drawn, as we draw more
and more coefficients we would expect our measure of sparsity to
converge.  The Gini Index meets these expectations.  The Gini Index of
a distribution with probability density function $f(x)$ (which
satisfies $f(x)=0,x<0$) and cumulative distribution function $F(x)$ is
given by
\begin{equation}
  \nonumber G = 1-2\int_{0}^{1} \frac{\int_{0}^{x}tf(t) dt}{\int_{0}^{\infty}tf(t) dt}dF(x).
\end{equation}

As a side note, the Gini Index was originally proposed in economics as
measure of the inequality of wealth
\cite{lorenz05methods,gini21measurement, dalton20measurement,
arnold86majorization} and is still studied in relation to wealth
distribution as well as other
areas. \cite{shalit05mean,aaberge01axiomatic,aaberge08erratum,milanovic97simple}
`Inequality in wealth' in signal processing language is `efficiency of
representation' or `sparsity'. The utility of the Gini Index as a
measure of sparsity has been demonstrated in
\cite{rickard04gini,rickard06sparse,hurley05parameterized,hurley07maximizing}.

\begin{figure}[htpb]
\begin{center}
\begin{tabular}{c}
\includegraphics[width=5cm]{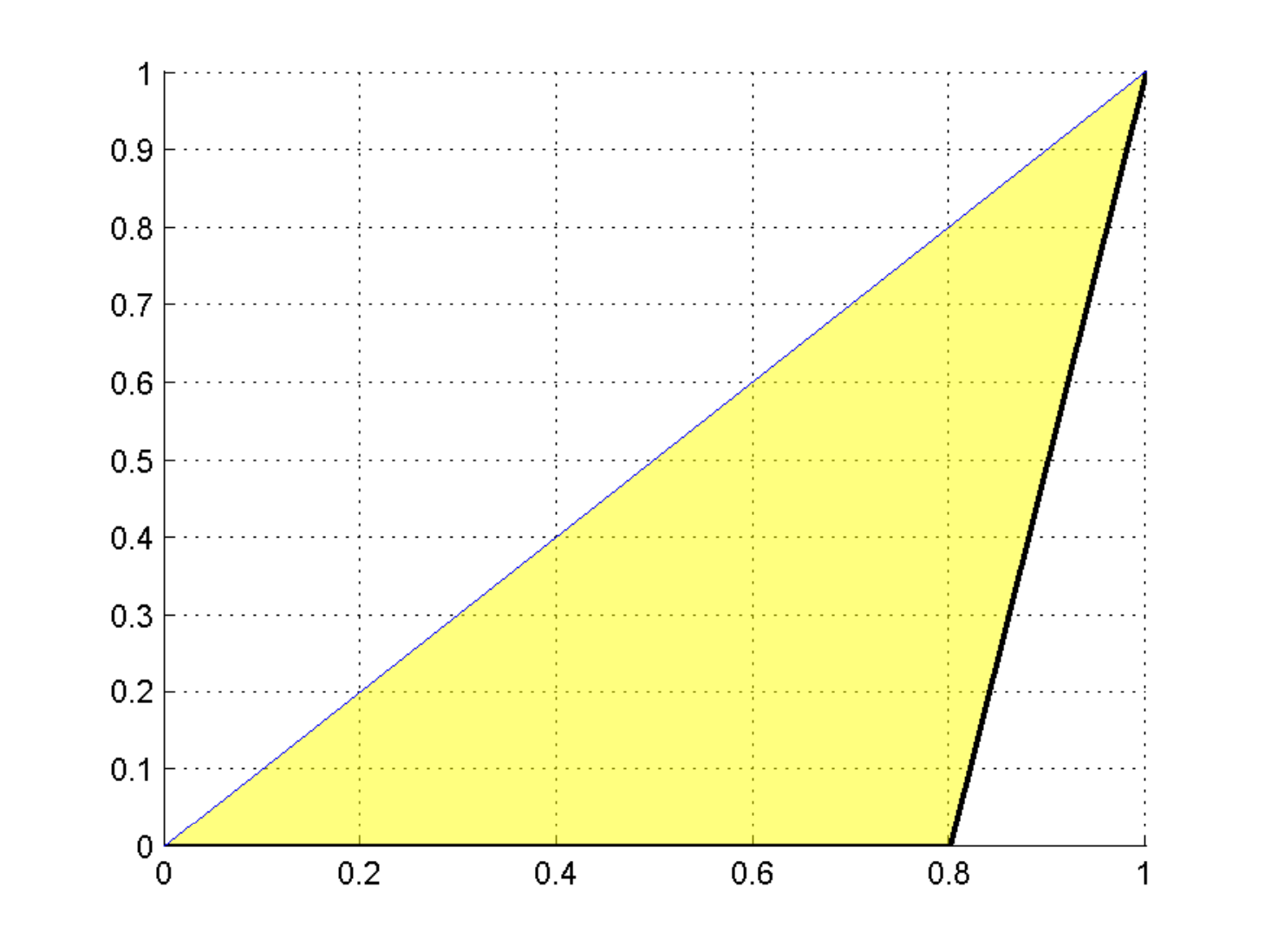}\\
\includegraphics[width=5cm]{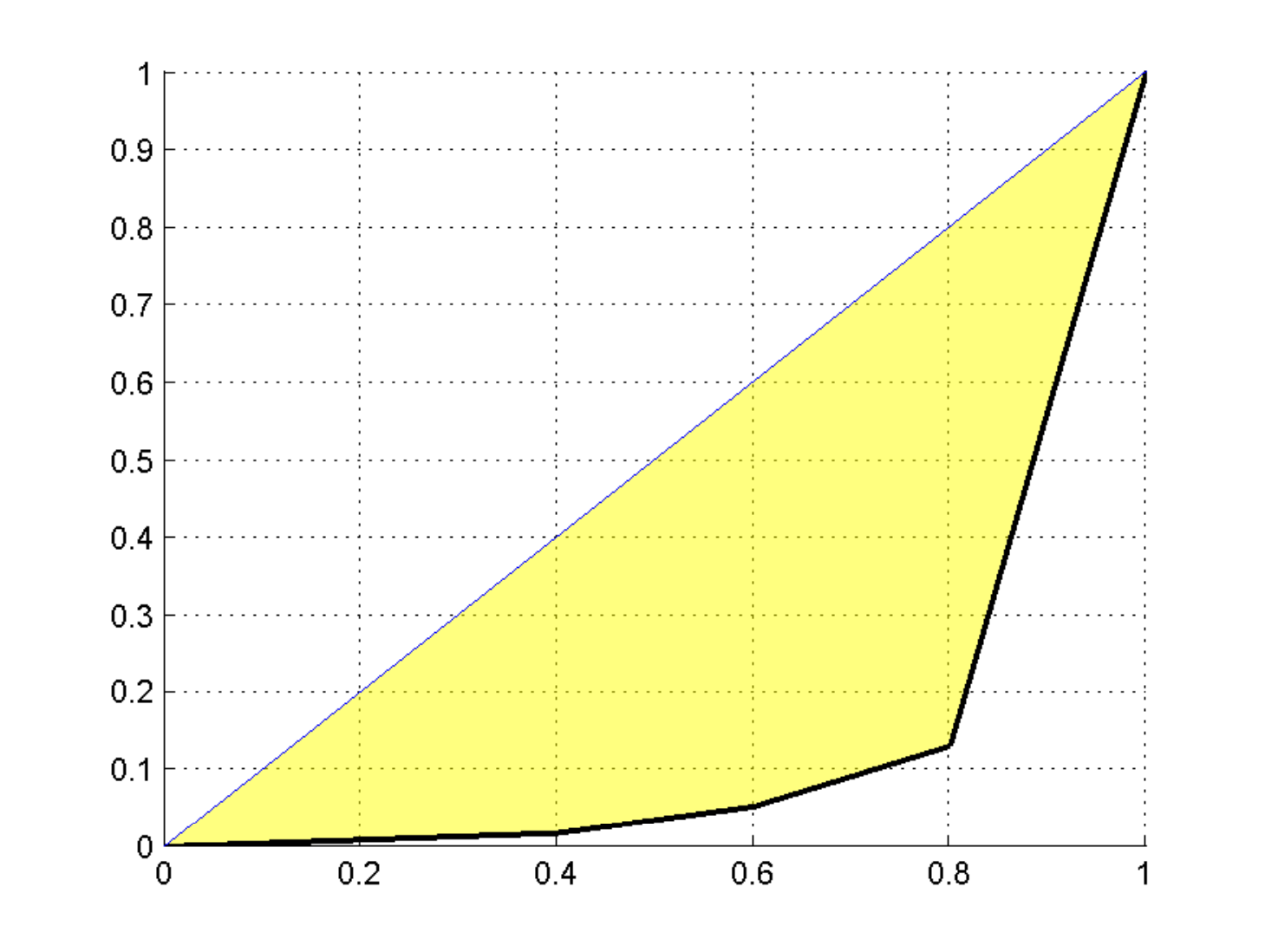}
\end{tabular}\caption{Percentage of coefficients versus percentage of total coefficient value is plotted for the sorted coefficients for  [0 0 0 0 1] (top) and [1 1 2 3 10] (bottom). The Gini Index is twice the shaded area.}\label{fig:lor}
\end{center}
\end{figure}

\section{Comparison of Sparsity Measures}
\label{sec:comp}
In this section we present the main result of the paper, the
comparison of the measures using the criteria. Many of the measures
fail for simple test cases which prove non-compliance. For example,
$[0,1,3,5]$ is more sparse than $[0,2,3,4]$ because a Robin Hood
operation maps one sequence to the other. Six of the measures do not
correctly handle this case. Others fail on similar examples. Seven of
the measures, however, satisfy \Da. An example for each sparse
criterion is given in Table~\ref{table:typcex} along with the desired
outcome when the sparsity of the examples are measured with sparsity
measure $S(\cdot)$.
\begin{table*}[ht]
\caption{Most common counter-example for a given property with measure
of sparsity and desired outcome with sparsity measure $S(\cdot)$.}
\begin{center}
\rowcolors[]{2}{blue!1}{blue!5}
\begin{tabular}{crclrcl}
Property&\multicolumn{3}{c}{Most common counter-example} &\multicolumn{3}{c}{Desired outcome}\\
\hline
\Da&$[0,1,3,5]$&vs&$[0,2,3,4]$&$S([0,1,3,5])$& $>$& $S([0,2,3,4])$ \\
\Db&$[0,1,3,5] $&vs&$[0,2,6,10]$&$S([0,1,3,5])$& $=$& $ S([0,2,6,10])$ \\
\Dc&$[1,3,5]$&vs&$[1.5,3.5,5.5]$&$S([1,3,5])$& $<$& $S([1.5,3.5,5.5])$ \\
\Dd&$[0,1,3,5]$&vs&$[0,0,1,1,3,5]$&$S([0,1,3,5])$& $=$& $S([0,0,1,1,3,5])$ \\
\De&$[0,1,3,5]$&vs&$[0,1,3,20]$&$S([0,1,3,5])$& $<$& $S([0,1,3,20])$ \\
\Df&$[0,1,3,5]$&vs&$[0,0,0,1,3,5]$&$S([0,1,3,5])$& $<$& $S([0,0,0,1,3,5])$
\end{tabular}

\label{table:typcex}
\end{center}
\end{table*}
Table~\ref{table:checks} details which of the six sparse criteria hold
for each of the fifteen measures. The information is based on proofs
and counter-examples which are contained in their entirety in
Appendices~\ref{sec:cex}~and~\ref{sec:proofs}.  There are essentially
two types of proof, Type~A and Type~B. Type~A is the standard form of
proof which uses inequalities, an example of which is the following:
\begin{thm}
  $\frac{\ell^2}{\ell^1}$   satisfies \[S(\bmat{cccccc}c_1& \cdots& c_i-\alpha&
\cdots& c_j+\alpha& \cdots\emat ) < S(\vc),\] for all $\alpha, c_i, c_j $
such that $c_i>c_j$ and $0<\alpha < \frac{c_i-c_j}{2}$.
\end{thm}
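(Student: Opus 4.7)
The plan is to exploit the fact that the $\ell^1$ norm in the denominator is completely invariant under any Robin Hood transfer, so that the whole argument reduces to showing that the $\ell^2$ numerator strictly decreases. Since a Robin Hood step replaces the pair $(c_i, c_j)$ by $(c_i - \alpha, c_j + \alpha)$ and leaves every other coordinate alone, the change in total mass is $(c_i - \alpha) + (c_j + \alpha) - c_i - c_j = 0$, so $\sum_k c_k$ is unchanged.

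Next I would compute the change in $\sum_k c_k^2$ directly. Only two summands move, so
\begin{equation*}
\Delta = (c_i - \alpha)^2 + (c_j + \alpha)^2 - c_i^2 - c_j^2 = 2\alpha^2 - 2\alpha(c_i - c_j) = 2\alpha\bigl(\alpha - (c_i - c_j)\bigr).
\end{equation*}
The hypothesis $0 < \alpha < (c_i - c_j)/2$ gives $\alpha > 0$ and $\alpha - (c_i - c_j) < -(c_i-c_j)/2 < 0$, so $\Delta < 0$. Hence $\sqrt{\sum_k c_k'^2} < \sqrt{\sum_k c_k^2}$, and dividing both sides by the common, positive, unchanged $\ell^1$ norm yields $S(\vc') < S(\vc)$, which is exactly \Da.

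There really is no serious obstacle here; the whole argument is a one-line algebraic identity plus a sign check. The only thing one needs to be slightly careful about is to record explicitly that the $\ell^1$ denominator is strictly positive (which follows from $c_i > c_j \geq 0$, so that the vector is not the all-zero vector) so that dividing preserves the inequality and $S$ is well-defined both before and after the transfer.
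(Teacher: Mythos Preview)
Your proposal is correct and follows essentially the same approach as the paper: both observe that the $\ell^1$ denominator is unchanged by the Robin Hood transfer, reduce to showing $(c_i-\alpha)^2+(c_j+\alpha)^2<c_i^2+c_j^2$, expand the squares, and finish with the constraint $0<\alpha<\tfrac{c_i-c_j}{2}$. Your factorization $\Delta=2\alpha(\alpha-(c_i-c_j))$ is a slightly cleaner way to read off the sign than the paper's final line $c_j-c_i+\alpha<0$, but the content is identical.
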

\begin{proof}
\label{proof:A}
As $\frac{\ell^2}{\ell^1} = \frac{\sqrt{\sum_j c_j^2}}{\sum_j
c_j}$ we can restate the above as
\[
\frac{\sqrt{\sum_{k \neq i,j}c_k^2 +(c_i -\alpha)^2  + (c_j +\alpha)^2 }}{\sum_k c_k+\alpha -\alpha} < \frac{\sqrt{\sum_k c_k^2} }{\sum_k c_k}.
\]
This simplifies to
%% \begin{eqnarray*}
%% && \sum_{k \neq i,j} c_k^2 +(c_i -\alpha)^2  + (c_j +\alpha)^2 < \sum_k c_k^2\\
%% && (c_i -\alpha)^2  + (c_j +\alpha)^2 < c_i^2 + c_j^2\\
%% \end{eqnarray*}
\begin{equation}
  \nonumber (c_i -\alpha)^2  + (c_j +\alpha)^2 < c_i^2 + c_j^2.
\end{equation}
Expand this to get
\begin{eqnarray*}
c_i^2 -2c_i \alpha + \alpha^2  + c_j^2 +2c_j \alpha + \alpha^2 &<& c_i^2 + c_j^2\\
c_j - c_i + \alpha &<&0,
\end{eqnarray*}
which we know is true as $0<\alpha < \frac{c_i-c_j}{2}$.
\end{proof}

A type~B proof on the other hand uses
derivatives, for example:
\begin{thm}
 $-{\ell^p}$ satisfies  \[S(\bmat{cccccc}c_1& \cdots& c_i-\alpha&
\cdots& c_j+\alpha& \cdots\emat ) < S(\vc),\] for all $\alpha, c_i, c_j $
such that $c_i>c_j$ and $0<\alpha < \frac{c_i-c_j}{2}$.
\end{thm}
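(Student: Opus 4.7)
The plan is to reduce the $-\ell^p$ Robin Hood inequality to a one-variable monotonicity claim, then use calculus. First I would write the measure as $S(\vc) = -\bigl(\sum_k c_k^p\bigr)^{1/p}$ and substitute the modified vector. Because $1/p > 0$, the map $t \mapsto t^{1/p}$ is strictly increasing on $(0,\infty)$, and since the prefactor is $-1$, the inequality $S(\text{modified}) < S(\vc)$ is equivalent to strictly increasing $\sum_k c_k^p$. All terms $c_k^p$ with $k \neq i,j$ are unchanged, so it suffices to show
\begin{equation}
\nonumber
(c_i - \alpha)^p + (c_j + \alpha)^p > c_i^p + c_j^p
\end{equation}
for $0 < \alpha < (c_i - c_j)/2$.

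Next I would introduce $f(\alpha) := (c_i - \alpha)^p + (c_j + \alpha)^p$ on the open interval $\bigl(0,(c_i-c_j)/2\bigr)$ and differentiate:
\begin{equation}
\nonumber
f'(\alpha) = p\bigl[(c_j + \alpha)^{p-1} - (c_i - \alpha)^{p-1}\bigr].
\end{equation}
The constraint $\alpha < (c_i-c_j)/2$ rearranges to $c_j + \alpha < c_i - \alpha$, so both bases are positive and the ``receiving'' base is strictly smaller than the ``giving'' base. Since $0 < p < 1$ gives $p - 1 < 0$, the function $x \mapsto x^{p-1}$ is strictly decreasing on $(0,\infty)$, hence $(c_j+\alpha)^{p-1} > (c_i-\alpha)^{p-1}$ and $f'(\alpha) > 0$.

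Strict monotonicity of $f$ on the closed interval $[0,(c_i-c_j)/2]$ (continuous at the endpoints, positive derivative on the interior) then yields $f(\alpha) > f(0) = c_i^p + c_j^p$ for every $\alpha$ in the allowed range, which is exactly the reduced inequality. Chaining back through the $(\cdot)^{1/p}$ step and the sign flip completes the proof. The only delicate point is verifying the base ordering $c_j + \alpha < c_i - \alpha$ that makes the sign of $f'$ come out positive once the decreasing nature of $x^{p-1}$ for $0<p<1$ is invoked; everything else is bookkeeping.
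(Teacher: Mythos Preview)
Your proposal is correct and follows essentially the same route as the paper: both differentiate with respect to $\alpha$ and reduce to the inequality $(c_j+\alpha)^{p-1} > (c_i-\alpha)^{p-1}$, which is handled via the constraint $\alpha < (c_i-c_j)/2$ together with the fact that $x\mapsto x^{p-1}$ is strictly decreasing for $0<p<1$. The only cosmetic difference is that you first strip off the outer $(\cdot)^{1/p}$ by monotonicity before differentiating, whereas the paper differentiates the full expression $-\bigl(\sum c_k^p\bigr)^{1/p}$ directly; your reduction is slightly tidier but the substance is identical.
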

\begin{proof}
\label{proof:B}
\[
-{\ell^p} = -{\left(\sum_k c_k^p\right)^{1/p},\,\,\,0<p<1}.
\]
We wish to show that the following  holds true  for all $\alpha , c_i, c_j $ such that $c_i>c_j$ and $0<\alpha < \frac{c_i-c_j}{2}$
\[
\frac{\partial}{\partial \alpha}\left[-{\left(\sum_{n \neq i,j} c_n^p  +(c_i-\alpha)^p + (c_j+\alpha)^p \right)^{1/p}}\right] <0.
\]
Expand this to get
{\small
\begin{eqnarray*}
&   -\frac{1}{p} \left(\sum_{k\neq i,j} c_k^p +(c_i-\alpha)^{p}+(c_j+\alpha)^{p}\right)^{\frac{1}{p}-1}
\left(-p(c_i-\alpha)^{p-1}\right.&\\
&\left. +p(c_j+\alpha)^{p-1}\right)<0.&
\end{eqnarray*}}
Which holds true if
\begin{equation}
  \nonumber (c_j+\alpha)^{p-1}-(c_i-\alpha)^{p-1}>0.
\end{equation}
As $p-1<0$ we can rewrite the above as
\begin{eqnarray*}
\frac{1}{ (c_j+\alpha)^{1-p}}-\frac{1}{(c_i-\alpha)^{1-p}}&>&0\\
\frac{1}{ (c_j+\alpha)}&>&\frac{1}{(c_i-\alpha)}\\
c_i-\alpha&>&c_j+\alpha\\
\frac{c_i-c_j}{2}&>&\alpha,
\end{eqnarray*}
which is necessarily true as it is one of the constraints upon $\alpha$.
\end{proof}
From Table~\ref{table:checks} we can see that \Dc~({\em
 Rising Tide}) is satisfied by most measures. This shows that relative
 size of coefficients is of the utmost importance when desiring
 sparsity. As previously mentioned, most measures do not satisfy
 \Dd~({\em Cloning}). Each of the other criteria is satisfied
 by a varying number of the fifteen measures of sparsity. This
 demonstrates the variety of attributes to which measures of sparsity
 attach importance. $\kappa_4$ and the Hoyer measure satisfy most of the
 criteria. The Gini Index alone satisfies all six criteria.
%___________________________________________________________________________%
{\large
\begin{table}[ht]
\caption{Comparison of different sparsity measures using criteria defined in  Sec.~\ref{sec:crit}}
\begin{center}
%%\rowcolors[]{2}{blue!1}{blue!5}
\begin{tabular}{|c||c|c|c|c|c|c|}
\hline
Measure &  {\bf \Da}&
{\bf \Db}&
{\bf \Dc}&
{\bf \Dd}&
{\bf \De}&
{\bf \Df}\\
\hline
\hline
%measure                  % D1         %D2          %D3          %D4          %P1          %P2
$\ell^0$
                      &            & \checkmark     &              &            &            & \checkmark\\
\hline
$\ell^0_\epsilon$
                      &            &      &              &            &            & \checkmark\\
\hline
$-{\ell^1}$
                      &            &                &  \checkmark &            &            &           \\
\hline
$-{\ell^p}$
                     & \checkmark  &                &  \checkmark &            &            &           \\
\hline
$\frac{\ell^2}{\ell^1}$
                     & \checkmark  & \checkmark     &             &            &\checkmark  &           \\
\hline
$-{\tanh_{a,b}}$
                      &\checkmark  &                &  \checkmark            &            &            &           \\
\hline
  $-\log$
                      &            &                &  \checkmark &            &            &           \\
\hline
  $\kappa_4$
                      &            & \checkmark     &  \checkmark &            &  \checkmark &           \\
\hline
  $u_\theta$
                      &            &     \checkmark           &             &\checkmark  &  \checkmark   &\\
\hline
  $-\ell^p_{-}$
                      &  &                &   &            &  \checkmark     &           \\
\hline
  $H_G$
                      & \checkmark &                &  \checkmark &            &            &        \\
\hline
 $H_S$
                      &            &                &   &            &            &           \\
\hline
$H_S^\prime$
                      &            &                &   &            &            &    \\
\hline
 Hoyer
                      &\checkmark  & \checkmark     & \checkmark  &            & \checkmark & \checkmark \\
\hline
Gini
                      & \checkmark & \checkmark     &  \checkmark & \checkmark & \checkmark & \checkmark\\
\hline

\end{tabular}
\label{table:checks}
\end{center}
\end{table}}

%___________________________________________________________________________%
\subsection{Numerical Sparse Analysis}
\label{sec:ana}
In this section we present the results of using the fifteen sparse
measures to measure the sparsity of data drawn from a set of
parameterized distributions. We select data sets and distributions for
which we can change the `sparsity' by altering a parameter. By
applying the fifteen measures to data drawn from these distributions
as a function of the parameter, we can visualize the criteria. The
examples are based on the premise that all coefficients being equal is
the least sparse scenario and all coefficients being zero except one
is the most sparse scenario.

In the first experiment we draw a variable number of coefficients from
a probability distribution and measure their sparsity. We expect sets
of coefficients from the same distribution to have a similar
sparsity. As we increase the number of coefficients we expect the
measure of sparsity to converge. In this experiment we examine the
sparsity of sets of coefficients from a Poisson distribution
(Fig.~\ref{fig:poissplot})
\begin{figure*}[htpb]
\begin{center}
\includegraphics[width=15cm, height = 9cm]{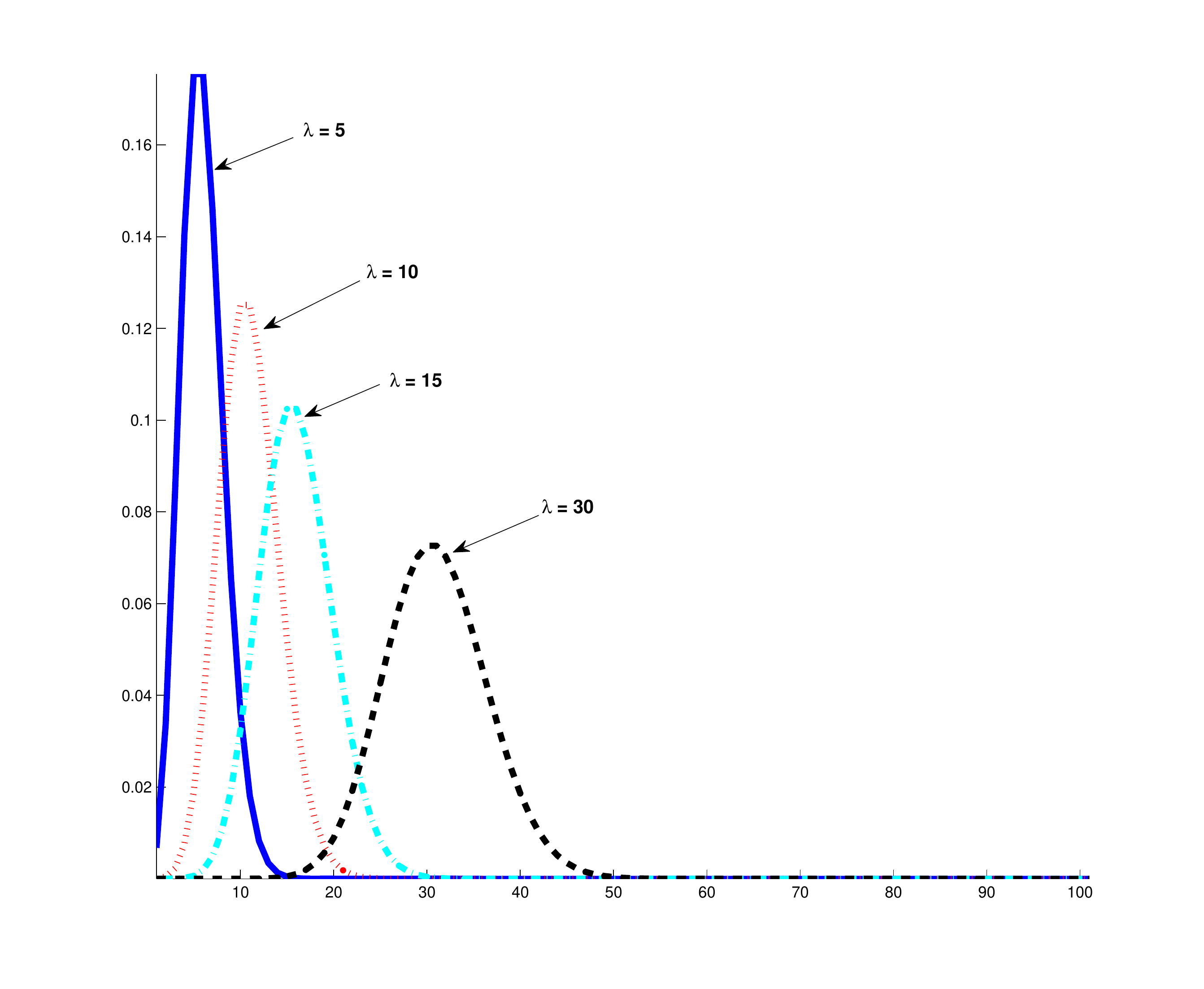}
\caption{Sample Poisson distribution probability density functions for $\lambda = 5,10,15,30$. We expect the distributions with a `narrower' peak (small $\lambda$) to have a higher sparsity than those with a `wider' peak (large $\lambda$)}
\label{fig:poissplot}
\end{center}
\end{figure*}
with parameter $\lambda = 5$ as a function of set size.  From the
normalized version of the sparsity plot in Fig.~\ref{fig:d4alt} we can
see that three measures converge. They are $\kappa_4$, the Hoyer measure
and the Gini Index. As this is similar in nature to \Dd~we expect the
Gini Index to converge. The convergence of Hoyer measure is
unsurprising as this measure almost satisfies \Dd~ especially for
large $N$.  The results are also normalized for clearer visualization
in that they are modified so that the sparsity falls between 0 and 1.
\begin{figure*}[htpb]
\begin{center}
%% \begin{tabular}{cc}
%% \includegraphics[width=4cm]{D4_alt}&\includegraphics[width=2cm,height=4.5cm]{legend}\\
%% \multicolumn{2}{l}
{\includegraphics[width=15cm]{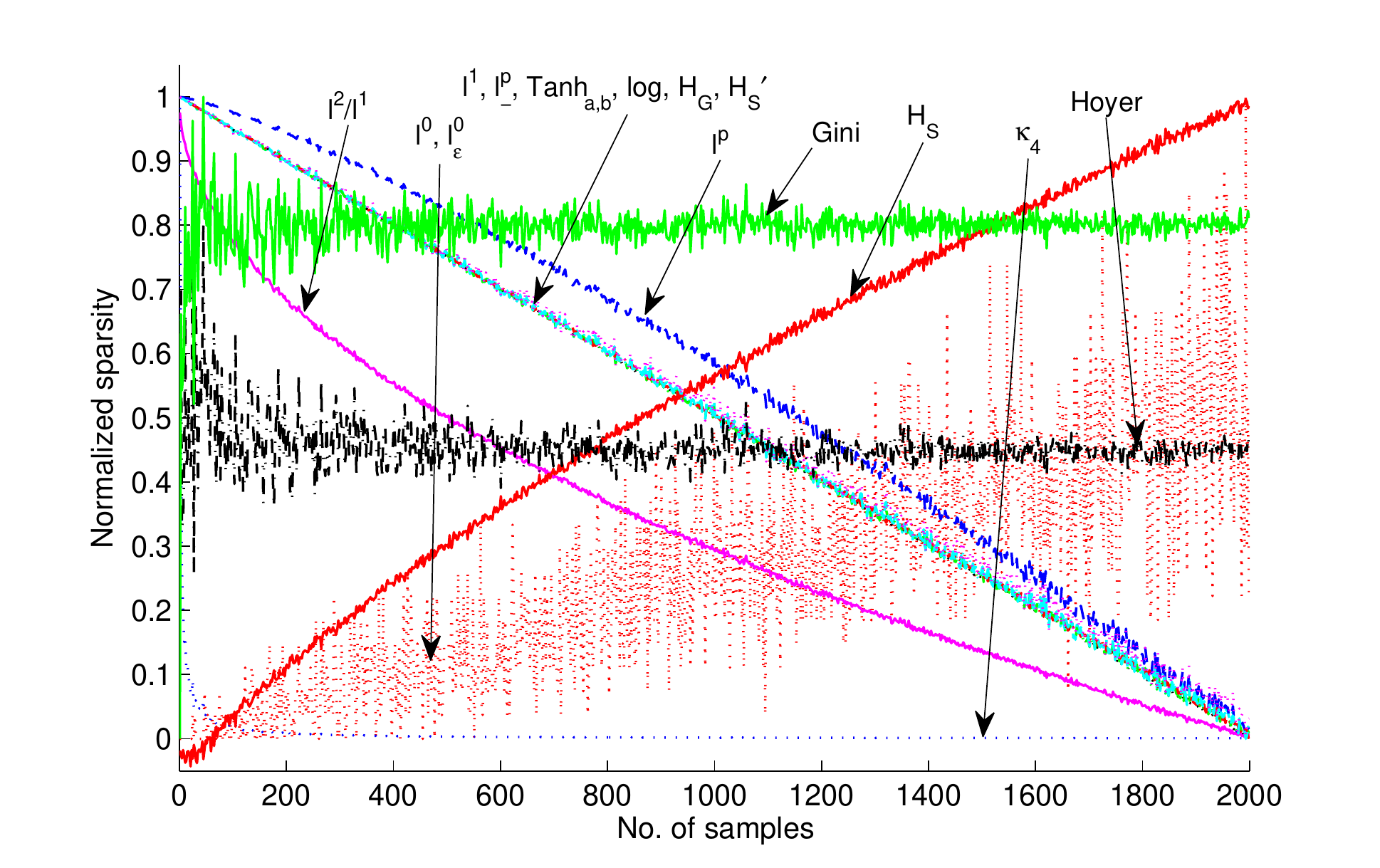}}\\
%(a) Sparsity vs no. of coefficients & (b) Sparsity vs no. of coefficients (normalized )\\
%%\end{tabular}
\caption{\small{Sparsity of sets of coefficients drawn from a Poisson
 distribution ($\lambda=5$) {\it vs} the length of the vector of
 coefficients. The erratically ascending measures are $\ell^0$ and
 $\ell^0_{\epsilon}$. The measures $\ell^1$, $\log$, $\tanh$, $H_G$,
 $H_S^\prime$ and $\ell^p_{-}$ are grouped in an almost-straight
 decreasing line. The measures are scaled to be between 0 and 1.}
 }\label{fig:d4alt}
\end{center}
\end{figure*}

In the second experiment we take coefficients from a Bernoulli
distribution where coefficients are either $0$ with probability $p$ or
$1$ with probability $1-p$. For this experiment the set size remains
constant and the probability $p$ varies from $0$ to $1$.  With a low
$p$ most coefficients will be 1 and very few zero. The energy
distribution of such a set is not sparse and accordingly has a low
value (see Fig~\ref{fig:bern}). As $p$ increases so should the
sparsity measure. We can see this is the case in some form for all of
the measures except $H_S \prime$. We note that  $\kappa_4$
does not rise steadily with increasing $p$ but rises dramatically as
the set approaches its sparsest. This is of some concern if optimizing
sparsity using $\kappa_4$ as there is not much indication that the
distribution is getting more sparse until its already quite
sparse.
\begin{figure*}[htpb]
\begin{center}
\includegraphics[width=14cm]{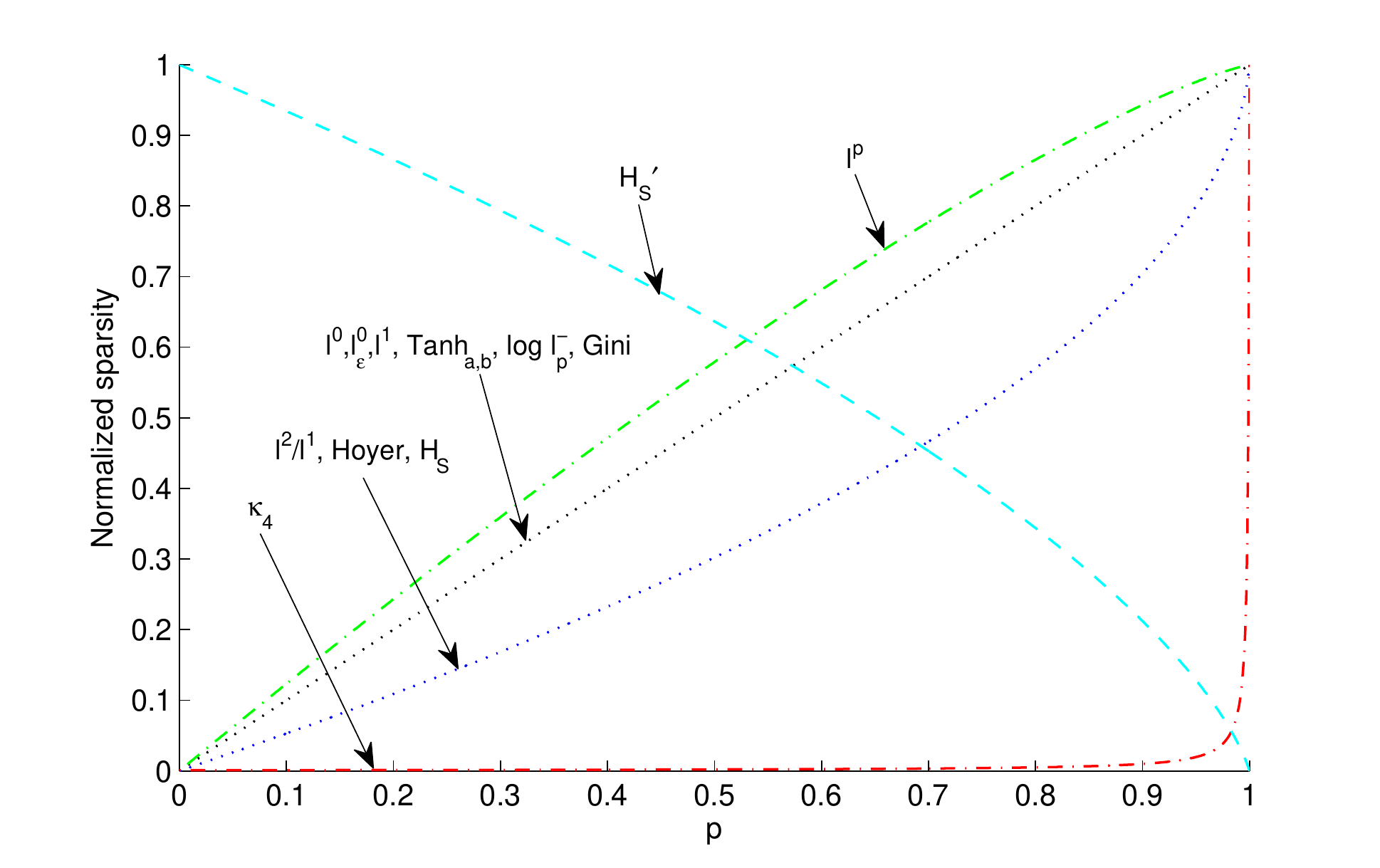}
\caption{Sparsity {\it vs} $p$ for a Bernoulli distribution with
coefficients being $0$ with probability $p$ and $1$ otherwise. The
measures are scaled to fit between a sparsity range of 0 to 1.}
\label{fig:bern}
\end{center}
\end{figure*}

\section{Conclusions}
\label{sec:concl} In this paper we have presented six intuitive
attributes of a sparsity measure. Having defined these attributes
mathematically, we then compared commonly-used measures of sparsity.
The goal of this paper is to provide motivation for selecting a
particular measure of sparsity. Each measure emphasizes different
combinations of attributes and this should be addressed when
selecting a sparsity measure for an application. We can see from the
main contribution of this paper, Table~\ref{table:checks} and the
associated proofs in Appendices~\ref{sec:cex} and \ref{sec:proofs},
that the only measure to satisfy all six criteria is the Gini Index.
This aligns well with \cite{rickard06sparse} in which it is shown
that the Gini Index is an indicator for when sources are separable,
a property which itself relies on sparsity. The Hoyer measure
\cite{hoyer04nonnegative} comes a close second, failing only
\Dd~(invariance under cloning), which is, admittedly an arguable
criterion for certain applications. For applications in which the
number of coefficients is fixed both the Gini Index and the Hoyer
measure satisfy all criteria.
%The kurtosis measure ($\kappa_4$)
%fails both \Dd~(invariance under cloning) and \Df~(Babies increase
%sparsity). This behavior could be desirable under certain conditions
%for example in a system where signals were zero-padded.

We have also presented two graphical examples of the performance of
the measures when quantifying the sparsity of a distribution with
sparsity controlled. Again, both the Gini Index and the Hoyer
measure outperform the other measures, illustrating their utility.

Sparsity is used in many applications but with few exceptions it is
not studied as a concept in itself. We hope that this work will not
just encourage the use of the Gini Index but encourage users of
sparsity to consider in more depth the concept of
sparsity.

\appendix
%\section{Appendix}
%\label{sec:appa}
We use these measures to calculate a number which describes the
sparsity of a set of coefficients $\vc = \bmat{cccc}c_1& c_2 & \cdots
&c_N\emat$.

Note - ignore the trivial cases, for example,   \Db~  with  $\alpha=1$.
\begin{itemize}
\item[\Da] {\em Robin~Hood}:\\{\small $S(\bmat{cccccc}c_1& \cdots& c_i-\alpha&
\ldots& c_j+\alpha& \ldots\emat )$ $<S(\vc)$} for all $\alpha, c_i, c_j $
such that $c_i>c_j$ and $0<\alpha < \frac{c_i-c_j}{2}$.
\item[\Db] {\em Scaling}:\\ $S(\alpha\vc) = S(\vc)$, $\forall \alpha \in \mathbb{R},~\alpha > 0$.
\item[\Dc] {\em Rising Tide}:\\$S(\alpha + \vc)<S(\vc)$, $\alpha \in
\mathbb{R},~\alpha > 0$ (We exclude the case $c_1 = c_2 = c_3 = \cdots
= c_i = \cdots \forall i$ as this is equivalent to scaling.).
\item[\Dd] {\em Cloning}:\\$ S(\vc) = S(\vc\|\vc) = S(\vc\|\vc\|\vc) = S(\vc\|\vc\|\cdots\|\vc)$.
\item[\De] {\em Bill Gates}:\\  $\forall i \exists \beta = \beta_i>0$, such that $\forall \alpha >0:$ {\small \[S(\bmat{cccc}c_1& \ldots& c_i+\beta+\alpha&\ldots\emat)>S(\bmat{cccc}c_1&\ldots&c_i+\beta& \ldots\emat).\]}
\item[\Df] {\em Babies}:\\$S(\vc||0) > S(\vc)$.
\end{itemize}
\begin{footnotesize}
\begin{sidewaystable*}
%\begin{table}[ht]
\caption{Guide to Counter-examples and Proofs each followed by
reference number. A \checkmark indicates compliance of the measure
with the relevant criterion. `obv' means that the proof is obvious
and as such is not included.}
\begin{center}
\rowcolors[]{2}{blue!1}{blue!5}
\begin{tabular}{ccccccc}

Measure &  {\bf \Da}&
{\bf \Db}&
{\bf \Dc}&
{\bf \Dd}&
{\bf \De}&
{\bf \Df}\\
\hline
%\hline
%measure                  % D1         %D2          %D3          %D4          %P1          %P2
$\ell^0$
        & C.Ex~\ref{CE:1}              &  \checkmark obv             & C.Ex~\ref{CE:3}     &     C.Ex~\ref{CE:4}       &        C.Ex~\ref{CE:5}      & \checkmark obv\\
%\hline
$\ell^0_\epsilon$
                      &  C.Ex~\ref{CE:1}     &  C.Ex~\ref{CE:2}     &  obv&  C.Ex~\ref{CE:4}         &  C.Ex~\ref{CE:5}   & \checkmark obv\\
%\hline
$-{\ell^1}$
             & C.Ex~\ref{CE:1}    &    C.Ex~\ref{CE:2}  &   \checkmark  obv        & C.Ex~\ref{CE:4}         &   C.Ex~\ref{CE:5}        &    C.Ex~\ref{CE:6} \\
%\hline
$-{\ell^p}$
            & \checkmark Proof~\ref{sec:lp1}     &    C.Ex~\ref{CE:2}     &   \checkmark   Proof~\ref{sec:lp3}            &   C.Ex~\ref{CE:4}    &      C.Ex~\ref{CE:5}       &  C.Ex~\ref{CE:6}                     \\
%\hline
$\frac{\ell^2}{\ell^1}$
                     & \checkmark  Proof~\ref{sec:l1l21} & \checkmark obv     & Proof~\ref{sec:l1l22}      &  C.Ex~\ref{CE:4}          &\checkmark Proof~\ref{sec:l1l23} &    C.Ex~\ref{CE:6}       \\
%\hline
$-{\tanh_{a,b}}$
                      &\checkmark  Proof~\ref{sec:tanh2}&    C.Ex~\ref{CE:3}       &   \checkmark  Proof~\ref{sec:tanh3}          &    C.Ex~\ref{CE:4}        &       C.Ex~\ref{CE:5}         &    C.Ex~\ref{CE:6}  \\
%\hline
  $-\log$
                      &   C.Ex~\ref{CE:1a} $(^{\ast})$         &    C.Ex~\ref{CE:2}             &  \checkmark Proof~\ref{sec:log3}&    C.Ex~\ref{CE:4}    &     C.Ex~\ref{CE:5}       &   C.Ex~\ref{CE:6}     \\
%\hline
  $\kappa_4$
                      &  C.Ex~\ref{CE:1a} $(^{\ast})$     & \checkmark Proof~\ref{sec:kur2}    &  \checkmark Proof~\ref{sec:kur3} & C.Ex~\ref{CE:4}           &  \checkmark Proof~\ref{sec:kur5}&      C.Ex~\ref{CE:6}       \\
%\hline
  $u_\theta$
                      & Proof~\ref{sec:u1}          &     \checkmark   obv   &     Proof~\ref{sec:u3}        &\checkmark  Proof~\ref{sec:u4}  &  \checkmark Proof~\ref{sec:u5}         & Proof~\ref{sec:u6}  \\
%\hline\
  $-\ell^p_{-}$
                      &C.Ex~\ref{CE:1}  &     C.Ex~\ref{CE:2}           &  C.Ex~\ref{CE:3a}$(^{\ast})$&    C.Ex~\ref{CE:4}        &     \checkmark Proof~\ref{sec:lp_5}     &      C.Ex~\ref{CE:6}      \\
%\hline
  $H_G$
                      & \checkmark Proof~\ref{sec:hg1} &         C.Ex~\ref{CE:2}        &  \checkmark obv &   C.Ex~\ref{CE:4}         &    C.Ex~\ref{CE:5}         &    C.Ex~\ref{CE:6}   \\
%\hline
 $H_S$
                      &    C.Ex~\ref{CE:1}         &        C.Ex~\ref{CE:2}         &  C.Ex~\ref{CE:3a}$(^{\ast})$ &    C.Ex~\ref{CE:4}        &    C.Ex~\ref{CE:5}  &  C.Ex~\ref{CE:6}   \\
%\hline
 $H_S^\prime$
                      &  C.Ex~\ref{CE:1}           &        C.Ex~\ref{CE:2}         &  C.Ex~\ref{CE:3a}$(^{\ast})$&  C.Ex~\ref{CE:4}          &  C.Ex~\ref{CE:5}       &   C.Ex~\ref{CE:6}    \\
%\hline
 Hoyer
                      &\checkmark Proof~\ref{sec:hoy1} & \checkmark   obv  & \checkmark  Proof~\ref{sec:hoy3} & C.Ex~\ref{CE:4}            & \checkmark Proof~\ref{sec:hoy5}& \checkmark obv \\
%\hline
Gini
                      & \checkmark Proof~\ref{sec:gini1} & \checkmark  Proof~\ref{sec:gini2}   &  \checkmark Proof~\ref{sec:gini3} & \checkmark Proof~\ref{sec:gini4} & \checkmark Proof~\ref{sec:gini5}  & \checkmark Proof~\ref{sec:gini6} \\
%\hline
\end{tabular}
\label{table:counterexs}
\end{center}
%\end{table}
\end{sidewaystable*}
\end{footnotesize}

\subsection{Counter-Examples}
\label{sec:cex}
The most parsimonious method of showing non-compliance with the sparse
criteria is through the following simple counter-examples.  As an
sample we take the $-\ell^1$ measure and \Da. \Da~ states that the
$\ell^1$ measure of $[0,1,3,5]$ should be greater than the $\ell^1$
measure of $[0,2,3,4]$. Using counter example \label{CE:1} we see that
\begin{eqnarray*}
 S([0,1,3,5]) &=& -9\\
S([0,2,3,4]) &=&-9.
\end{eqnarray*}
 As the Robin Hood operation had no effect on the sparsity of the
vectors as measured by the $\ell^0$ measure the measure does not
satisfy \Da.  In the case of $-\ell^p_{-}$ the zeros in the
counter-examples are omitted.
\begin{counterex}%{\em Counter Example 1}
\label{CE:1}
\[
[0,1,3,5] ~~\mbox{vs}~~ [0,2,3,4]
\]
\end{counterex}
\setcounter{niall11}{0}
\begin{counterex}[$^{\ast}$]%{\em Counter Example 1*}

\label{CE:1a}
\[
[.3, 1, 2] ~~\mbox{vs}~~  [.31, .99,  2]
\]
\end{counterex}
\begin{counterex}%{\em Counter Example 2}
\label{CE:2}
\[
[0,1,3,5] ~~\mbox{vs}~~ [0,2,6,10]
\]
\end{counterex}
% especially for l0 and l0_eps

\begin{counterex}%{\em Counter Example 3}
\label{CE:3}
\[
[1,3,5] ~~\mbox{vs}~~ [1.5,3.5,5.5]
\]
\end{counterex}
% especially for lp-
\setcounter{niall11}{2}
\begin{counterex}[$^{\ast}$]%{\em Counter Example 3*}
\label{CE:3a}
\[
[.1,.3,.5] ~~\mbox{vs}~~ [.15, .35,.55]
\]
\end{counterex}
\begin{counterex}%{\em Counter Example 4}
\label{CE:4}
\[
[0,1,3,5]~~\mbox{vs}~~  [0,0,1,1,3,5]
\]
\end{counterex}
\begin{counterex}%{\em Counter Example 5}
\label{CE:5}
\[
[0,1,3,5]~~\mbox{vs}~~ [0,1,3,20]
\]
\end{counterex}
\begin{counterex}%{\em Counter Example 6}
\label{CE:6}
\[
[0,1,3,5] ~~\mbox{vs}~~ [0,0,0,1,3,5]
\]
\end{counterex}
\subsection{Proofs}
\label{sec:proofs}
This section contains the proofs that were longer than
Table~\ref{table:counterexs} permitted. The obvious method of proving
that the measures satisfy the criteria, is to plug the formulae
for the measures into the mathematical definitions of the six
criteria. Another method used below is to differentiate the modified
sparse measure with respect to the parameter that modifies it and
observe the result. For example if we show that $\frac{\partial
S(\alpha + \vc)}{ \partial \alpha} < 0$ for $\alpha >0$ this proves
\Dc~ as any change in $\alpha$ causes the measure to drop.

%\begin{onecolumn}
\subsubsection{$-{\ell^p}$  and \Da}\hfill
\label{sec:lp1}
\begin{thm}
 $-{\ell^p}$ satisfies
\[
S(\bmat{cccccc}c_1& \cdots& c_i-\alpha& \cdots& c_j+\alpha&
\cdots\emat ) < S(\vc)
\]
 for all $\alpha, c_i, c_j $ such that $c_i>c_j$ and $0<\alpha <
\frac{c_i-c_j}{2}$.
\end{thm}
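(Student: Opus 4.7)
The plan is to reduce the claim to showing that the inner sum $\sum_k c_k^p$ strictly increases under the Robin Hood transfer, since the map $x \mapsto x^{1/p}$ is strictly increasing on the positives (as $1/p > 0$), and the outer minus sign then flips the inequality to give $S < S(\vc)$ as required. So after that reduction, it suffices to prove
\[
(c_i - \alpha)^p + (c_j + \alpha)^p > c_i^p + c_j^p
\]
for $0 < \alpha < (c_i - c_j)/2$, since all the untouched coefficients contribute equally to both sides.

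I would carry this out by a single-variable derivative argument. Define $f(\alpha) = (c_i - \alpha)^p + (c_j + \alpha)^p$ on $[0, (c_i - c_j)/2]$. Then $f(0) = c_i^p + c_j^p$, and
\[
f'(\alpha) = p\bigl[(c_j + \alpha)^{p-1} - (c_i - \alpha)^{p-1}\bigr].
\]
Since $0 < p < 1$, we have $p - 1 < 0$, so $t \mapsto t^{p-1}$ is strictly decreasing for $t > 0$. The constraint $\alpha < (c_i - c_j)/2$ gives $c_j + \alpha < c_i - \alpha$, hence $(c_j + \alpha)^{p-1} > (c_i - \alpha)^{p-1}$, and therefore $f'(\alpha) > 0$ on the open interval. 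Integrating (or invoking the mean value theorem) yields $f(\alpha) > f(0)$ for every $\alpha > 0$ in the allowed range, which is exactly the reduced inequality above.

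Conceptually this is just the strict Schur-concavity of $x \mapsto x^p$ for $0 < p < 1$; the Robin Hood transfer is a single majorization step, which strictly increases any strictly Schur-concave sum. There is no real obstacle; the only thing to be careful about is the direction of the inequality when the monotone transformation $x \mapsto x^{1/p}$ and then the overall minus sign are applied, and ensuring $c_j + \alpha$ and $c_i - \alpha$ remain strictly positive (which follows from $c_i > c_j \geq 0$ and $\alpha < (c_i - c_j)/2 < c_i$). The final step is simply chaining: inner sum strictly increases $\Rightarrow$ $\ell^p$ strictly increases $\Rightarrow$ $-\ell^p$ strictly decreases, which is the statement of \Da.
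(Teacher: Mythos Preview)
Your proof is correct and follows essentially the same approach as the paper: both differentiate with respect to $\alpha$ and reduce to the observation that $t \mapsto t^{p-1}$ is strictly decreasing for $0<p<1$, together with $c_j+\alpha < c_i-\alpha$. Your version is marginally cleaner in that you strip off the outer $1/p$-th power by monotonicity before differentiating, whereas the paper differentiates the full expression $-\bigl(\sum c_k^p\bigr)^{1/p}$ directly, but the core argument is identical.
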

 \begin{proof}
\[
-{\ell^p} = -{\left(\sum_k c_k^p\right)^{1/p},\,\,\,0<p<1}.
\]
We wish to show that the following holds true  for all $\alpha , c_i, c_j $ such that $c_i>c_j$ and $0<\alpha < \frac{c_i-c_j}{2}$
\[
\frac{\partial}{\partial \alpha}\left[-{\left(\sum_{n \neq i,j} c_n^p  + (c_i-\alpha)^p + (c_j+\alpha)^p \right)^{1/p}}\right] <0.
\]
{\small
\begin{eqnarray*}
&   -\frac{1}{p} \left(\sum_{k\neq i,j} c_k^p +(c_i-\alpha)^{p}+(c_j+\alpha)^{p}\right)^{\frac{1}{p}-1}
\left(-p(c_i-\alpha)^{p-1}\right.&\\
&\left. +p(c_j+\alpha)^{p-1}\right)<0.&
\end{eqnarray*}}
Which holds true if
\begin{equation}
  \nonumber (c_j+\alpha)^{p-1}-(c_i-\alpha)^{p-1}>0.
\end{equation}
As $p-1<0$ we can rewrite the above as
\begin{eqnarray*}
\frac{1}{ (c_j+\alpha)^{1-p}}-\frac{1}{(c_i-\alpha)^{1-p}}&>&0\\
\frac{1}{ (c_j+\alpha)}&>&\frac{1}{(c_i-\alpha)}\\
c_i-\alpha&>&c_j+\alpha\\
\frac{c_i-c_j}{2}&>&\alpha,
\end{eqnarray*}
which is necessarily true as it is one of the constraints upon $\alpha$.
\end{proof}
\subsubsection{$-{\ell^p}$  and \Dc}\hfill
\label{sec:lp3}
\begin{thm}
$-{\ell^p}$   satisfies \[S(\alpha + \vc)<S(\vc), ~\alpha \in \mathbb{R},~\alpha > 0.\]
\end{thm}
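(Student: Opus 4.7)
The plan is to mirror the Type~B (derivative) style used above for $-\ell^p$ and \Da, since it is the natural tool for $\ell^p$-like measures and matches the preceding proof's formatting. I would treat $\alpha$ as a continuous parameter and show that $\frac{\partial}{\partial \alpha} S(\alpha + \vc) < 0$ for every $\alpha > 0$ (and every $\vc$ in the positive orthant), which immediately gives the strict inequality $S(\alpha+\vc) < S(\vc)$ by monotonic decrease from $\alpha = 0$.

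Concretely, I would start from $S(\alpha + \vc) = -\bigl(\sum_k (c_k+\alpha)^p\bigr)^{1/p}$ and differentiate via the chain rule to obtain
\[
\frac{\partial}{\partial \alpha} S(\alpha + \vc) = -\Bigl(\sum_k (c_k+\alpha)^p\Bigr)^{\frac{1}{p}-1} \sum_k (c_k+\alpha)^{p-1}.
\]
Then I would verify each factor is positive: the outer power of the sum is positive because $c_k \geq 0$ and $\alpha > 0$ force $\sum_k (c_k+\alpha)^p > 0$ (and positive quantities stay positive under any real exponent); each term $(c_k+\alpha)^{p-1}$ is positive because $c_k + \alpha > 0$, irrespective of the sign of $p-1$. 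Thus the right-hand side equals $-1$ times a product of two positive quantities, so it is strictly negative.

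As an alternative sanity check (and perhaps a one-line remark at the end of the proof), one can avoid calculus entirely: since $x \mapsto x^p$ is strictly increasing on $[0,\infty)$ for $p>0$, we have $(c_k+\alpha)^p > c_k^p$ for every $k$; summing and raising to the positive exponent $1/p$ preserves the strict inequality, so $\bigl(\sum_k (c_k+\alpha)^p\bigr)^{1/p} > \bigl(\sum_k c_k^p\bigr)^{1/p}$, and negation flips the direction to yield \Dc.

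I do not expect a serious obstacle here: the statement is essentially monotonicity of $\ell^p$ in each coordinate. The only small point to be careful about is that the strict inequality genuinely holds even when some $c_k = 0$, since $(0+\alpha)^p = \alpha^p > 0$ for $\alpha > 0$; and the exclusion clause in \Dc{} (all coefficients equal) is not actually needed for this measure, because $-\ell^p$ is not scale invariant and the argument above goes through unchanged on constant vectors.
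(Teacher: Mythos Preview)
Your proposal is correct; both the derivative argument and the one-line monotonicity check are sound and complete.

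The paper, however, does not use the Type~B derivative route here. It offers a Type~A direct chain
\[
-\Bigl(\textstyle\sum_k (\alpha+c_k)^p\Bigr)^{1/p} \;<\; -\Bigl(N\alpha^p + \textstyle\sum_k c_k^p\Bigr)^{1/p} \;<\; -\Bigl(\textstyle\sum_k c_k^p\Bigr)^{1/p}.
\]
The second inequality is immediate ($N\alpha^p>0$), but the first is equivalent to $\sum_k(\alpha+c_k)^p > \sum_k(\alpha^p + c_k^p)$, which for $0<p<1$ actually goes the \emph{wrong} way: $x\mapsto x^p$ is concave with value $0$ at $0$, hence subadditive, so $(\alpha+c_k)^p \le \alpha^p + c_k^p$. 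The paper's intermediate step is therefore backwards, even though the endpoint inequality is true. Your ``sanity check'' is exactly the clean direct argument the paper's proof ought to have been---drop the middle term entirely and use $(c_k+\alpha)^p > c_k^p$ by monotonicity of $x^p$. Your primary derivative approach is an equally valid alternative and has the stylistic advantage of matching the preceding \Da{} proof, as you intended.
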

 \begin{proof}
  \begin{eqnarray*}
 -\left(\sum_{k=1}^N (\alpha+c_k)^p\right)^{1/p}
 &<& -\left(N\alpha^p + \sum_{k=1}^N c_k^p\right)^{1/p} \\
&<&  -\left(\sum_{k=1}^N c_k^p\right)^{1/p}.  \end{eqnarray*}
\end{proof}
\subsubsection{$\frac{\ell^2}{\ell^1}$  and \Da}\hfill
\label{sec:l1l21}
\begin{thm}
$\frac{\ell^2}{\ell^1}$  satisfies \[S(\bmat{cccccc}c_1& \cdots& c_i-\alpha& \cdots& c_j+\alpha& \cdots\emat ) < S(\vc)\] for
all $\alpha, c_i, c_j $ such that $c_i>c_j$ and $0<\alpha <
\frac{c_i-c_j}{2}$.
\end{thm}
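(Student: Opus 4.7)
The plan is to exploit the fact that the Robin Hood operation preserves the $\ell^1$ norm: subtracting $\alpha$ from $c_i$ and adding $\alpha$ to $c_j$ leaves $\sum_k c_k$ unchanged. Hence $\frac{\ell^2}{\ell^1}$ strictly decreases iff its numerator $\sqrt{\sum_k c_k^2}$ strictly decreases, which in turn happens iff $\sum_k c_k^2$ strictly decreases. So the proof reduces to showing
\[
(c_i-\alpha)^2 + (c_j+\alpha)^2 < c_i^2 + c_j^2,
\]
since every other $c_k^2$ term cancels between the two sides.

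Expanding and cancelling the $c_i^2$ and $c_j^2$ terms gives $2\alpha^2 + 2\alpha(c_j - c_i) < 0$, which (dividing by the strictly positive $2\alpha$) is equivalent to $\alpha < c_i - c_j$. This is implied by the hypothesis $\alpha < \tfrac{c_i - c_j}{2}$, so the inequality holds. Note also that the hypothesis $c_i > c_j$ together with $\alpha < \tfrac{c_i-c_j}{2}$ guarantees that $c_i - \alpha > c_j + \alpha \geq 0$, so we remain in the positive orthant and the measure is well-defined before and after the transfer (provided $\vc \neq 0$ so that the $\ell^1$ denominator does not vanish).

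The main ``obstacle'' is really just the observation that the denominator is invariant under Robin Hood, which collapses a seemingly quotient-based inequality into a one-line algebraic verification on the $\ell^2$ numerator. This is the same calculation already displayed as the Type~A exemplar in the body of the paper, so no new technique is required; the only care needed in the write-up is to justify division by $\sum_k c_k > 0$ and to note that the hypothesis on $\alpha$ automatically yields $\alpha < c_i - c_j$.
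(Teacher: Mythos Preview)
Your proposal is correct and is essentially identical to the paper's own proof: both observe that the $\ell^1$ denominator is invariant under the Robin Hood transfer, reduce to $(c_i-\alpha)^2+(c_j+\alpha)^2<c_i^2+c_j^2$, and expand to obtain $\alpha<c_i-c_j$ (equivalently $c_j-c_i+\alpha<0$), which follows from $\alpha<\tfrac{c_i-c_j}{2}$. You even note this yourself; there is nothing to add.
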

\begin{proof}
As $\frac{\ell^2}{\ell^1} = \frac{\sqrt{\sum_j c_j^2}}{\sum_j c_j}$ we can restate the above as
\[
\frac{\sqrt{\sum_{k \neq i,j}c_k^2 +(c_i -\alpha)^2  + (c_j +\alpha)^2 }}{\sum_k c_k+\alpha -\alpha} < \frac{\sqrt{\sum_k c_k ^2} }{\sum_k c_k}.
\]
This simplifies to
\begin{eqnarray*}
 \sum_{k \neq i,j} c_k ^2 +(c_i -\alpha)^2  + (c_j +\alpha)^2 &<& \sum_k c_k ^2\\
 (c_i -\alpha)^2  + (c_j +\alpha)^2 &<& c_i^2 + c_j^2.
\end{eqnarray*}
\begin{eqnarray*}
c_i^2 -2c_i \alpha + \alpha^2  + c_j^2 +2c_j \alpha + \alpha^2 &<& c_i^2 + c_j^2\\
 c_j - c_i + \alpha &<&0,
\end{eqnarray*}
which we know is true as $0<\alpha < \frac{c_i-c_j}{2}$.
\end{proof}
\subsubsection{$\frac{\ell^2}{\ell^1}$  and \Dc}\hfill
\label{sec:l1l22}
\begin{thm}
$\frac{\ell^2}{\ell^1}$  does not satisfy \[S(\alpha + \vc)<S(\vc), ~\alpha \in \mathbb{R},~\alpha > 0.\]
\end{thm}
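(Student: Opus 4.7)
The plan is to refute Rising Tide for $\ell^2/\ell^1$ by exhibiting a non-constant vector $\vc \in \bbR_{+}^{N}$ and some $\alpha > 0$ for which $S(\vc + \alpha) \geq S(\vc)$. The cleanest route is calculus: treat $S(\vc + \alpha)$ as a function of the scalar $\alpha$, and hunt for a parameter regime in which it is non-decreasing; any point in such a regime then yields the required counterexample.

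Concretely, write $s = \sum_{k} c_k$ and $q = \sum_{k} c_k^2$. Squaring the measure for convenience gives
\[
S(\vc + \alpha)^2 \;=\; \frac{q + 2\alpha s + N\alpha^{2}}{(s + N\alpha)^{2}},
\]
and a short differentiation shows that the derivative in $\alpha$ equals a strictly positive factor times $s^2 - Nq$. So the entire question of whether \Dc{} can fail reduces to whether $s^2 \geq Nq$ can hold for some admissible (non-constant) $\vc$; if it can, I simply take $\alpha$ inside the region where the derivative is $\geq 0$ and read off the counterexample, and if it cannot, the whole plan collapses.

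The main obstacle, which I expect to be fatal, is Cauchy-Schwarz applied to $\vc$ against the all-ones vector: $\left(\sum_{k} c_k \cdot 1\right)^{2} \leq N \sum_{k} c_k^{2}$, that is $s^2 \leq Nq$, with equality iff all $c_k$ coincide. Since the all-equal case is explicitly excluded from \Dc{} in Sec.~\ref{sec:crit}, every admissible $\vc$ actually satisfies $s^2 < Nq$ strictly; the derivative is then negative for every $\alpha > 0$, and $S(\vc + \alpha) < S(\vc)$ always holds. No counterexample exists within the setup, so my proposed strategy cannot succeed. The natural reading is that $\ell^2/\ell^1$ in fact satisfies \Dc{}, and any proof of the statement as worded would have to rest on some strengthening of \Dc{} (for instance, a lower bound on $S(\vc) - S(\vc+\alpha)$ uniform over $\vc$, which Cauchy-Schwarz will not supply) that I cannot extract from the definition given earlier.
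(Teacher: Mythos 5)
Your mathematics is correct, and what it exposes is an error in the paper rather than a gap in your own argument. Writing $\sa=\sum_k c_k$ and $\saa=\sum_k c_k^2$, the inequality $S(\vc+\alpha)<S(\vc)$ reduces, after squaring and cross-multiplying exactly as in the paper's Proof~\ref{sec:l1l22}, to $\sa^2(\saa+2\alpha\sa+N\alpha^2)<\saa(\sa^2+2\sa N\alpha+N^2\alpha^2)$, and the difference of the two sides factors as $\alpha(\sa^2-N\saa)(N\alpha+2\sa)$. Since $\alpha>0$ and $N\alpha+2\sa>0$, the inequality holds precisely when $\sa^2<N\saa$, which is Cauchy--Schwarz and is strict for every non-constant vector; the constant case is explicitly excluded from \Dc. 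This agrees with your derivative computation, $\frac{d}{d\alpha}\bigl[S(\vc+\alpha)^2\bigr]=2(\sa^2-N\saa)/(\sa+N\alpha)^3<0$, and with a direct numerical check: for $\vc=[1,3,5]$ and $\alpha=\frac{1}{2}$ one gets $S(\vc)=\sqrt{35}/9\approx0.657$ and $S(\vc+\frac{1}{2})=\sqrt{44.75}/10.5\approx0.637$. So $\frac{\ell^2}{\ell^1}$ \emph{does} satisfy \Dc, and no counterexample of the kind you set out to find can exist.

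The paper's proof breaks at its final algebraic step: the displayed condition $\alpha<\frac{N}{2\sa}\bigl(\frac{\saa-\sa^2}{N\sa^2-\saa}\bigr)$ does not follow from the preceding line (the correct reduction is the factored form above), and the auxiliary observation $\sa^2>\saa$ is the wrong comparison --- the quantity that governs the sign is $\sa^2-N\saa$, where the inequality points the other way. The theorem as stated is therefore false, the entry for $\frac{\ell^2}{\ell^1}$ under \Dc~in Table~\ref{table:checks} should be a checkmark, and your closing diagnosis --- that any proof of the claim would need a property the definition of \Dc~does not supply --- is exactly right.
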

 \begin{proof}
\[
\frac{\sqrt{\sum_j (\alpha + c_j)^2}}{\sum_j (\alpha + c_j)} = \frac{\sqrt{\sum_j (\alpha^2  +2c_j\alpha+ c_j^2)}}{\sum_j (\alpha + c_j)}.
\]
To simplify matters we make the following substitutions
\begin{eqnarray}
\nonumber \sa &=&\sum_jc_j\\
\saa &=&\sum_j c_j^2
\label{eq:subl1l2}
\end{eqnarray}
 and note that $\sa^2 > \saa$. We now have
 \beq
  \frac{\sqrt{ \saa +2\alpha \sa+N\alpha^2}}{\sa +N\alpha} &<& \frac{\sqrt{\saa}}{\sa}\\
 \sa^2(\saa +2\alpha\sa+N\alpha^2) &<& \saa(\sa^2 +2\sa N\alpha +N^2\alpha^2)\\
  \alpha &<& \frac{N}{2\sa}\left(\frac{\saa - \sa^2 }{N \sa^2 - \saa}\right),
\eeq
 which is false as $\left(\frac{\saa - \sa^2 }{N \sa^2 -
\saa}\right)<0$ which violates the condition $\alpha > 0$.
\end{proof}
\subsubsection{$\frac{\ell^2}{\ell^1}$  and \De}\hfill
\label{sec:l1l23}
\begin{thm}
$\frac{\ell^2}{\ell^1}$  satisfies $\forall i \exists \beta = \beta_i>0$,  $\forall \alpha >0:$ {\small \[S(\bmat{cccc}c_1& \ldots& c_i+\beta+\alpha&\ldots\emat) > S(\bmat{cccc}c_1&\ldots&c_i+\beta& \ldots\emat).\]}
\end{thm}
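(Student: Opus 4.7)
The plan is to reduce the Bill Gates property for $\frac{\ell^2}{\ell^1}$ to a one-variable monotonicity claim. Fix the index $i$ and set
\begin{equation*}
A = \sum_{k \neq i} c_k^2, \qquad B = \sum_{k \neq i} c_k,
\end{equation*}
and write $x = c_i + t$ where $t \geq 0$ is the perturbation of the $i$th coefficient. Then the measure, viewed as a function of $t$, becomes
\begin{equation*}
f(t) \;=\; \frac{\sqrt{A + x^2}}{B + x}.
\end{equation*}
The first step is to differentiate $f$ with respect to $t$ (equivalently $x$), which yields, after clearing a common positive denominator,
\begin{equation*}
f'(t) \;=\; \frac{x B - A}{\sqrt{A + x^2}\,(B + x)^2}.
\end{equation*}

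The sign of $f'(t)$ is therefore governed entirely by the numerator $xB - A$. Assuming the non-degenerate case $B > 0$ (i.e.\ at least one other coefficient is nonzero), $f'(t) > 0$ precisely when $c_i + t > A/B$. So the second step is to \emph{choose} $\beta$ satisfying
\begin{equation*}
\beta \;>\; \max\!\left(0,\; \frac{A}{B} - c_i\right).
\end{equation*}
With this choice, for every $\alpha > 0$ the entire interval $[\beta, \beta + \alpha]$ lies strictly above the threshold $A/B - c_i$, so $f$ is strictly increasing on that interval, giving $f(\beta + \alpha) > f(\beta)$, which is exactly the Bill Gates inequality.

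The only subtlety, and the step I would treat most carefully, is the degenerate case $B = 0$, which (since all $c_k \geq 0$) forces every other coordinate to be zero. In that case $f(t) \equiv 1$ and strict monotonicity fails; however $\vc$ is already maximally sparse with respect to $\frac{\ell^2}{\ell^1}$, so the Bill Gates property is interpreted as vacuous, or we restrict attention to vectors with at least two nonzero entries, matching the implicit convention used throughout Section~\ref{sec:crit}. The main obstacle is not the calculus itself (which is short) but the bookkeeping around the existential quantifier on $\beta$: one must verify that the single threshold $A/B - c_i$ works uniformly for all $\alpha > 0$, which is immediate from the fact that $f'$ is positive throughout the relevant ray, not merely at a single point.
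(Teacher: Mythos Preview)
Your proposal is correct, but it follows a genuinely different route from the paper. The paper's proof is purely algebraic (its ``Type~A'' style): it writes out the desired inequality, squares both sides, cross-multiplies, and reduces the claim to finding a $\beta$ making a certain rational expression in $\beta$ non-positive; it then exhibits the explicit choice $\beta = \sum_j c_j$ and checks that it works via $s_2 \le s_1^2$. Your argument is instead a ``Type~B'' monotonicity argument: you differentiate the one-variable function $f(t)=\sqrt{A+(c_i+t)^2}/(B+c_i+t)$, read off the exact sign-change threshold $t = A/B - c_i$, and take any $\beta$ above it. Your approach is cleaner and yields more information (the sharp $\beta$ rather than merely one that works), and the explicit handling of the degenerate case $B=0$ is a nice addition the paper leaves implicit. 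The paper's approach, on the other hand, avoids calculus entirely and produces a single concrete $\beta$ independent of $i$, which some readers may find more self-contained.
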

 \begin{proof}
We make the following substitutions
\beq
\nonumber \sa &=&\sum_j {c_j}\\
\saa &=&\sum_j {c_j}^2
\eeq
and wish to show that
{\small
\[
\frac{
\sqrt{\saa +\alpha^2 +\beta^2 +2(\alpha\beta+ \alpha{c_i} + \beta{c_i})}}{\sa   + \alpha+\beta}  > \frac{\sqrt{\saa +\beta^2 +2 c_i \beta}}{\sa  +\beta }.
\]}
Squaring both sides and cross-multiplying gives
%% \begin{footnotesize}
%%   \begin{equation}
%%   \nonumber (\saa +\alpha^2 +\beta^2 +2\alpha\beta+ 2\alpha{c_i} + 2\beta{c_i})(\sa +\beta )^2  >(\saa +\beta^2 +2 c_i \beta)(\sa  + \alpha+\beta)^2
%%   \end{equation}
%% \end{footnotesize}
%% which, after expanding,  can be simplified to
%% {\small
%% \begin{equation}
%%   \nonumber \alpha^2 \sa^2 + 2\alpha\beta\sa^2 + 2 \alpha c_i \sa^2 +2\sa\alpha^2 \beta + 2 \sa \beta ^2 > \saa \alpha^2 + 2 \sa \saa \alpha + 2 \alpha^2 \beta c_i + 2 \alpha \beta^2 c_i
%% \end{equation}}
\begin{equation}
  \nonumber \alpha >\frac{2 \sa \saa + 2 \beta ^2 c_i - 2 \beta \sa ^2- 2 c_i \sa^2 }{\sa ^2 + 2 \sa \beta - \saa - 2 \beta c_i}.
\end{equation}
We want $RHS<0$ and therefore want a $\beta$ such that
\[
\frac{2 \sa \saa + 2 \beta ^2 c_i - 2 \beta \sa ^2- 2 c_i \sa^2 }{\sa ^2 + 2 \sa \beta - \saa - 2 \beta c_i} \leq 0.
\]
As the denominator is always positive, we are only interested in the numerator, that is, finding a $\beta$ such that
\[
 \sa \saa +  \beta ^2 c_i -  \beta \sa ^2-  c_i \sa^2  \leq 0.
\]
This is satisfied for $\beta = \sa$
\[
 \sa \saa +  \sa ^2 c_i -   \sa ^3-  c_i \sa^2  \leq 0,
\]
which is clearly true.
\end{proof}
%$-{\tanh_{a,b}}$  & $-{\sum_j\tanh\left(\left|ac_j\right|^b\right)}$
\subsubsection{$-{\tanh_{a,b}}$  and \Da}\hfill
\label{sec:tanh2}
\begin{thm}
$-{\tanh_{a,b}}$  satisfies \[S(\bmat{cccccc}c_1& \cdots& c_i-\alpha& \cdots& c_j+\alpha& \cdots\emat ) < S(\vc),\] for
all $\alpha, c_i, c_j $ such that $c_i>c_j$ and $0<\alpha <
\frac{c_i-c_j}{2}$.
\end{thm}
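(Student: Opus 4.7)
The plan is to follow the Type~B derivative strategy used in Proof~\ref{sec:lp1}. Fix all coordinates except $c_i$ and $c_j$, and parametrize the Robin Hood perturbation by $\alpha \in [0,(c_i-c_j)/2)$, so that
\[
S(\alpha) = -\sum_{k\neq i,j}\tanh((ac_k)^b) - \tanh((a(c_i-\alpha))^b) - \tanh((a(c_j+\alpha))^b).
\]
Since $S(0)=S(\vc)$, it suffices to show $\tfrac{dS}{d\alpha}<0$ on the open interval $(0,(c_i-c_j)/2)$; integrating that strict inequality then gives the required strict drop in $S$.

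Differentiating with the chain rule and using $\tanh'=\operatorname{sech}^2$ yields
\[
\frac{dS}{d\alpha}=ab\bigl[(a(c_i-\alpha))^{b-1}\operatorname{sech}^2((a(c_i-\alpha))^b)-(a(c_j+\alpha))^{b-1}\operatorname{sech}^2((a(c_j+\alpha))^b)\bigr].
\]
Setting $u=a(c_i-\alpha)$ and $v=a(c_j+\alpha)$, the constraint $\alpha<(c_i-c_j)/2$ forces $u>v>0$, so the required sign is implied by strict decrease of the single-variable function
\[
\phi(x)=x^{b-1}\operatorname{sech}^2(x^b)
\]
on $(0,\infty)$. A direct computation factors
\[
\phi'(x)=x^{b-2}\operatorname{sech}^2(x^b)\bigl[(b-1)-2bx^b\tanh(x^b)\bigr],
\]
so the whole argument reduces to signing the bracketed expression.

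The hard part is exactly this sign. For $0<b\leq 1$ it is immediate, since $b-1\leq 0$ while $2bx^b\tanh(x^b)>0$ on $(0,\infty)$; hence $\phi'<0$ and the proof closes at once. For $b>1$ the bracket is negative only once $x^b\tanh(x^b)>(b-1)/(2b)$, which fails for small $x$, so I would expect the statement to carry an implicit restriction to the regime $0<b\leq 1$ under which the author has proved D1. This is consistent with the companion observation that $c\mapsto \tanh((ac)^b)$ is concave on $\bbR_+$ precisely when $b\leq 1$, whence $\sum_k\tanh((ac_k)^b)$ is Schur-concave and is strictly increased by any Robin Hood transfer, giving an equivalent route to the same conclusion.
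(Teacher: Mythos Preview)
Your derivative computation and the paper's mean-value argument are two sides of the same coin: both reduce \Da\ to strict monotonicity of the derivative of $t\mapsto\tanh\bigl((at)^b\bigr)$. The paper applies the MVT to $g(t)=\tanh(t^b)$ on $[x-z,x]$ and $[y,y+z]$ and then compares the resulting derivative values; you instead differentiate $S(\alpha)$ directly. Either way one lands on the question of whether $g'(t)=b\,t^{\,b-1}\operatorname{sech}^2(t^b)$ is strictly decreasing on $(0,\infty)$.

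Here your analysis is sharper than the paper's. In the paper's MVT step the factor $\theta^{\,b-1}$ is silently dropped, leaving only $1-\tanh^2(\theta^b)$, which \emph{is} strictly decreasing for every $b>0$; that is how the paper concludes without restriction on $b$. You retain the full derivative $\phi(x)=x^{b-1}\operatorname{sech}^2(x^b)$ and correctly observe that the bracket $(b-1)-2bx^b\tanh(x^b)$ is negative on all of $(0,\infty)$ only when $0<b\le 1$. For $b>1$ the function $\tanh\bigl((ac)^b\bigr)\sim (ac)^b$ near the origin is convex, so a Robin Hood transfer among small enough coefficients \emph{increases} $-\tanh_{a,b}$, and \Da\ fails. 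Your Schur-concavity remark packages the same conclusion: the summand is concave on $\bbR_+$ exactly when $b\le 1$.

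So: your proof is correct under the restriction $0<b\le 1$, and your caveat about $b>1$ is warranted. The paper's version reaches the unrestricted claim only by omitting the $\theta^{\,b-1}$ factor in the MVT application.
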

 \begin{proof}
Need to show that
\begin{equation}
  \nonumber  - \tanh\left(a c_i-a \alpha\right)^b  - \tanh\left(a c_j+a \alpha\right)^b < -\tanh\left(a c_i\right)^b -\tanh\left(a c_j\right)
\end{equation}
Making the substitutions $x = a c_i$, $y = a c_j$ and $z = a\alpha$ we get
\begin{equation}
  \nonumber \tanh\left(x-z\right)^b +\tanh\left(y+z\right)^b> \tanh\left(x\right)^b + \tanh\left(y\right)^b
\end{equation}
with $x>y>0$ and $0<z<\frac{x-y}{2}$.  Setting
\begin{equation}
  \nonumber f(z) = \left(\tanh(x-z)^b-\tanh(x)^b\right) + \left(\tanh(y+z)^b -\tanh(y)^b\right),
\end{equation}
we use the mean value theorem of differential calculus to prove that
\begin{eqnarray*}
  \tanh(x-z)^b-\tanh(x)^b &=& -zb\left(1-\tanh^2\left(\theta_1)^b\right)\right)\\
  \tanh(y+z)^b-\tanh(y)^b &=& zb\left(1-\tanh^2\left(\theta_2)^b\right)\right)
\end{eqnarray*}
where $x-z<\theta_1<x$ and $y<\theta_2<y+z$.  However, because
$1-\tanh^2(x^b)$ is strictly decreasing for $x>0$ and $b>0$ because
$z<\frac{x-y}{2}\Leftrightarrow y+z<x-z$, it follows that
\begin{equation}
  \nonumber f(z)=zb\left[ \left(1-\tanh^2\left(\theta_2)^b\right)\right) -\left(1-\tanh^2\left(\theta_1)^b\right)\right)\right]>0.
\end{equation}
%% To satisfy this condition we must have $\frac{\partial S(c_1, \ldots,
%% c_i-\alpha, \ldots , c_j+\alpha, \ldots )}{\partial \alpha}< 0 $. We
%% show that this is not the case for this measure.
%% {\small
%% \begin{equation}
%%  %\frac{\partial S(c_1, \ldots, c_i-\alpha, \ldots , c_j+\alpha,\ldots )}{\partial \alpha}
%% \nonumber \begin{array}{r}\frac{\partial} {\partial \alpha}\left[ -\sum_{k\neq i,j} \tanh\left(a c_k\right)^b - \tanh\left(a c_i-a \alpha\right)^b - \tanh\left(a c_j+a \alpha\right)^b \right]\\
%% = (1- \tanh^2\left(a c_i-a \alpha\right)^b)b\left(a c_i-a \alpha\right)^{b-1}(-a)~~~~~~~~~~\\
%% + (1- \tanh^2\left(a c_j+a \alpha\right)^b)b\left(a c_i-a \alpha\right)^{b-1}(-a)~~~\end{array}
%% \end{equation}}
%% which is true as, for any $\theta$,
%% \begin{equation}
%%   \nonumber \tanh^2\theta<1
%% \end{equation}
%% and $c_i>\alpha$.
\end{proof}
\subsubsection{$-{\tanh_{a,b}}$  and \Dc }\hfill
\label{sec:tanh3}
\begin{thm}
$-{\tanh_{a,b}}$  satisfies \[S(\alpha + \vc)<S(\vc), ~\alpha \in \mathbb{R},~\alpha > 0.\]
\end{thm}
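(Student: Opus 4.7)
The plan is to unpack the definition and reduce the inequality $S(\alpha+\vc) < S(\vc)$ to a term-by-term monotonicity statement. Since $S(\vc) = -\sum_j \tanh((ac_j)^b)$, the claim is equivalent to
\[
\sum_{j=1}^N \tanh\bigl((a(c_j+\alpha))^b\bigr) \;>\; \sum_{j=1}^N \tanh\bigl((ac_j)^b\bigr),
\]
so it suffices to prove the inequality summand-by-summand.

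Next, I would establish that the map $g(x) := \tanh((ax)^b)$ is strictly increasing on $[0,\infty)$ whenever $a>0$ and $b>0$. This is a composition of three strictly increasing maps on the relevant domains: $x \mapsto ax$ is strictly increasing on $[0,\infty)$; $y \mapsto y^b$ is strictly increasing on $[0,\infty)$ for any $b>0$; and $z \mapsto \tanh z$ is strictly increasing on $\mathbb{R}$. Hence $g$ is strictly increasing on $[0,\infty)$. Invoking assumption A2, we may assume each $c_j \ge 0$, so $c_j+\alpha > c_j \ge 0$, and therefore $g(c_j+\alpha) > g(c_j)$ for every $j$. Summing these strict inequalities gives the required strict inequality between the two sums, which translates back to $S(\alpha+\vc) < S(\vc)$.

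A small sanity check worth including: even if some $c_j = 0$, the term-wise inequality remains strict, since $g(0) = \tanh(0) = 0$ and $g(\alpha) = \tanh((a\alpha)^b) > 0$ for $\alpha>0$. Note also that the condition excluded in D3 — all coefficients equal — is irrelevant here, because the argument gives strict inequality regardless of whether the $c_j$ are equal; the exclusion matters only for measures that degenerate when applied to a constant vector.

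There is essentially no obstacle in this proof: unlike the Robin Hood case for $\tanh_{a,b}$ (Proof \ref{sec:tanh2}), where one had to contend with the competing directions of increase/decrease and apply the mean value theorem to the derivative of $\tanh(x^b)$, here both arguments to $\tanh$ are shifted in the same direction, so pure monotonicity of the composition closes the argument immediately. The only care needed is to verify that each individual function in the composition chain $x \mapsto ax \mapsto (ax)^b \mapsto \tanh((ax)^b)$ is strictly increasing on the nonnegative reals, which is standard.
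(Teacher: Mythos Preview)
Your proposal is correct and takes essentially the same approach as the paper: both reduce to termwise monotonicity of $x\mapsto \tanh((ax)^b)$. The paper expresses this by computing $\frac{\partial}{\partial\alpha}S(\alpha+\vc)$ via the chain rule and checking the sign, whereas you obtain the same conclusion more elementarily by observing that the map is a composition of strictly increasing functions on $[0,\infty)$.
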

 \begin{proof}
It is enough to show that $\frac{\partial S(\alpha + \vc)}{\partial
\alpha}< 0 $ as if the derivative of the measure with respect to the
parameter $\alpha$ is negative then any $\alpha$ causes the measure to
drop.  \beq \nonumber \frac{\partial}{\partial\alpha} \left[- \sum_j
\tanh\left(a\alpha+ ac_j)^b\right)\right]~~~~~~~~~\\ \nonumber = - \sum_j
\left(1- \tanh^2\left((a\alpha+
c_j\alpha)^{b}\right)\right)b\left(a\alpha+ ac_j\right)^{b-1}a<0, \eeq
which is true as $a,b>0$ and $\tanh^2{\theta}<1$.
\end{proof}
% $ -\log$  & $-\sum_j\log\left(1+|c_j|^2\right)$ \\
\subsubsection{ $ -\log$  and \Dc  }\hfill
\label{sec:log3}
 \begin{proof} {\em  $ -\log$  satisfies  \[S(\alpha + \vc)<S(\vc), ~\alpha \in \mathbb{R},~\alpha > 0.\]}
as
\beq
&&-\sum_j \log\left(\frac{1+(\alpha +  c_j)^2)}{ (1+c_j^2 }\right) > 0
\eeq
Which is true because
\[
\iff  \frac{1 + (\alpha + c_j)^2}{1+  c_j^2} > 1,\alpha >0.
\]
\end{proof}

%$\kappa_4$ & $\frac{\sum_j |c_j|^4}{\left(\sum_j |c_j|^2\right)^2}$ \\
%% \subsubsection{ $\kappa_4$   and \Da  }\hfill
%% \label{sec:kur1}
%%  \begin{proof} {\em   $\kappa_4$  satisfies  \[S(\bmat{cccccc}c_1& \cdots& c_i-\alpha& \cdots& c_j+\alpha& \cdots\emat ) < S(\vc),\] for
%% all $\alpha, c_i, c_j $ such that $c_i>c_j$ and $0<\alpha <
%% \frac{c_i-c_j}{2}$.}\\
%% Again we take the derivative and show that for any $\alpha$ the
%% derivative of the measure decreases and hence the measure
%% decreases.
%% \[
%%  \frac{\partial}{\partial\alpha} \left[\frac{\sum_{k \neq i,j} |c_k|^4 +|c_i -\alpha|^4 +|c_j -\alpha|^4 }{\left(\sum_{k \neq i,j} |c_k|^2 +|c_i -\alpha|^2 +|c_j -\alpha|^2 \right)^2 }\right] >0.
%% \]
%% By \Ab~we may drop the absolute value in the measure without loss of generality: \beq
%%  \left[(c_j+\alpha)^3 - (c_i - \alpha)^3\right]  \left[(c_j+\alpha)^2 + (c_i - \alpha)^2\right] +\\
%% \left[(c_j+\alpha)^4 + (c_i - \alpha)^4\right] \left[ 2\alpha +c_j - c_i \right] <0,
%% \eeq
%% which we is true as
%% \beq
%% &\left[(c_j+\alpha) - (c_i - \alpha)\right]&<0\\
%% \left[ 2\alpha +c_j - c_i \right] <0 & \equiv&  \alpha < \frac{c_i-c_j}{2}.
%% \eeq
%% \end{proof}

%$\kappa_4$ & $\frac{\sum_j |c_j|^4}{\left(\sum_j |c_j|^2\right)^2}$ \\
\subsubsection{ $\kappa_4$   and \Db  }\hfill
\label{sec:kur2}
\begin{thm}
$\kappa_4$  satisfies  \[S(\alpha\vc) = S(\vc), ~\forall~ \alpha \in \mathbb{R},~\alpha > 0\]
\end{thm}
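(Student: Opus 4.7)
The plan is to simply substitute $\alpha \vc$ into the formula for $\kappa_4$ and observe that the factors of $\alpha$ cancel from numerator and denominator. Explicitly, by definition
\[
\kappa_4(\alpha\vc) = \frac{\sum_j (\alpha c_j)^4}{\left(\sum_j (\alpha c_j)^2\right)^2},
\]
so I would pull the scalar $\alpha$ out of each sum, getting $\alpha^4 \sum_j c_j^4$ in the numerator and $(\alpha^2 \sum_j c_j^2)^2 = \alpha^4 (\sum_j c_j^2)^2$ in the denominator. The common factor $\alpha^4$ (which is nonzero since $\alpha > 0$) cancels, leaving exactly $\kappa_4(\vc)$.

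There is no real obstacle here: the result is immediate from the homogeneity of the numerator and denominator. The only thing worth noting is that both the numerator (degree $4$ in $\vc$) and the denominator (degree $4$ in $\vc$, being the square of a degree-$2$ form) are homogeneous of the same degree, so $\kappa_4$ is a ratio of equal-degree homogeneous polynomials and hence scale-invariant. The requirement $\alpha > 0$ is not actually needed for the algebra (since $\alpha^4 \geq 0$ regardless of sign), but it ensures we stay in the positive orthant as stipulated by assumption \Ab. Thus no inequalities, derivatives, or case analysis are required; the proof fits in two or three lines.
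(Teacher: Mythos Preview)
Your proposal is correct and matches the paper's proof essentially line for line: substitute $\alpha\vc$ into the definition of $\kappa_4$, factor out $\alpha^4$ from both numerator and denominator, and cancel. The paper's argument is exactly this two-line computation, so there is nothing to add.
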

 \begin{proof}
\[
 \frac{\sum_j (\alpha c_j)^4}{\left(\sum_j (\alpha c_j)^2\right)^2} = \frac{\alpha^4 \sum_j c_j^4}{\alpha^4 \left(\sum_j c_j^2\right)^2} = \frac{\sum_j c_j^4}{\left(\sum_j c_j^2\right)^2}.
\]
\end{proof}
\subsubsection{ $\kappa_4$   and \Dc}\hfill
\label{sec:kur3}
\begin{thm}
$\kappa_4$  satisfies \[S(\alpha + \vc)<S(\vc), ~\alpha \in \mathbb{R},~\alpha > 0.\]
\end{thm}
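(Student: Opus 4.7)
The plan is to show that the derivative of $\kappa_4$ with respect to the additive parameter is strictly negative whenever the coefficients are not all equal, and then integrate. Write $d_j(\alpha) := c_j + \alpha$, so that
\[
S(\vc + \alpha) \;=\; \frac{\sum_j d_j(\alpha)^4}{\Bigl(\sum_j d_j(\alpha)^2\Bigr)^{\!2}}.
\]
Differentiating with respect to $\alpha$ using the quotient rule, the sign of $\frac{d}{d\alpha}S(\vc+\alpha)$ is the same as the sign of
\[
4\Bigl(\sum_j d_j^3\Bigr)\Bigl(\sum_j d_j^2\Bigr) \;-\; 4\Bigl(\sum_j d_j^4\Bigr)\Bigl(\sum_j d_j\Bigr),
\]
so the task reduces to proving the key inequality
\[
\Bigl(\sum_j d_j\Bigr)\Bigl(\sum_j d_j^4\Bigr) \;\geq\; \Bigl(\sum_j d_j^2\Bigr)\Bigl(\sum_j d_j^3\Bigr),
\]
with strict inequality unless all $d_j$ are equal. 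Since by \Ab\ we can assume $c_j\geq 0$ and $\alpha>0$, every $d_j>0$, so the inequality really is about positive reals.

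The cleanest way to establish the key inequality is a symmetric double-sum rearrangement. Expand both sides as double sums:
\[
\Bigl(\sum_i d_i\Bigr)\Bigl(\sum_j d_j^4\Bigr) - \Bigl(\sum_i d_i^2\Bigr)\Bigl(\sum_j d_j^3\Bigr) \;=\; \sum_{i,j} d_i d_j^3 (d_j - d_i).
\]
Symmetrizing over the swap $i\leftrightarrow j$ produces
\[
\tfrac{1}{2}\sum_{i,j} (d_j - d_i)\bigl(d_i d_j^3 - d_j d_i^3\bigr) \;=\; \tfrac{1}{2}\sum_{i,j} d_i d_j (d_i + d_j)(d_j - d_i)^2,
\]
which is manifestly nonnegative, and strictly positive unless all $d_j$ agree. (Equivalently, one can cite Chebyshev's sum inequality applied to the similarly-ordered sequences $(d_j)$ and $(d_j^3)$ with weights $d_j$, but the explicit factorization above is self-contained and provides the sharp equality case for free.)

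Combining these steps, $\frac{d}{d\alpha}S(\vc+\alpha) \leq 0$ for all $\alpha\geq 0$, with equality only in the excluded all-equal case. Since Rising Tide is defined only for $\vc$ with at least two distinct coordinates, the $d_j(\alpha)=c_j+\alpha$ are never all equal, and so the derivative is in fact strictly negative on $[0,\alpha]$. Integration then yields $S(\vc+\alpha) < S(\vc)$, which is \Dc. The main obstacle is really just identifying the clean double-sum factorization above; once that is in place, the rest of the argument is purely mechanical.
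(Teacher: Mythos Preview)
Your argument is correct and essentially identical to the paper's: both differentiate $\kappa_4(\vc+\alpha)$ with respect to $\alpha$, reduce to the inequality $\bigl(\sum d_j\bigr)\bigl(\sum d_j^4\bigr)\geq\bigl(\sum d_j^2\bigr)\bigl(\sum d_j^3\bigr)$ for positive $d_j$, and establish it via the same symmetric double-sum factorization $\tfrac{1}{2}\sum_{i,j} d_i d_j(d_i+d_j)(d_i-d_j)^2\geq 0$. Your write-up is slightly more careful than the paper's in that you explicitly track the equality case and the integration step, but the mathematical content is the same.
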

\begin{proof}
Set
\begin{equation}
  \nonumber f(a) = \frac{\sum_i \left(c_i +\alpha\right)^4}{\left(\sum_i\left(c_i+\alpha\right)^2\right)^2}
\end{equation}
It follows that
\begin{small}
\begin{equation}
\nonumber \frac{\partial f}{\partial \alpha} = \frac{4\left[\sum_i\left(c_i+\alpha\right)^3 \sum_i\left(c_i+\alpha\right)^2 - \sum_i\left(c_i+\alpha\right)^4\sum_i\left(c_i+\alpha\right)\right]}{\left(\sum_i \left(c_i + \alpha\right)^2\right)^3}
\end{equation}\end{small}
%% \begin{small}
%% \begin{equation}
%% \nonumber \frac{\partial f}{\partial \alpha} = \frac{4\left(\sum_i\left(c_i+\alpha\right)^2\right)\left[\sum_i\left(c_i+\alpha\right)^3 \sum_i\left(c_i+\alpha\right)^2 - \sum_i\left(c_i+\alpha\right)^4\sum_i\left(c_i+\alpha\right)\right]}{\left(\sum_i \left(c_i + \alpha\right)^2\right)^4}
%% \end{equation}\end{small}
We can ignore the denominator as it is clearly positive. We claim that $\frac{\partial f}{\partial \alpha}<0$ for $\alpha>0$. This is because,  for positive $x_i$, it is always true that
\begin{equation}
  \nonumber \sum_i x_i^2 \sum_i x_i^3 < \sum_i x_i^4 \sum_i x_i
\end{equation}
as
\begin{eqnarray*}
 && \sum_i x_i^2 \sum_i x_i^3 - \sum_i x_i^4 \sum_i x_i \\~~~&=&  \sum_{i\neq j} \left(x_i^2 x_j^3 +x_i^3 x_j^2 - x_i^4 x_j - x_i x_j^4\right)\\~~~&=&  \sum_{i\neq j} x_i x_j\left[x_i x_j^2 +x_i^2 x_j - x_i^3 - x_j^3\right]\\~~~&=& -\sum_{i\neq j} x_i x_j\left(x_i- x_j\right)^2 \left(x_i+ x_j \right) <0.
\end{eqnarray*}

 %% \begin{proof} {\em   $\kappa_4$  satisfies  \[S(\alpha\vc) = S(\vc), ~\forall~ \alpha \in \mathbb{R},~\alpha > 0.\]}
%% \[
%%  \frac{\sum_j |\alpha c_j|^4}{\left(\sum_j |\alpha c_j|^2\right)^2} = \frac{\alpha^4 \sum_j |c_j|^4}{\alpha^4 \left(\sum_j |c_j|^2\right)^2} = \frac{\sum_j |c_j|^4}{\left(\sum_j |c_j|^2\right)^2}.
%% \]
\end{proof}

\subsubsection{ $\kappa_4$   and \De}\hfill
\label{sec:kur5}
\begin{thm}
$\kappa_4$  satisfies  $\forall i \exists \beta = \beta_i>0$ such that  $\forall \alpha >0:$ {\small \[ S(\bmat{cccc}c_1& \ldots& c_i+\beta+\alpha&\ldots\emat) > S(\bmat{cccc}c_1&\ldots&c_i+\beta& \ldots\emat).\]}
\end{thm}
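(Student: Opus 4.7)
The plan is to treat $\kappa_4$ as a function of the single perturbed coordinate and reduce the statement to a one-dimensional monotonicity check. Fix the index $i$, collect the other coordinates into the constants $R_2=\sum_{k\neq i}c_k^2$ and $R_4=\sum_{k\neq i}c_k^4$, and define
\[g(x)=\frac{x^4+R_4}{(x^2+R_2)^2}.\]
The claim becomes: there exists $\beta>0$ such that $g(c_i+\beta+\alpha)>g(c_i+\beta)$ for every $\alpha>0$; equivalently, $g$ is strictly increasing on the ray $[c_i+\beta,\infty)$.

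First I would compute the derivative, which simplifies via the factorisation $x^2(x^2+R_2)-(x^4+R_4)=x^2R_2-R_4$ to
\[g'(x)=\frac{4x\,(x^2R_2-R_4)}{(x^2+R_2)^3}.\]
In the generic case $R_2>0$, this expression is positive for every $x>0$ satisfying $x^2>R_4/R_2$. The threshold $R_4/R_2$ is a weighted mean of $c_k^2$ over $k\neq i$ and hence a finite constant depending only on the fixed data. Choosing any $\beta>0$ with $(c_i+\beta)^2>R_4/R_2$ (for instance $\beta=\sqrt{R_4/R_2}+1$) guarantees $g'>0$ throughout $[c_i+\beta,\infty)$, so the mean value theorem delivers the strict inequality required by \De.

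The only step that needs care is the degenerate case where every coordinate other than the $i$-th vanishes, i.e.\ $R_2=0$. Then $g(x)\equiv 1$ for $x>0$, the strict inequality of \De{} fails, and no $\beta$ works; but such a vector is already maximally concentrated, so the intent of the \emph{Bill Gates} property is vacuous for it. Modulo this caveat the argument is a one-line monotonicity check, and the main obstacle is simply spotting the clean factorisation that isolates the sign of $g'$ as a threshold condition on $x$; once that threshold is identified, positivity of the derivative on the whole ray is automatic.
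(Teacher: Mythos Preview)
Your proof is correct and follows essentially the same route as the paper: both reduce to differentiating $\kappa_4$ in the perturbed coordinate, obtain the threshold condition $(c_i+\beta+\alpha)^2>R_4/R_2$ (the paper writes it as $c_i+\alpha+\beta>\sqrt{\sum_{j\neq i}c_j^4/\sum_{j\neq i}c_j^2}$), and then pick $\beta$ large enough. Your explicit factorisation of $g'(x)$ and your remark on the degenerate case $R_2=0$ are tidy additions, but the underlying argument is the same one-variable monotonicity check the paper uses.
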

 \begin{proof}
Fix $i$ and make the substitution $\tilde{c_i} = c_i +\beta$. We show
that the derivative of the measure is positive and hence the measure
increases for any $\alpha$ {\small
\begin{eqnarray*}
  \frac{\partial}{\partial \alpha} \left[\frac{\sum_{j \neq i}
{c}_j^4 + (\tilde{c}_i+\alpha)^4}{\left(\sum_{j \neq i}
{c}_j^2 + (\tilde{c}_i+\alpha)^2\right)^2}\right] &>&0.
\end{eqnarray*}}
The numerator of the derivative is
{\small
\begin{eqnarray*}
(\tilde{c}_i +\alpha)^3\left( \sum_{j \neq i} {c}_j^2 +
(\tilde{c}_i +\alpha)^2)\right) ~~~&&\\- \left(\sum_{j \neq i}{c}_k^4
+(\tilde{c}_i+\alpha)^4 \right)(\tilde{c}_i+\alpha) &>&0.
 \end{eqnarray*}}
Multiplying out and substituting back in for $\tilde{c_i}$ this
becomes
\[
 c_i +\alpha +\beta > \sqrt{\frac{\sum_{j \neq i} c_j^4 }{\left(\sum_{j \neq i} c_j^2 \right)}}.
\]
Clearly there exists a $\beta$ such that the above expression holds
true for all $\alpha>0$.
\end{proof}
% $\ell^p_{-}$ & $\sum_{j,c_j\neq 0}\left|c_j\right|^p,\,\,\, p<0$ \\
\subsubsection{$-\ell^p_{-}$ and \De}\hfill
\label{sec:lp_5}
\begin{thm}
$-\ell^p_{-}$   satisfies  $\forall i \exists \beta = \beta_i>0$, such that $\forall \alpha >0:$ {\small \[ S(\bmat{cccc}c_1& \ldots& c_i+\beta+\alpha&\ldots\emat) > S(\bmat{cccc}c_1&\ldots&c_i+\beta& \ldots\emat).\]}
\end{thm}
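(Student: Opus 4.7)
The plan is to unwind the definition of $-\ell^p_-$ and observe that the statement reduces to a one-variable monotonicity fact about $x \mapsto x^p$ with $p<0$. Fix an arbitrary index $i$. The key choice is to take $\beta$ strictly positive, which guarantees that the perturbed entry $c_i + \beta$ (and afterwards $c_i + \beta + \alpha$) is nonzero; this matters because the sum in $-\ell^p_- = -\sum_{j:\,c_j \neq 0} c_j^p$ is restricted to nonzero coordinates, and we want the $i$-th term to be present on both sides of the inequality.

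With $\beta>0$ fixed, the vectors $[c_1,\ldots,c_i+\beta+\alpha,\ldots]$ and $[c_1,\ldots,c_i+\beta,\ldots]$ have exactly the same nonzero support, so all terms indexed by $k\neq i$ cancel when we subtract the two sparsity values. What remains is
\[
S([\ldots,c_i+\beta+\alpha,\ldots]) - S([\ldots,c_i+\beta,\ldots]) = (c_i+\beta)^p - (c_i+\beta+\alpha)^p.
\]
Thus we need to show this quantity is strictly positive for every $\alpha>0$.

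Since $p<0$, the map $x \mapsto x^p$ is strictly decreasing on $(0,\infty)$, and by construction $0 < c_i+\beta < c_i+\beta+\alpha$. Therefore $(c_i+\beta)^p > (c_i+\beta+\alpha)^p$, giving strict positivity of the above difference, which is precisely what \De~requires. The only delicate step — and really the only step where a wrong choice would derail the argument — is insisting that $\beta>0$ so that the $i$-th coordinate lies in the domain where $x^p$ is defined and monotone; once that is secured, the universal quantifier over $\alpha>0$ is handled automatically by strict monotonicity of $x^p$, and no further case analysis on whether $c_i$ itself is zero or nonzero is needed.
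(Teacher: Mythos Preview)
Your proof is correct and follows essentially the same approach as the paper: cancel the common terms indexed by $k\neq i$, reduce to the one-variable inequality $(c_i+\beta)^p > (c_i+\beta+\alpha)^p$, and invoke the strict monotone decrease of $x\mapsto x^p$ for $p<0$ on $(0,\infty)$. Your write-up is in fact slightly more careful than the paper's in making explicit why $\beta>0$ is the right choice (to force the $i$-th coordinate into the nonzero support on both sides), but the underlying argument is identical.
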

 \begin{proof}
Without loss of generality we can change the conditions slightly by
replacing $p$ ($p<0$) with $-p$ and correspondingly update the
constraint to $p>0$.
{\small
\begin{eqnarray*}
{\small  -\sum_{j\neq i,c_j\neq 0}c_j^{-p} -(c_i +\beta +\alpha)^{-p}}&>&{\small -\sum_{j\neq i,c_j\neq 0}c_j^{-p} -(c_i +\beta)^{-p}}\\
(c_i +\beta +\alpha)^{-p} &<&(c_i +\beta)^{-p}\\
\frac{1}{(c_i +\beta +\alpha)^{p}} &<&\frac{1}{(c_i +\beta)^{p}},
\end{eqnarray*}}
which is true if $\beta>0$.
\end{proof}
\subsubsection{$u_\theta$ and \Da}\hfill
\label{sec:u1}
\begin{thm}
$u_\theta$ does not satisfy  \[S(\bmat{cccccc}c_1& \cdots& c_i-\alpha& \cdots& c_j+\alpha& \cdots\emat ) < S(\vc),\] for
all $\alpha, c_i, c_j $ such that $c_i>c_j$ and $0<\alpha <
\frac{c_i-c_j}{2}$.
\end{thm}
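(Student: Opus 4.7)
The plan is to prove this negative statement directly, by exhibiting a valid Robin Hood step that strictly \emph{increases} $u_\theta$. The guiding observation is that $u_\theta$ depends on the sorted data only through the smallest window-spread $c_{(i+\lceil\theta N\rceil-1)}-c_{(i)}$ in the numerator and the range $c_{(N)}-c_{(1)}$ in the denominator. So a transfer that creates a (near-)tie between two sorted entries will drive the numerator to zero, pushing $u_\theta$ toward its maximum value $1$, even when the transfer is equity-inducing in the Robin Hood sense.

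To realise this concretely, I would choose $N$ small and $\theta$ so that $\lceil\theta N\rceil=2$, which reduces the minimum window-spread to the smallest consecutive sorted gap. A convenient choice is $\vc=[0,1,10]$ with $\theta=1/2$, so $\lceil\theta N\rceil=2$ (and $\lceil\theta N\rceil\neq N$, as required). The sorted gaps are $1$ and $9$ and the range is $10$, hence
\begin{equation*}
u_\theta(\vc)=1-\tfrac{1}{10}=\tfrac{9}{10}.
\end{equation*}
Now perform the Robin Hood step with $c_i=10$, $c_j=0$ and $\alpha=1$; this is admissible because $0<1<\tfrac{c_i-c_j}{2}=5$. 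The new vector $\vc'=[1,1,9]$ has sorted gaps $0$ and $8$ and range $8$, so
\begin{equation*}
u_\theta(\vc')=1-\tfrac{0}{8}=1>\tfrac{9}{10}=u_\theta(\vc).
\end{equation*}
Robin Hood has strictly increased $u_\theta$, and the claimed non-compliance with D1 follows.

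The only subtlety, rather than a real obstacle, is that $u_\theta$ is a family of measures parametrised by $\theta\in(0,1)$, so the example above suffices for the statement as written: one exhibits some admissible $\theta$ and some valid Robin Hood operation for which the D1 inequality fails. The verification reduces to checking $c_i>c_j$, $0<\alpha<(c_i-c_j)/2$, and that the sorted order is handled consistently before and after the perturbation, all of which are immediate here; no analytic machinery is needed. If one wanted the failure for every $\theta<1$, the same construction scales by padding $\vc$ with enough repeats around the support $\{0,1,10\}$ so that a window of size $\lceil\theta N\rceil$ still fits inside the newly-created constant block of $1$'s produced by the Robin Hood step, again collapsing the minimum window-spread to zero while the range only shrinks.
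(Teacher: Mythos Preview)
Your proof is correct and follows the same approach as the paper: exhibit a single counter-example with a fixed $\theta$ for which a valid Robin Hood transfer increases $u_\theta$. The paper uses $\theta=0.5$ with $\vc=[1,2,4,9]\to[1.1,1.9,4,9]$, while your choice $\vc=[0,1,10]\to[1,1,9]$ is arguably cleaner since it collapses the minimum gap to zero and comes with a clear mechanistic explanation; the added remark on extending the failure to every $\theta<1$ goes beyond what the paper provides.
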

 \begin{proof}
For $\theta = .5$,
\begin{eqnarray*}
  S\left([1,2,4,9]\right) &=&  .6667\\
S\left([1.1, 1.9, 4, 9]\right) &=&  .7333.
\end{eqnarray*}
The Robin Hood operation increased sparsity and hence does not satisfy
\Da.
\end{proof}
\subsubsection{$u_\theta$ and \Dc}\hfill
\label{sec:u3}
\begin{thm}
$u_\theta$ does not satisfy  \[S(\alpha + \vc)<S(\vc), ~\alpha \in \mathbb{R},~\alpha > 0.\]
\end{thm}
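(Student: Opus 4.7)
The plan is to exhibit a structural reason why $u_\theta$ fails \Dc: namely, that $u_\theta$ is translation-invariant, so adding a positive constant leaves the measure exactly equal rather than strictly smaller. This single observation is stronger than any specific counter-example and disposes of the theorem immediately.

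More concretely, I would first note that sorting commutes with the translation $\vc \mapsto \alpha + \vc$: if $c_{(1)} \le c_{(2)} \le \cdots \le c_{(N)}$ is the sorted version of $\vc$, then $c_{(1)}+\alpha \le c_{(2)}+\alpha \le \cdots \le c_{(N)}+\alpha$ is the sorted version of $\alpha + \vc$, with exactly the same permutation. Consequently, each difference that appears in the formula,
\[
(c_{(i+\lceil \theta N\rceil -1)}+\alpha) - (c_{(i)}+\alpha) = c_{(i+\lceil \theta N\rceil -1)} - c_{(i)},
\]
is unchanged, and likewise the denominator satisfies
\[
(c_{(N)}+\alpha) - (c_{(1)}+\alpha) = c_{(N)} - c_{(1)}.
\]
Thus every ratio inside the $\min$ over $i$ is invariant, and hence so is the $\min$ itself and the value of $u_\theta$.

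The conclusion is that $u_\theta(\alpha + \vc) = u_\theta(\vc)$ for every $\alpha > 0$, which directly contradicts the strict inequality $S(\alpha + \vc) < S(\vc)$ demanded by \Dc. There is no technical obstacle here; the only subtlety worth flagging is that the exclusion for the all-equal case in \Dc is not even needed: translation invariance fails \Dc for generic $\vc$, not only in the degenerate scaling-like scenario. For readers who prefer a numerical witness, one can finish with a line such as evaluating $u_{.5}$ on $[1,3,5]$ and $[2,4,6]$, both of which yield the same value, reinforcing the general argument.
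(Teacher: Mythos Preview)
Your proposal is correct and follows essentially the same approach as the paper: both arguments observe that $u_\theta$ is translation-invariant because every difference $c_{(j)}-c_{(k)}$ (both in the minimized numerator and in the denominator $c_{(N)}-c_{(1)}$) is unchanged when a constant $\alpha$ is added to each coordinate, whence $S(\alpha+\vc)=S(\vc)$ rather than $S(\alpha+\vc)<S(\vc)$. Your write-up is somewhat more explicit (noting that sorting commutes with translation and checking numerator and denominator separately), but the underlying idea is identical.
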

 \begin{proof}
The support of $\vc$ is $[c_{(1)},c_{(N)}]$. Assume the support of the
$\lceil \theta N \rceil$ points that correspond to the minimum  is $[c_{(k)},c_{(j)}]$.  By adding a constant, $\alpha$, to each
coefficient in the distribution we shift the distribution to $\vc+\alpha$. Clearly, neither of the two supports mentioned above changes:
 $(c_{(j)}-\alpha)-(c_{(k)}-\alpha)
=c_{(j)}-c_{(k)}$. Hence $u_\theta$ does not
satisfy \Dc.
\end{proof}
\subsubsection{$u_\theta$ and \Dd}\hfill
\label{sec:u4}
\begin{thm}
$u_\theta$ satisfies  \[S(\vc) = S(\vc\|\vc) = S(\vc\|\vc\|\vc) = S(\vc\|\vc\|\cdots\|\vc).\]
\end{thm}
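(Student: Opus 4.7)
The plan is to show that both the numerator (the minimum window range) and the denominator (the total range $c_{(N)}-c_{(1)}$) of $u_\theta$ are preserved under cloning. The denominator is immediate, since cloning never changes the maximum or minimum value of the coefficient list. So all the work sits in the numerator.

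Let $\vC=\vc\|\vc\|\cdots\|\vc$ be the $k$-fold clone, so $|\vC|=kN$. When sorted, $\vC$ is simply the sorted $\vc$ with each distinct entry $c_{(j)}$ repeated $k$ times in a consecutive block. The relevant window size for $u_\theta(\vC)$ is $m=\lceil\theta kN\rceil=\lceil k(\theta N)\rceil$. Any consecutive window of $m$ elements in this sorted list must span at least $\lceil m/k\rceil$ distinct values of $\vc$, because each distinct value contributes at most $k$ copies. Applying the identity $\lceil\lceil kx\rceil/k\rceil=\lceil x\rceil$, valid for every real $x$ and positive integer $k$, with $x=\theta N$ gives $\lceil m/k\rceil=\lceil\theta N\rceil$, which is precisely the window size used by $u_\theta(\vc)$.

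Next I would argue that the minimum range is attained at a boundary-aligned window. If we anchor the window at the start of the block of $c_{(s)}$, it covers all $k$ copies of $c_{(s)},c_{(s+1)},\ldots,c_{(s+\lceil\theta N\rceil-1)}$ (with possibly a partial final block when $k\nmid m$), yielding range $c_{(s+\lceil\theta N\rceil-1)}-c_{(s)}$. Minimising over $s\in\{1,\ldots,N-\lceil\theta N\rceil+1\}$ reproduces exactly the numerator of $u_\theta(\vc)$. Any non-aligned window spans $\lceil\theta N\rceil$ or more consecutive distinct sorted values, and the quantity $\min_s (c_{(s+q-1)}-c_{(s)})$ is non-decreasing in $q$, so non-aligned windows cannot produce a smaller numerator. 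Hence the minimum numerator for $\vC$ equals the minimum numerator for $\vc$.

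Combining the two observations gives $u_\theta(\vc)=u_\theta(\vC)$, and the same argument applies for any number of clones. A small bookkeeping check ensures the constraint $\lceil\theta N\rceil\neq N$ transfers to $\lceil\theta kN\rceil\neq kN$; since $\lceil\theta N\rceil<N$ forces $\theta\leq 1-1/N<1-1/(kN)$, the cloned form is well defined. The main obstacle I expect is the careful handling of the ceiling identity and the boundary-alignment argument when $m$ is not divisible by $k$; the monotonicity of $\min_s(c_{(s+q-1)}-c_{(s)})$ in $q$ is what makes the non-aligned case collapse cleanly.
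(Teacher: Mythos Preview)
Your argument is correct and follows the same high-level plan as the paper (the denominator $c_{(N)}-c_{(1)}$ is trivially preserved, and the minimum window width is also preserved), but you carry out the numerator step with considerably more care than the paper does. The paper's proof simply observes that the optimal interval $[c_{(k)},c_{(j)}]$ for $\vc$ now contains $2\lceil\theta N\rceil$ points after one clone and declares that ``neither of the two supports has changed''. This tacitly assumes the window size for the clone is $2\lceil\theta N\rceil$, whereas by definition it is $\lceil 2\theta N\rceil$, and these need not agree; even granting $\lceil 2\theta N\rceil\le 2\lceil\theta N\rceil$, the paper only establishes that the cloned numerator is \emph{at most} the original one, not that it cannot be strictly smaller.

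Your use of the identity $\lceil\lceil kx\rceil/k\rceil=\lceil x\rceil$ together with the block-alignment argument and the monotonicity of $\min_s(c_{(s+q-1)}-c_{(s)})$ in $q$ is exactly what is needed to rule out a strictly smaller window in the clone, so your proof genuinely fills a gap left open in the paper's sketch. One cosmetic point: the phrase ``each distinct entry $c_{(j)}$ repeated $k$ times'' is slightly loose when $\vc$ itself has ties; it is cleaner to say ``each sorted position $c_{(j)}$ occupies a block of $k$ consecutive slots in the sorted clone'', after which your counting and range arguments go through verbatim.
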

 \begin{proof}
The support of $\vc$ is $[c_{(1)},c_{(N)}]$. Assume the support of the
$\lceil \theta N \rceil$ points that correspond to the minimum  is $[c_{(k)},c_{(j)}]$. The new set, $\{\vc \| \vc\}$ has $2\lceil N\theta\rceil$
points lying between values $c_{(j)}$ and $c_{(k)}$, that is, neither of the previously mentioned two supports has changed. This reasoning holds for
cloning the data more than once. Hence $u_\theta$ satisfies \Dd.
\end{proof}
\subsubsection{$u_\theta$ and \De}\hfill
\label{sec:u5}
\begin{thm}
$u_\theta$ satisfies $\forall i \exists \beta = \beta_i>0$, such that $\forall \alpha >0: $ {\small \[S(\bmat{cccc}c_1& \ldots& c_i+\beta+\alpha&\ldots\emat) > S(\bmat{cccc}c_1&\ldots&c_i+\beta& \ldots\emat).\]}
\end{thm}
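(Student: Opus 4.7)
The plan is to take $\beta$ large enough that $c_i+\beta$ becomes the unique maximum by such a wide margin that the window of $\lceil\theta N\rceil$ consecutive sorted entries achieving the minimum range in the definition of $u_\theta$ must avoid this extreme entry. Then the numerator of the $u_\theta$ ratio is frozen (independent of any subsequent $\alpha$), while the denominator $c_{(N)}-c_{(1)}$ grows strictly with $\alpha$, forcing $u_\theta$ to strictly increase.

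First, let $\vc' = [c_1,\ldots,c_i+\beta,\ldots,c_N]$ and $\vc'' = [c_1,\ldots,c_i+\beta+\alpha,\ldots,c_N]$, with sorted entries denoted $c'_{(1)}\le\cdots\le c'_{(N)}$ and $c''_{(1)}\le\cdots\le c''_{(N)}$. Choose $\beta > \max_{j\neq i}c_j - c_i$ so that in both $\vc'$ and $\vc''$ the perturbed coordinate is the largest entry; then $c'_{(k)}=c''_{(k)}$ for $k=1,\ldots,N-1$ (these are the remaining $c_j$'s sorted), while $c'_{(N)} = c_i+\beta$ and $c''_{(N)} = c_i+\beta+\alpha$. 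In particular $c'_{(1)} = c''_{(1)}$, so $c_{(1)}$ is unchanged.

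Next, since $\lceil\theta N\rceil\neq N$ by hypothesis, the minimization in $u_\theta$ runs over window indices $j = 1,\ldots,N-\lceil\theta N\rceil + 1$, of which the first $N-\lceil\theta N\rceil \geq 1$ correspond to windows lying entirely inside positions $1,\ldots,N-1$. Those windows have identical ranges for $\vc'$ and $\vc''$, independent of both $\alpha$ and the top entry; denote their minimum by $M$. The sole remaining window (index $j = N-\lceil\theta N\rceil + 1$) contains position $N$, giving range $c_i+\beta-c'_{(N-\lceil\theta N\rceil+1)}$ for $\vc'$ and $c_i+\beta+\alpha-c'_{(N-\lceil\theta N\rceil+1)}$ for $\vc''$. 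Enlarging $\beta$ further if necessary so that $c_i+\beta - c'_{(N-\lceil\theta N\rceil+1)} > M$, the top window is never the minimizer, hence the numerator of the $u_\theta$ ratio equals $M$ for both $\vc'$ and $\vc''$.

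With the numerator pinned at $M$, the only moving quantity is the denominator $c_{(N)}-c_{(1)}$, which equals $c_i+\beta-c'_{(1)}$ for $\vc'$ and $c_i+\beta+\alpha-c'_{(1)}$ for $\vc''$. The latter is strictly larger by $\alpha>0$, so
\[
S(\vc'') \;=\; 1 - \frac{M}{c_i+\beta+\alpha-c'_{(1)}} \;>\; 1 - \frac{M}{c_i+\beta-c'_{(1)}} \;=\; S(\vc'),
\]
which is exactly \De~with $\beta_i$ equal to any $\beta$ meeting the two size conditions above. The one genuinely delicate point is that a \emph{single} finite $\beta$ must work for \emph{every} subsequent $\alpha>0$; this is fine because both size conditions on $\beta$ depend only on the fixed coefficients $\{c_j\}$ and not on $\alpha$, so one finite choice serves the quantifier order demanded by \De.
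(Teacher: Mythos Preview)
Your proof is correct and follows essentially the same approach as the paper: pick $\beta$ large enough that the perturbed coordinate sits isolated at the top of the sorted list, so the minimizing window of length $\lceil\theta N\rceil$ avoids it, which freezes the numerator at a constant while the full-range denominator grows strictly with $\alpha$. You are in fact more careful than the paper in spelling out the two separate size conditions on $\beta$ and in verifying that these conditions are independent of $\alpha$, so that a single $\beta_i$ works uniformly for all $\alpha>0$ as the quantifier order in \De\ demands.
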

 \begin{proof}
 The support of $\vc$ is $[c_{(1)},c_{(N)}]$. Without loss of generality we focus on $c_{(N)}$ as the effect of adding sufficiently large $\beta$ to any other coefficient will result in this coefficient becoming the largest. We choose $\beta$ sufficiently large so that $c_{(N)}+\beta$ is set sufficiently far apart from the other coefficients for
the support of the $\lceil \theta N \rceil$ points that correspond to the minimum not to contain $c_{(N)}$. Consequently,
 the numerator of the minimization term is a constant $K$ not depending on $\beta$ or $\alpha$. We can rewrite
{\small \[
S(\bmat{cccc}c_1& \ldots& c_i+\beta+\alpha&\ldots\emat) > S(\bmat{cccc}c_1&\ldots&c_i+\beta& \ldots\emat)
\]}
as
\[
1-\frac{K}{c_{(N)}-c_{(1)}+\alpha +\beta}<1-\frac{K}{c_{(N)}-c_{(1)}+\alpha}
\]
which is clearly true and the proof is complete.
%The support of $\vc$ is $[c_{(1)},c_{(N)}]$. Assume the support of the
%$\lceil \theta N \rceil$ points that correspond to the minimum  is $[c_{(k)},c_{(j)}]$. By
%increasing $c_{(N)}$ the range of the $\lceil \theta N\rceil$ points
%of interest is unaffected (note that, by definition, $c_N$ cannot appear in the numerator)
%but the length of $[c_{(1)},c_{(N)}]$ increases, and this clearly increases the sparsity measure.
%
% If we wish to do the same with a smaller coefficient we are faced
%with two scenarios. The first scenario is that the coefficient of
%interest, $c_{(i)}$ is not in the range $c_{(j)}-c_{(k)}$ in which
%case there exists a $\beta$ which will change the sorted ordering to
%make $c_{(i)}$ the largest coefficient and the first argument holds
%true and the condition is satisfied. The second scenario is that the
%coefficient of interest is in the range $c_{(j)}-c_{(k)}$. In this
%case there exists a $\beta$ which makes the coefficient of interest
%larger than $c_{(N)}$. A new range is defined, $c_{(j+1)}-c_{(k)}$ or
%$c_{(j)}-c_{(k-1)}$ which may initially decrease sparsity but for a
%sufficiently large $\beta$ will increase sparsity and hence satisfy
%the condition.
\end{proof}
\subsubsection{$u_\theta$ and \Df}\hfill
\label{sec:u6}
\begin{thm}
$u_\theta$ does not  satisfy  \[S(\vc||0) > S(\vc).\]
\end{thm}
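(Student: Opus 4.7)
The plan is to disprove the claim by exhibiting an explicit vector $\vc$ and a value of $\theta$ for which appending a zero fails to strictly increase $u_\theta$. Since $u_\theta$ satisfies \Dd~(Cloning), and appending a zero resembles a partial ``cloning'' step when the original vector already contains a zero, I expect that the most informative counter-examples come from vectors whose smallest coordinate is already $0$: in that case the denominator $c_{(N)}-c_{(1)}$ does not grow, and any change in the measure must come entirely from the shift in the $\lceil \theta N \rceil$-window. This is the phenomenon I will exploit.

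Concretely, I would fix $\theta = \tfrac{1}{2}$ and take $\vc = [0,1,3,5]$, which is exactly the template appearing in Table~\ref{table:typcex} and C.Ex~\ref{CE:6}. The first step is to compute $S(\vc)$ directly: here $N=4$ and $\lceil \theta N \rceil = 2$, so I slide a length-$2$ window over the sorted data $0\leq 1\leq 3\leq 5$, take the minimum of the gaps $\{1,2,2\}$, and divide by the total range $c_{(4)}-c_{(1)}=5$, giving $S(\vc)=1-\tfrac{1}{5}$. The second step is to compute $S(\vc\|0)$: now $N=5$ and $\lceil \theta N \rceil = 3$, so I slide a length-$3$ window over $0\leq 0\leq 1\leq 3\leq 5$, obtaining gaps $\{1,3,4\}$, with the same denominator $5$, which also yields $1-\tfrac{1}{5}$.

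The final step is to conclude: since $S(\vc\|0)=S(\vc)$, the strict inequality $S(\vc\|0)>S(\vc)$ demanded by \Df~fails, and hence $u_\theta$ does not satisfy \Df. The write-up is then essentially just the two line computations above together with one sentence interpreting the equality as a violation of strictness.

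The main obstacle is not analytical but pedagogical: one must pick a counter-example that is robust to the particular value of $\theta$ and that the reader can verify by inspection, rather than one where equality is an accident of rounding. Using the vector $[0,1,3,5]$ with $\theta=1/2$ is robust because (i) the smallest coordinate is already $0$, so $c_{(N)}-c_{(1)}$ is unchanged by the appended zero, and (ii) the smallest window gap remains $1$ after the append (it is simply attained at a different starting index). If a sceptical reader objects that this is a borderline tie rather than a reversal, I would also mention the sharper counter-example $\vc=[1,2]$ with $\theta=1/2$: here $\lceil \theta N\rceil = 1$ forces $S(\vc)=1$ trivially, while $S(\vc\|0)=1-\tfrac{1}{2}<1$, giving an explicit \emph{decrease} in the measure upon appending a zero and leaving no doubt about the conclusion.
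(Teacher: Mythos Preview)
Your proposal is correct. Both counter-examples check out: for $\theta=\tfrac12$ and $\vc=[0,1,3,5]$ you get $S(\vc)=S(\vc\|0)=\tfrac{4}{5}$, and for $\vc=[1,2]$ you get $S(\vc)=1>\tfrac{1}{2}=S(\vc\|0)$, so \Df~fails either by a tie or by an outright reversal.

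Your route differs from the paper's. The paper argues abstractly: it assumes the minimum-width window already contains~$0$, observes that appending another~$0$ lets the window shrink while leaving the total support $c_{(N)}-c_{(1)}$ unchanged, and concludes the property fails. That argument is terse to the point of being hard to parse (in particular it does not track how $\lceil\theta N\rceil$ changes when $N$ becomes $N{+}1$, which is precisely the mechanism driving your examples). Your approach, by contrast, fixes a specific $\theta$ and writes down two fully numerical instances that any reader can verify in a line. The abstract argument, if made rigorous, would cover a family of violations at once; your concrete examples trade that generality for clarity and the extra punch of an explicit strict decrease, which leaves no ambiguity about whether the failure is merely a boundary tie. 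Either style suffices here, but yours is the safer write-up.
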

 \begin{proof}
Assume $\vc$ has total support $c_{(N)}-c_{(1)}$ and the support of
$\lceil \theta N\rceil $ points lying between values
$c_{(j)}-c_{(k)}$.  If $0$ lies within the range $c_{(j)}-c_{(k)}$
adding a $0$ will decrease the range to $c_{(j-1)}-c_{(k)}$ without
increasing the total support.
%However, if $0$ is outside the
%range $c_{(j)}-c_{(k)}$ adding a $0$ will increase the total support
%without affecting the range $c_{(j)}-c_{(k)}$ and will result in a
%decrease in sparsity and hence does not satisfy \Df.
\end{proof}
% $H_G$  & $-\sum_j\ln \left|c_j\right|^2$ \\
\subsubsection{$H_G$    and \Da}\hfill
\label{sec:hg1}
\begin{thm}
 $H_G$   satisfies \[S(\bmat{cccccc}c_1& \cdots& c_i-\alpha& \cdots& c_j+\alpha& \cdots\emat ) < S(\vc),\] for
all $\alpha, c_i, c_j $ such that $c_i>c_j$ and $0<\alpha <
\frac{c_i-c_j}{2}$.
\end{thm}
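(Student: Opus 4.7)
The plan is to use the Type~A style of proof: substitute the definition of $H_G$ directly into the inequality $S(\vec{c}_{\text{new}}) < S(\vec{c})$, cancel the terms that are unaffected by the Robin Hood transfer, and reduce to a one-line algebraic inequality in $c_i$, $c_j$, and $\alpha$.

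Concretely, since $H_G(\vec{c}) = -\sum_k \log c_k^2 = -2\sum_k \log c_k$, all summands for indices $k \neq i,j$ appear on both sides and cancel. Thus the claim is equivalent to
\[
-2\log(c_i - \alpha) - 2\log(c_j + \alpha) < -2\log c_i - 2\log c_j,
\]
which after rearranging and combining logarithms becomes
\[
(c_i - \alpha)(c_j + \alpha) > c_i c_j.
\]
Expanding the left side and cancelling $c_i c_j$ leaves $\alpha(c_i - c_j - \alpha) > 0$.

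The final step is to verify this last inequality under the hypotheses. We have $\alpha > 0$ by assumption, and $c_i - c_j - \alpha > c_i - c_j - \tfrac{c_i - c_j}{2} = \tfrac{c_i - c_j}{2} > 0$ because $c_i > c_j$ and $\alpha < \tfrac{c_i - c_j}{2}$. Hence both factors are strictly positive, establishing the inequality.

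There is no real obstacle here; the proof is mechanical once the logarithm terms outside $\{i,j\}$ are cancelled, and the only care needed is checking the sign of $c_i - c_j - \alpha$ using the upper bound $\alpha < (c_i - c_j)/2$ (rather than the weaker $\alpha < c_i - c_j$, which also suffices but is not the stated bound). This mirrors the structure of the $\frac{\ell^2}{\ell^1}$ proof for \Da~given earlier in the excerpt.
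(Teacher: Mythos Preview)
Your proposal is correct and follows essentially the same approach as the paper: cancel the summands for $k\neq i,j$, reduce to $(c_i-\alpha)(c_j+\alpha)>c_ic_j$, and conclude from the hypotheses on $\alpha$. The paper's final line simplifies this to $\alpha<c_i-c_j$ (the weaker bound you also noted suffices), whereas you factor as $\alpha(c_i-c_j-\alpha)>0$ and check both signs; this is a cosmetic difference only.
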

 \begin{proof}
 \beq
&-\sum_{k \neq i,j}\ln c_k^2 - \ln\left(c_i-\alpha\right)^2 - \ln \left(c_j+\alpha\right)^2 <-\sum_{k }\ln c_k^2&\\
&-2\ln\left(c_i-\alpha\right) -2\ln\left(c_j+\alpha\right)< -2\ln c_i -2\ln c_j&\\
& \left( c_i - \alpha\right)\left(c_j +\alpha\right)> c_i c_j&\\
& a<c_i-c_j,&
\eeq
which is clearly true.
\end{proof}
\subsubsection{ Hoyer   and \Da}\hfill
\label{sec:hoy1}
\begin{thm}
Hoyer  satisfies \[S(\bmat{cccccc}c_1& \cdots& c_i-\alpha& \cdots& c_j+\alpha& \cdots\emat ) < S(\vc),\] for
all $\alpha, c_i, c_j $ such that $c_i>c_j$ and $0<\alpha <
\frac{c_i-c_j}{2}$.
\end{thm}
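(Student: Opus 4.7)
The plan is to reduce the claim to the already-established Robin Hood property for $\frac{\ell^2}{\ell^1}$ (Proof~\ref{sec:l1l21}), by observing that the Hoyer measure is a strictly monotone function of $\frac{\ell^1}{\ell^2} = \frac{1}{\ell^2/\ell^1}$ when the length $N$ of the coefficient vector is held fixed.

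First I would note that a Robin Hood operation on $\vc$ does not change the number of coefficients $N$, so the factor $\sqrt{N}$ and the positive normalizing constant $(\sqrt{N}-1)^{-1}$ in
\[
\text{Hoyer}(\vc) \;=\; \frac{\sqrt{N}-\frac{\sum_j c_j}{\sqrt{\sum_j c_j^2}}}{\sqrt{N}-1}
\]
are both unaffected by the operation. Consequently, showing
\[
\text{Hoyer}\!\left([\ldots,c_i-\alpha,\ldots,c_j+\alpha,\ldots]\right) < \text{Hoyer}(\vc)
\]
is equivalent to showing that $\frac{\sum_j c_j}{\sqrt{\sum_j c_j^2}}$ strictly increases under the Robin Hood step, which in turn is equivalent to showing that its reciprocal $\frac{\sqrt{\sum_j c_j^2}}{\sum_j c_j} = \frac{\ell^2}{\ell^1}$ strictly decreases.

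The last statement is precisely Proof~\ref{sec:l1l21}, where it was established that under the Robin Hood conditions $c_i>c_j$ and $0<\alpha<\frac{c_i-c_j}{2}$,
\[
\frac{\sqrt{\sum_{k\neq i,j}c_k^2 +(c_i-\alpha)^2+(c_j+\alpha)^2}}{\sum_k c_k} \;<\; \frac{\sqrt{\sum_k c_k^2}}{\sum_k c_k}.
\]
(The $\ell^1$ total is preserved by the operation, and the bound $\alpha<(c_i-c_j)/2$ is exactly what is needed to force the sum of squares to decrease.) Chaining this through the monotone relationship above yields the required strict decrease in Hoyer.

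The only step that merits a brief sanity check is the monotonicity direction; after that the argument is purely citation. Aside from the degenerate case $N=1$, for which the Robin Hood hypothesis $c_i>c_j$ cannot be satisfied, no additional case analysis is needed, so there is no substantive obstacle—the hard work was already done in Proof~\ref{sec:l1l21}.
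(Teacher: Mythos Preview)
Your argument is correct but takes a different route from the paper. You reduce the claim to the already-established Robin Hood property of $\frac{\ell^2}{\ell^1}$ (Proof~\ref{sec:l1l21}) by observing that, with $N$ fixed, the Hoyer measure is a strictly increasing function of $\frac{\ell^2}{\ell^1}$; since a Robin Hood step leaves $N$ unchanged and strictly decreases $\frac{\ell^2}{\ell^1}$, Hoyer must strictly decrease as well. The paper instead argues directly (a ``Type~B'' proof): it differentiates the Hoyer expression with respect to the Robin Hood parameter $\alpha$, obtaining a quantity whose sign is governed by the factor $c_j-c_i+2\alpha$, which is negative exactly under the hypothesis $0<\alpha<\frac{c_i-c_j}{2}$. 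Your reduction is shorter and avoids redoing essentially the same calculus that was already carried out for $\frac{\ell^2}{\ell^1}$; the paper's direct computation has the minor advantage of being self-contained. Either way, the substantive inequality is the same one-line check on the sum of squares.
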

 \begin{proof}
{\small
\begin{multline*}
\frac{\partial}{\partial \alpha}\frac { (\sqrt{N}-\frac{\ell^1}{\ell^2})}{(\sqrt{N}-1)} \\
\equiv \frac{\partial}{\partial \alpha} \left[\frac{-1}{\sqrt{N}-1}\left(\frac{\sum_j c_j}{\left( \sum_{k \neq i,j} c_k^2 +(c_i -\alpha)^2 +(c_j + \alpha)^2\right)^{\frac{1}{2}}}\right)\right],
\end{multline*}}
which is
{\small
\begin{multline}
\nonumber {\textstyle \frac{\sum_j c_j}{\sqrt{N}-1}} \left(\sum_{k \neq i,j} \left( c_k^2 +(c_i -\alpha)^2 +(c_j + \alpha)^2\right)^{-\frac{3}{2}}\right) (c_j -c_i-2\alpha)\\ <0.
\end{multline}}
This is true as $(c_j -c_i-2\alpha)<0$.
\end{proof}
\subsubsection{Hoyer   and \Dc}\hfill
\label{sec:hoy3}
\begin{thm}
Hoyer satisfies \[S(\alpha + \vc)<S(\vc), ~\alpha \in \mathbb{R},~\alpha > 0.\]
\end{thm}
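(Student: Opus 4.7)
The Hoyer measure equals $\frac{1}{\sqrt{N}-1}\bigl(\sqrt{N} - \ell^1(\vc)/\ell^2(\vc)\bigr)$, and since the prefactor $\frac{1}{\sqrt{N}-1}$ is a positive constant independent of $\vc$, the target inequality $S(\vc+\alpha) < S(\vc)$ is equivalent to
\[
\frac{\ell^1(\vc+\alpha)}{\ell^2(\vc+\alpha)} > \frac{\ell^1(\vc)}{\ell^2(\vc)}.
\]
My plan is to reduce to this ratio inequality, square both sides (legitimate since all quantities are positive), and then close out the argument with one application of the Cauchy--Schwarz inequality.

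Adopting the abbreviations already used in Proof~\ref{sec:l1l22}, let $\sa = \sum_j c_j$ and $\saa = \sum_j c_j^2$. A direct expansion gives $\ell^1(\vc+\alpha) = \sa + N\alpha$ and $\ell^2(\vc+\alpha)^2 = \saa + 2\alpha\sa + N\alpha^2$. Substituting into the squared ratio inequality, clearing the positive denominators, and cancelling the common $\sa^2\saa$ term from both sides should leave an inequality which factors cleanly as
\[
(N\saa - \sa^2)(2\alpha\sa + N\alpha^2) > 0.
\]
The second factor is manifestly positive for $\alpha > 0$. The first factor is non-negative by Cauchy--Schwarz applied to $(c_1,\ldots,c_N)$ and $(1,\ldots,1)$, namely $\sa^2 = \bigl(\sum_j c_j\bigr)^2 \leq N\sum_j c_j^2 = N\saa$; equality forces all $c_j$ to coincide, which is precisely the case that \Dc~ excludes. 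Hence the first factor is strictly positive for every admissible $\vc$ and the proof concludes.

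I do not anticipate a serious obstacle here. The only delicacies are careful sign tracking through the cross-multiplication and the explicit appeal to the exclusion clause of \Dc~ in order to upgrade Cauchy--Schwarz from $\geq$ to $>$. A parallel route, closer in spirit to the derivative-based arguments in Proofs~\ref{sec:hoy1} and \ref{sec:hoy5}, would be to compute $\frac{\partial}{\partial \alpha} S(\vc+\alpha)$ and show it is strictly negative; that approach ultimately funnels down to the same comparison between $N\saa$ and $\sa^2$.
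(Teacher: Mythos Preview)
Your argument is correct. The factorisation $(N\saa - \sa^2)(2\alpha\sa + N\alpha^2) > 0$ checks out exactly as you describe, and your appeal to the exclusion clause of \Dc{} to upgrade Cauchy--Schwarz to a strict inequality is the right way to handle the equality case.

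The paper takes the derivative route you mention as an alternative: it computes $\frac{\partial}{\partial\alpha}S(\vc+\alpha)$, applies the quotient rule to $\ell^1/\ell^2$, and after some algebra reduces the sign condition to $N\saa > \sa^2$, which is again Cauchy--Schwarz. So the two arguments converge at the same inequality, as you anticipated. Your direct approach has a mild advantage in that it establishes the full inequality $S(\vc+\alpha) < S(\vc)$ for each fixed $\alpha$ in one stroke, whereas the derivative argument technically shows only that $S(\vc+\alpha)$ is strictly decreasing in $\alpha$ and then implicitly relies on continuity (or on evaluating the derivative for all intermediate values) to conclude. The derivative method, on the other hand, aligns with the paper's Type~B template used for several of the other measures, which gives it a uniformity of presentation. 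Substantively there is nothing to choose between them.
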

 \begin{proof}
{\small
\begin{eqnarray*}
&&\frac{\partial}{\partial \alpha}\left( \frac { \sqrt{N}-\frac{\sum_{i=1}^N (c_i+\alpha)}{\sqrt{\sum_{i=1}^N  (c_i+\alpha)^2}}}{(\sqrt{N}-1)}\right)
 \\&&~~~~~\equiv
 \frac{\partial}{\partial \alpha} \left[\frac{-1}{\sqrt{N}-1}\left(\sum_{i=1}^N  (c_i +\alpha)\right) \left(\sum_{i=1}^N   (c_i +\alpha)^2\right)^{-\frac{1}{2}}\right].
\end{eqnarray*}}
With the substitution
\begin{eqnarray*}
  \sa &=& \sum_{i=1}^N c_i\\
\saa &=& \sum_{i=1}^N c_i^2
\end{eqnarray*}
this becomes
\[
(\sa+N\alpha)^2 (\saa +2 \alpha \sa + N\alpha^2)^{-\frac{3}{2}} - N( \saa + 2 \alpha \sa + N \alpha^2)<0,
\]
which  simplifies to
\begin{equation}
  \nonumber N>\frac{\sa^2}{\saa}.
\end{equation}
We rewrite this as
{\small \begin{equation}
  \nonumber N\saa= \sum_{i=1}^N 1 \sum_{i=1}^Nc_i^2 > \left(\sum_{i = 1}^N c_i\right)^2 = \sa,
\end{equation}}
which is true by Cauchy-Schwarz.
\end{proof}
%  ${ (\sqrt{N}-\frac{\ell^1}{\ell^2})}{(\sqrt{N}-1)}^{-1}$ & $(\sqrt{N}-\frac{\sum_j |c_j|}{\sqrt{\sum_j |c_j|^2}})(\sqrt{N}-1)^{-1}$ \\
\subsubsection{ Hoyer and \De}\hfill
\label{sec:hoy5}
\begin{thm}
 Hoyer satisfies  $\forall i \exists \beta = \beta_i>0$, such that $\forall \alpha >0:$ {\small \[S(\bmat{cccc}c_1& \ldots& c_i+\beta+\alpha&\ldots\emat) > S(\bmat{cccc}c_1&\ldots&c_i+\beta& \ldots\emat).\]}
\end{thm}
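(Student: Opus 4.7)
The cleanest route is to invoke Theorem~\ref{thm:1}: the Hoyer measure satisfies \Da~(by Proof~\ref{sec:hoy1}) and \Db~(immediate from the formula, since numerator and denominator of $\ell^{1}/\ell^{2}$ scale by the same factor), so \De~follows automatically. Nevertheless, since the paper supplies direct verifications elsewhere, I would back this up with a short direct argument whose structure mirrors Proof~\ref{sec:l1l23}.

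For the direct proof, fix $i$ and write $\tilde{c}_{i}=c_{i}+\beta$. The $\sqrt{N}$ and $(\sqrt{N}-1)^{-1}$ factors in the Hoyer formula are independent of $\alpha$, so the sign of the derivative of Hoyer with respect to $\alpha$ is the opposite of the sign of the derivative of $\ell^{1}/\ell^{2}$ with respect to $\alpha$. Setting
\[
A(\alpha) \;=\; \sum_{j\neq i} c_{j} + \tilde{c}_{i}+\alpha, \qquad
B(\alpha) \;=\; \sum_{j\neq i} c_{j}^{2} + (\tilde{c}_{i}+\alpha)^{2},
\]
the quotient rule gives
\[
\frac{d}{d\alpha}\!\left(\frac{A}{\sqrt{B}}\right) \;=\; \frac{1}{B^{3/2}}\Bigl(B - A(\tilde{c}_{i}+\alpha)\Bigr).
\]
So it suffices to show $B - A(\tilde{c}_{i}+\alpha) < 0$.

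Expanding, the $(\tilde{c}_{i}+\alpha)^{2}$ terms cancel and the condition collapses to
\[
\sum_{j\neq i} c_{j}^{2} \;<\; (\tilde{c}_{i}+\alpha)\sum_{j\neq i} c_{j}.
\]
Assuming $\sum_{j\neq i} c_{j}>0$ (if all other coefficients vanish the claim is immediate by \Db), this is equivalent to
\[
c_{i}+\beta+\alpha \;>\; \frac{\sum_{j\neq i} c_{j}^{2}}{\sum_{j\neq i} c_{j}}.
\]
The right-hand side does not depend on $\beta$ or $\alpha$, so I would simply choose
\[
\beta_{i} \;>\; \frac{\sum_{j\neq i} c_{j}^{2}}{\sum_{j\neq i} c_{j}} - c_{i},
\]
after which every $\alpha>0$ satisfies the inequality and the derivative is strictly negative, establishing \De.

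There is no serious obstacle in this argument: the algebra is elementary and parallels the $\ell^{2}/\ell^{1}$ case. The only care needed is the degenerate scenario where $\sum_{j\neq i} c_{j}=0$, which I would handle separately as a one-line remark (the vector is already maximally sparse, so any growth in $c_{i}$ keeps Hoyer at its maximum and the strict inequality has to be interpreted via \Db~or ruled out as a trivial case).
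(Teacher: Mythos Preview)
Your direct argument is essentially the paper's own proof: both differentiate $\ell^{1}/\ell^{2}$ with respect to $\alpha$, reduce to the inequality $(c_{i}+\beta+\alpha)\sum_{j\neq i}c_{j} > \sum_{j\neq i}c_{j}^{2}$, and observe that a sufficiently large $\beta$ forces this for all $\alpha>0$. Your additional remark that Theorem~\ref{thm:1} already yields \De~from \Da~and \Db~is a valid and slightly cleaner shortcut that the paper does not invoke here.
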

 \begin{proof}
{\small
\begin{equation}
\nonumber \frac{\partial}{\partial \alpha}\frac { (\sqrt{N}-\frac{\ell^1}{\ell^2})}{(\sqrt{N}-1)}  \equiv \frac{\partial}{\partial \alpha} \left[\frac{-1}{\sqrt{N}-1}\left(\frac{\sum c_j +\alpha +\beta}{\sum_{k \neq i} \left( c_k^2 +(c_i +\alpha +\beta)^2 \right)^{\frac{1}{2}}}\right)\right],
\end{equation}}
which is
{\small
\begin{equation}
\nonumber
-\frac{\left(\sum_{j\neq i}c_j^2 +\left(c_i +\alpha + \beta\right)^2\right)^\frac{3}{2}}{\sqrt{N} -1}
\left[ \left(\sum_{j\neq i} c_j\right)\left(c_j +\alpha +\beta\right)- \sum_{j\neq i}c_j^2 \right]
\end{equation}}
Clearly for sufficiently large $\beta$ the above quantity is $>0$.
\end{proof}
\subsubsection{ Gini and \Da}\hfill
\label{sec:gini1}
\begin{thm}
The Gini Index  satisfies
\[S(c_1, \ldots, c_i-\alpha, \ldots , c_j+\alpha, \ldots ) < S(c),\] for
all $\alpha , c_i, c_j $ such that $c_i>c_j$ and $0<\alpha <
\frac{c_i-c_j}{2}$.
\end{thm}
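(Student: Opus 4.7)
The plan is to parameterize the perturbed vector by the transfer amount, view the Gini Index along this path as a function
\[
S_\alpha := S(c_1, \ldots, c_i - \alpha, \ldots, c_j + \alpha, \ldots)
\]
on the interval $[0, (c_i - c_j)/2)$, and show it is strictly decreasing in $\alpha$, so that $S_\alpha < S_0 = S(\vc)$ for every $\alpha$ in $(0, (c_i-c_j)/2)$. First I would note that $\|\vc\|_1$ is invariant under the transfer (we subtract $\alpha$ from one entry and add $\alpha$ to another), so the denominator in the Gini formula is constant in $\alpha$, and only the numerator $\sum_k c_{(k)} \tilde{w}_k$ with weights $\tilde{w}_k := (N-k+\frac{1}{2})/N$ (strictly decreasing in $k$) varies.

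Next I would partition $[0,(c_i-c_j)/2)$ into the finitely many maximal subintervals on which the sort order $c_{(1)} \le c_{(2)} \le \cdots \le c_{(N)}$ is stable. On each such subinterval, let $R_i(\alpha)$ and $R_j(\alpha)$ denote the sorted positions of $c_i-\alpha$ and $c_j+\alpha$. The hypothesis $\alpha < (c_i-c_j)/2$ guarantees $c_i-\alpha > c_j+\alpha$, so $R_i > R_j$ throughout. All other entries are constant in $\alpha$, so differentiation immediately yields
\[
\frac{dS_\alpha}{d\alpha} \;=\; -\frac{2}{\|\vc\|_1}\left(-\tilde{w}_{R_i} + \tilde{w}_{R_j}\right) \;=\; -\frac{2(R_i - R_j)}{N\,\|\vc\|_1} \;<\; 0,
\]
using $\tilde{w}_{R_j} - \tilde{w}_{R_i} = (R_i - R_j)/N$.

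Finally I would glue these local pieces together by checking that $S_\alpha$ is continuous across the rank-swap points. A swap occurs exactly when $c_j+\alpha$ or $c_i-\alpha$ passes through some fixed coefficient $c_k$; at the swap instant the two values exchanging sorted positions are equal, so interchanging their assigned weights $\tilde{w}_k$ leaves the Gini sum unchanged. Hence $S_\alpha$ is continuous on the whole interval and has strictly negative derivative on the complement of a finite set, so it is strictly decreasing overall and the result follows. The hard part is really just the rank-swap bookkeeping, and it collapses immediately once one notes the value-equality at each swap; I would also stress that the hypothesis $\alpha < (c_i-c_j)/2$ is used essentially, not merely technically, because without it the ranks of $c_i$ and $c_j$ could invert and the sign of the derivative would flip.
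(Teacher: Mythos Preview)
Your argument is correct, and it proceeds by a genuinely different route than the paper's. The paper fixes $\alpha$ once and for all and compares $S(\vd)$ with $S(\vc)$ directly: it tracks how the Robin Hood transfer moves the two affected coefficients in the sorted order (so that $c_{(j)}+\alpha$ lands at position $j+m$ and $c_{(i)}-\alpha$ at position $i-n$), writes down the explicit correspondence between the sorted $\vc$ and the sorted $\vd$, cancels common terms using $\|\vc\|_1=\|\vd\|_1$, and reduces the desired inequality to
\[
\sum_{k=1}^m\bigl(c_{(j+k)}-c_{(j)}\bigr)+\sum_{k=1}^n\bigl(c_{(i)}-c_{(i-k)}\bigr)+\alpha\bigl((i-n)-(j+m)\bigr)>0,
\]
each summand of which is nonnegative by the sort order and the bound on $\alpha$. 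Your approach instead views $S_\alpha$ as a continuous, piecewise-affine function of $\alpha$ and checks the sign of the slope on each stable-rank subinterval; the rank-swap bookkeeping collapses to the one-line continuity check you gave. What you gain is that you never have to name the integers $m,n$ or write out the full index correspondence; what the paper's direct approach gains is that it never invokes derivatives or a piecewise decomposition, and it produces an explicit nonnegative expression for the gap $S(\vc)-S(\vd)$ in one shot. Both use the hypothesis $\alpha<(c_i-c_j)/2$ at exactly the same place: to keep the rank of $c_i-\alpha$ strictly above that of $c_j+\alpha$.
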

 \begin{proof}
The Gini Index of $\vc=\bmat{cccc} c_1& c_2&c_3&\cdots \emat$ is given by
\begin{equation}
S(\vc) = 1 - 2\sum_{k=1}^N \frac{c_{(k)}}{\|\vc\|_1}\left(\frac{N-k+\frac{1}{2}}{N}  \right),
\end{equation}
where $(k)$ denotes the new index after sorting from lowest to highest, that
is, $c_{(1)}\leq c_{(2)} \leq\cdots\leq c_{(N)}$.

Without loss of generality we can assume that the two coefficients
involved in the Robin Hood operation are $c_{(i)}$ and $c_{(j)}$.
After a Robin Hood operation is performed on $\vc$ we label the
resulting set of coefficients $\vd$ which are sorted using an index
which we denote $[\cdot]$, that is, $d_{[1]}\leq d_{[2]}
\leq\cdots\leq d_{[N]}$.  Let us assume that the Robin Hood operation
alters the sorted ordering in that the new coefficient obtained by the
subtraction of $\alpha$ from $c_{(i)}$ has the new rank $i-n$, that
is,
\begin{equation}
  \nonumber d_{[i-n]} = c_{(i)}-\alpha
\end{equation}
and the new coefficient obtained by the addition of $\alpha$ to
$c_{(j)}$ has the new rank $j+m$, that is,
\begin{equation}
  \nonumber d_{[j+m]} = c_{(j)}+\alpha .
\end{equation}
The correspondence between the coefficients of $\vc$ and $\vd$ is
shown in Fig~\ref{fig:mapping}
%RESUME HERE
%% {\small
%% \xymatrix@R=6cm@C=1cm{
%% c_{(1)} \ar[dd]& c_{(2)}\ar[dd] & \cdots & c_{(j-1)}\ar[dd]&c_{(j)}
%% \ar@{--}[d] \ar@{--}[d];[dr] \ar@{}[dr];[drr]|{\cdots} \ar@{--}[drr];[drrr] \ar@{-->}[drrr];[drrrd]
%% &c_{(j+1)}\ar[ddl] &\cdots&c_{(j+m)}\ar[ddl] &c_{(j+m+1)}\ar[dd]&\cdots\\
%% &&&&&&&&&\\
%% d_{[1]}& d_{[2]} & \cdots & d_{[j-1]} & d_{[j]}&  d_{[j+1]} &\cdots&d_{[j+m]}&d_{[j+m+1]}&\cdots
%%  }}

%% {\small
%% \xymatrix@=.5pt@C=10pt{
%% \cdots&c_{(i-n-1)} \ar[dd]& c_{(i-n)}\ar[ddr] & \cdots & c_{(i-1)}\ar[ddr]&c_{(i)}
%% \ar@{--}[d] \ar@{--}[d];[dl] \ar@{}[dl];[dll]|{\cdots} \ar@{--}[dll];[dlll] \ar@{-->}[dlll];[dllld]
%% &c_{(i+1)}\ar[dd] &\cdots&c_{(N)}\ar[dd] \\
%% &&&&&&&&&\\
%% \cdots&d_{[i-n-1]}& d_{[i-n]}&d_{[i-n+1]} & \cdots &  d_{[i]}&  d_{[i+1]} &\cdots&d_{[N]}
%% }}
\begin{figure*}
{\small
\xymatrix@C=.5pt@R=10pt{
c_{(1)} \ar[dd]& c_{(2)}\ar[dd] & \cdots & c_{(j-1)}\ar[dd]&c_{(j)}
\ar@{--}[d] \ar@{--}[d];[dr] \ar@{}[dr];[drr]|{\cdots} \ar@{--}[drr];[drrr] \ar@{-->}[drrr];[drrrd]
&c_{(j+1)}\ar[ddl] &\cdots&c_{(j+m)}\ar[ddl] &c_{(j+m+1)}\ar[dd]&\cdots&c_{(i-n-1)} \ar[dd]& c_{(i-n)}\ar[ddr] & \cdots & c_{(i-1)}\ar[ddr]&c_{(i)}
\ar@{--}[d] \ar@{--}[d];[dl] \ar@{}[dl];[dll]|{\cdots} \ar@{--}[dll];[dlll] \ar@{-->}[dlll];[dllld]
&c_{(i+1)}\ar[dd] &\cdots&c_{(N)}\ar[dd]\\
&&&&&&&&&&&&&&&&\\
d_{[1]}& d_{[2]} & \cdots & d_{[j-1]} & d_{[j]}&  \cdots& d_{[j+m-1]} &d_{[j+m]}&d_{[j+m+1]} &\cdots&d_{[i-n-1]}& d_{[i-n]}&d_{[i-n+1]} & \cdots &  d_{[i]}&  d_{[i+1]} &\cdots&d_{[N]}
 }
}\caption{The mapping between a vector before and after a Robin Hood operation. This is used in Proof~\ref{sec:gini1}}\label{fig:mapping}
\end{figure*}
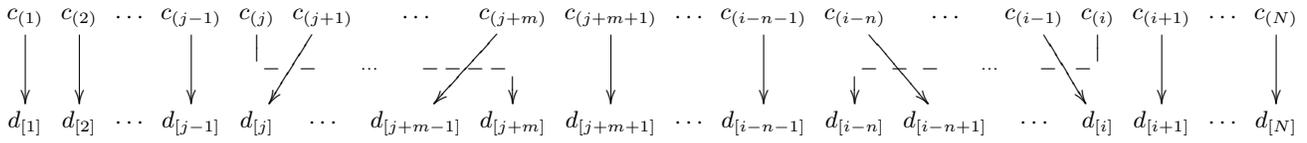
%% \begin{center}
%%   \includegraphics[width=12cm]{gini_D1_mapping}
%% \end{center}
and in mathematical terms is
{\small \begin{eqnarray*}
 d_{[k]} = c_{(k)}  & \text{~for~} &  1\leq k \leq j-1 \\
 d_{[k]} = c_{(k+1)}  & \text{~for~} &  j\leq k \leq j+m-1 \\
 d_{[k]} = c_{(j)}+\alpha  &  \text{~for~} &   k = j+m \\
 d_{[k]} = c_{(k)}  &  \text{~for~} &  j+m+1\leq k \leq i-n-1 \\
 d_{[k]} = c_{(i)}-\alpha  &  \text{~for~} &   k = i-n \\
 d_{[k]} = c_{(k-1)}  &  \text{~for~} &  i-n+1\leq k \leq i \\
 d_{[k]} = c_{(k)}  &  \text{~for~} &  i+1\leq k \leq N.
\end{eqnarray*}}
We wish to show
\begin{equation}
  \nonumber S(\vc)> S(\vd)
\end{equation}
Removing common terms and noting that $\|\vc\|_1 = \|\vd\|_1$ we can
simplify this to
\begin{equation}
  \nonumber \sum_{k \in \Delta} c_{(k)}\left(N-k+\half\right) < \sum_{k \in \Delta} d_{[k]}\left(N-k+\half\right),
\end{equation}
where $\Delta = \{ j, j+1, \ldots,j+m,i-n, i-n+1, \ldots,i\}$.
Using the correspondence above we can express the coefficients of
$\vd$ in terms of the coefficients of $\vc$. We then get
\begin{eqnarray*}
  \begin{array}{c}
   \sum_{k=1}^m c_{(j+k)}\left[\left(N-j-k+1+\half\right) -\left(N-j-k+\half\right)\right] \\
 +  \sum_{k=1}^n c_{(i-k)}\left[\left(N-i+k-1+\half\right) -\left(N-i+k+\half\right)\right] \\
  +  c_{(j)}\left[\left(N-j-m+\half\right) -\left(N-j+\half\right)\right] \\
  +  c_{(i)}\left[\left(N-i+n+\half\right) -\left(N-i+\half\right)\right] \\
  + \alpha \left[\left(N-j-m+\half\right) -\left(N-i+n+\half\right)\right]>0,
  \end{array}
\end{eqnarray*}
which becomes
{\small
\begin{eqnarray*}
  %\sum_{k=1}^m c_{(j+k)} - \sum_{k=1}^n c_{(i-k)} - m c_{(j)} +n c_{(i)} + \alpha \left( i-n - (j+m)\right) &>&0\\
% \sum_{k=1}^m \left(c_{(j+k)}-c_{(j)}\right) + \sum_{k=1}^n \left(c_{(i)} - c_{(i-k)}\right)  + \alpha \left( (i-n) - (j+m)\right) &>&0
&& \sum_{k=1}^m \left(c_{(j+k)}-c_{(j)}\right) + \sum_{k=1}^n \left(c_{(i)} - c_{(i-k)}\right)\\&&~~~  + \alpha \left( (i-n) - (j+m)\right) >0.
\end{eqnarray*}}
This is true as the two summations are positive as the negative
component has a lower sorted index than the positive and is hence
smaller and the last term is positive due to the condition on
$\alpha$.
\end{proof}
\subsubsection{ Gini and \Db  }\hfill
\label{sec:gini2}
\begin{thm}
The Gini Index  satisfies  \[S(\alpha\vc) = S(\vc), ~\forall~ \alpha \in \mathbb{R},~\alpha > 0.\]
\end{thm}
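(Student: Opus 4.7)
The plan is to show that every ingredient appearing in the Gini Index formula is either itself scale invariant or cancels out when we replace $\vc$ by $\alpha\vc$ for $\alpha > 0$. Recall that
\[
S(\vc) = 1 - 2\sum_{k=1}^N \frac{c_{(k)}}{\|\vc\|_1}\left(\frac{N-k+\frac{1}{2}}{N}\right),
\]
so the quantities to examine are the sorted coefficients $c_{(k)}$, the normalizer $\|\vc\|_1$, and the weights $\frac{N-k+\frac{1}{2}}{N}$.

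First I would observe that multiplication by a positive constant preserves order, so sorting $\alpha \vc$ yields $(\alpha\vc)_{(k)} = \alpha c_{(k)}$ for each $k$; in particular the ordering index is unchanged. Next I would use the standard fact that $\|\alpha\vc\|_1 = \alpha \|\vc\|_1$ (valid since $\alpha > 0$ and we are in the positive orthant by A2). The weights $\frac{N-k+\frac{1}{2}}{N}$ depend only on $N$ and $k$, not on the coefficient values at all, so they are untouched by scaling.

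Putting these three observations together, each summand transforms as
\[
\frac{(\alpha\vc)_{(k)}}{\|\alpha\vc\|_1}\left(\frac{N-k+\frac{1}{2}}{N}\right) = \frac{\alpha c_{(k)}}{\alpha \|\vc\|_1}\left(\frac{N-k+\frac{1}{2}}{N}\right) = \frac{c_{(k)}}{\|\vc\|_1}\left(\frac{N-k+\frac{1}{2}}{N}\right),
\]
so the sum and hence the entire expression $S(\alpha\vc)$ equals $S(\vc)$.

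There is essentially no obstacle here; the only subtlety worth flagging is that the argument uses $\alpha > 0$ in two places (to preserve the sort order and to avoid dividing by a scaled negative norm), so I would mention both explicitly. No appeal to the $N$-dependence of $S$ is needed because the two occurrences of $\alpha$ in numerator and denominator of each term cancel before $N$ enters.
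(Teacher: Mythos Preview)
Your proof is correct and follows exactly the same approach as the paper: both simply observe that $(\alpha\vc)_{(k)} = \alpha c_{(k)}$ and $\|\alpha\vc\|_1 = \alpha\|\vc\|_1$, so the factors of $\alpha$ cancel in each summand. Your version is slightly more explicit about why the sort order is preserved and why $\alpha>0$ is needed, but the argument is identical in substance.
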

 \begin{proof}
\begin{eqnarray*}
  S(\alpha \vc)& =& 1 - 2\sum_{k=1}^N \frac{\alpha c_{(k)}}{\|\alpha \vc\|_1}\left(\frac{N-k+\frac{1}{2}}{N}  \right)\\
&=& 1 - 2\sum_{k=1}^N \frac{\alpha c_{(k)}}{\alpha\|\vc\|_1}\left(\frac{N-k+\frac{1}{2}}{N}  \right)\\
&=& S(\vc).
\end{eqnarray*}
\end{proof}

\subsubsection{ Gini   and \Dc}\hfill
\label{sec:gini3}
\begin{thm}
 The Gini Index satisfies \[S(\alpha + \vc)<S(\vc), ~\alpha \in \mathbb{R},~\alpha > 0.\]
\end{thm}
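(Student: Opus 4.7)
The key observation is that adding a positive constant $\alpha$ to every coordinate preserves the sorted order, so $(\alpha+\vc)_{(k)} = c_{(k)} + \alpha$, and the new $\ell^1$ norm is simply $\|\vc\|_1 + N\alpha$. Plugging this into the Gini formula and using the identity $\sum_{k=1}^N (N-k+\tfrac{1}{2}) = N^2/2$, one expresses $S(\alpha+\vc)$ in closed form. The plan is then to reduce the desired inequality $S(\alpha+\vc) < S(\vc)$ to an algebraic condition on the $c_{(k)}$ alone.

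More concretely, writing $L = \|\vc\|_1$ and $T = \sum_{k=1}^N c_{(k)}(N-k+\tfrac{1}{2})$, I would compute
\[
S(\alpha+\vc) = 1 - \frac{2T + \alpha N^2}{N(L + N\alpha)}, \qquad S(\vc) = 1 - \frac{2T}{NL}.
\]
Since both denominators are positive, the desired strict inequality cross-multiplies (after cancellation of the $2TNL$ terms and division by the positive factor $\alpha N^2$) to the single requirement
\[
NL > 2T, \quad \text{i.e.,} \quad \sum_{k=1}^N c_{(k)}\bigl(2k - N - 1\bigr) > 0.
\]
Note that the weights $w_k := 2k - N - 1$ sum to zero and are strictly increasing in $k$, while the $c_{(k)}$ are non-decreasing; so the excluded case $c_1 = c_2 = \cdots = c_N$ yields equality, consistent with the statement.

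The remaining content is proving the strict inequality $\sum_k c_{(k)} w_k > 0$ under the non-degenerate hypothesis. I would do this by the symmetric pairing $k \leftrightarrow N+1-k$. A direct check gives $w_{N+1-k} = -w_k$, and $w_k < 0$ for $k < (N+1)/2$, so grouping the terms yields
\[
\sum_{k=1}^N c_{(k)} w_k = \sum_{k < (N+1)/2} \bigl(c_{(k)} - c_{(N+1-k)}\bigr)\, w_k,
\]
in which each summand is a product of two non-positive quantities and so is non-negative. Since the coefficients are assumed not all equal, there exists some $k < (N+1)/2$ with $c_{(k)} < c_{(N+1-k)}$, making at least one term strictly positive and establishing the required strict inequality. (Equivalently, this is Chebyshev's sum inequality applied to the two co-sorted sequences $(c_{(k)})$ and $(w_k)$, using that one of them has zero mean.)

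The only step that requires any care is verifying that the ordering really is preserved (so that the sorted index of $c + \alpha$ matches that of $\vc$) and carrying out the algebraic simplification cleanly; once the problem is reduced to $\sum_k c_{(k)}(2k-N-1) > 0$, the pairing argument finishes it in one line. There is no genuine obstacle here — the proof is shorter than the corresponding Robin Hood argument in Proof~\ref{sec:gini1} because adding a constant does not disturb the sorting.
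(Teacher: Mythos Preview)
Your proof is correct and follows the same overall strategy as the paper: substitute $(\alpha+\vc)_{(k)}=c_{(k)}+\alpha$ and $\|\alpha+\vc\|_1=\|\vc\|_1+N\alpha$ into the Gini formula and algebraically reduce $S(\alpha+\vc)<S(\vc)$ to an inequality involving only the sorted $c_{(k)}$. The difference lies in the last step. The paper's chain of manipulations terminates at $\sum_k f(k)\bigl(1-c_{(k)}/\|\vc\|_1\bigr)>0$ and declares this ``clearly true for $N>1$'' because each summand is non-negative; however, that display carries an extraneous factor of $N$ in the $\alpha$-term of the first line, and once this is corrected the reduced inequality is exactly your $\sum_k c_{(k)}(2k-N-1)>0$, whose summands have both signs. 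Your symmetric pairing $k\leftrightarrow N+1-k$ (equivalently Chebyshev's sum inequality) is therefore not merely an alternative finishing move but the actual content the paper's ``clearly'' glosses over, and it is also where the excluded all-equal case is genuinely used.
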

 \begin{proof}
Rewriting $S(\alpha +\vc)< S(\vc)$ and
%% \begin{equation}
%% \nonumber 1-2 \sum_{k=1}^N\frac{c_{(k)}+\alpha}{\|\vc +\alpha\|_1}\left(\frac{N-k+\half}{N}\right) < 1-2 \sum_{k=1}^N \frac{c_{(k)}}{\|\vc\|_1}\left(\frac{N-k+\half}{N}\right)
%% \end{equation}
making the substitution
\begin{equation}
\nonumber f(k) = \left(\frac{N-k+\half}{N}  \right),
\end{equation}
we get the following:
{\small \begin{eqnarray*}
\sum_{k=1}^N \frac{c_{(k)}}{\|\vc+\alpha\|_1}f(k)  +
 \frac{N\alpha}{\|\vc+\alpha\|_1}\sum_{k=1}^N f(k)  -\sum_{k=1}^N \frac{c_{(k)} }{\|\vc\|_1}f(k)  & >& 0\\
\sum_{k=1}^N c_{(k)} f(k)  \left( \frac{1}{\|\vc+\alpha \|_1} - \frac{1}{\|\vc\|_1}\right) +
 \frac{N\alpha}{\|\vc+\alpha \|_1}\sum_{k=1}^Nf(k)  & >& 0\\
\sum_{k=1}^N \frac{c_{(k)}}{\|\vc\|_1}
 f(k)  \left( \frac{-N\alpha}{\|\vc+\alpha\|_1}\right) +
 \frac{N\alpha}{\|\vc+\alpha\|_1}\sum_{k=1}^N f(k) & >& 0\\
\sum_{k = 1}^N f(k) \left( 1- \frac{c_{(k)}}{\|\vc\|_1}\right)&>&0.
\end{eqnarray*}}
This is clearly true for $N>1$.
\end{proof}
\subsubsection{ Gini   and \Dd }\hfill
\label{sec:gini4}
\begin{thm}
The Gini Index satisfies \[S(\vc) = S(\vc\|\vc) = S(\vc\|\vc\|\vc) = S(\vc\|\vc\|\cdots\|\vc).\]
\end{thm}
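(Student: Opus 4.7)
The plan is to verify cloning directly from the definition, exploiting how sorting interacts with concatenation and how the $\ell^1$ norm scales. Let $\vC = \underbrace{\vc \| \vc \| \cdots \| \vc}_{M \text{ copies}}$, so that $\vC$ has $MN$ entries and $\|\vC\|_1 = M\|\vc\|_1$. When $\vC$ is sorted in ascending order, the value $c_{(k)}$ appears exactly $M$ times in a contiguous block, occupying positions $M(k-1)+1, M(k-1)+2, \ldots, Mk$. I would state this grouping as the first key observation.

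Next I would substitute into the Gini definition and group the $MN$ summands into $N$ blocks of $M$:
\begin{eqnarray*}
S(\vC) &=& 1 - 2\sum_{j=1}^{MN} \frac{C_{(j)}}{\|\vC\|_1}\left(\frac{MN - j + \frac{1}{2}}{MN}\right) \\
&=& 1 - \frac{2}{M\|\vc\|_1}\sum_{k=1}^{N} c_{(k)} \sum_{r=1}^{M} \frac{MN - M(k-1) - r + \frac{1}{2}}{MN}.
\end{eqnarray*}
The main computation is then to reduce the inner arithmetic sum. Using $\sum_{r=1}^M r = M(M+1)/2$, the inner sum simplifies to $\frac{M}{N}\bigl(N - k + \frac{1}{2}\bigr)$, at which point the two factors of $M$ cancel and one recovers $S(\vc)$ exactly. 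I would emphasize that the $+\frac{1}{2}$ correction in the standard Gini formula is precisely what makes this cancellation go through cleanly; without it, the cloned version would differ by a lower-order term in $1/N$.

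The only real obstacle is the bookkeeping in the arithmetic series reduction: the indices inside the inner sum must be handled carefully so that the telescoping $M(N-k)$ and the leftover $M/2$ combine to give $M(N-k+\tfrac{1}{2})$. Once that is done, the equality $S(\vc) = S(\vC)$ is immediate for any fixed number $M$ of clones, and the full chain $S(\vc) = S(\vc\|\vc) = S(\vc\|\vc\|\vc) = \cdots$ follows by applying the identity for each value of $M \geq 1$.
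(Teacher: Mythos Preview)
Your proposal is correct and follows essentially the same route as the paper: both clone $\vc$ $M$ times, group the $MN$ sorted entries into $N$ blocks of $M$ identical values via the substitution $j = M(k-1)+r$, and reduce the inner arithmetic sum over $r$ using $\sum_{r=1}^M r = M(M+1)/2$ so that the factors of $M$ cancel to recover $S(\vc)$. Your additional remark that the $+\tfrac{1}{2}$ term is exactly what makes the cancellation clean is a nice observation the paper does not make explicit.
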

 \begin{proof}
We clone $\vc$ $M$ times to get the vector $\vd$ which has length $MN$:
\begin{eqnarray*}
  &&\begin{array}{ccc}S(&\underbrace{\vc\|\cdots\|\vc}&) =S(\vd)\\&M&\end{array}\\
&=& 1 - 2 \sum_{k=1}^{MN} \frac{d_{(k)}}{\|\vd\|_1}\left(\frac{MN-k +\half}{MN}\right)\\
&=& 1 - 2 \sum_{j=1}^{M} \sum_{i=1}^{N} \frac{c_{(k)}}{M\|\vc\|_1}\left(\frac{MN-(Mi-M+j) +\half}{MN}\right)\\
&=& 1 - 2 \sum_{i=1}^{N} \frac{c_{(k)}}{\|\vc\|_1} \sum_{j=1}^{M}\left(\frac{MN-Mi+M-j +\half}{M^2N}\right)\\
&=& 1 - 2 \sum_{i=1}^{N} \frac{c_{(k)}}{\|\vc\|_1} \left(\frac{M^2N-M^2i+M^2-\frac{M(M+1)}{2} +\frac{M}{2}}{M^2N}\right)\\
&=& 1 - 2 \sum_{i=1}^{N} \frac{c_{(k)}}{\|\vc\|_1} \left(\frac{M^2N-M^2i+M^2-\frac{M^2}{2}-\frac{M}{2} +\frac{M}{2}}{M^2N}\right)\\
&=& 1 - 2 \sum_{i=1}^{N} \frac{c_{(k)}}{\|\vc\|_1} \left(\frac{M^2N-M^2i+\frac{M^2}{2}}{M^2N}\right)\\
&=& 1 - 2 \sum_{i=1}^{N} \frac{c_{(k)}}{\|\vc\|_1} \left(\frac{N-i+\half}{N}\right)\\
&=& S(\vc).
\end{eqnarray*}
\end{proof}
\subsubsection{ Gini   and \De}\hfill
\label{sec:gini5}
\begin{thm}
The Gini Index satisfies $\forall i \exists \beta = \beta_i>0$, such that $\forall \alpha >0:$ {\small \[S(\bmat{cccc}c_1& \ldots& c_i+\beta+\alpha&\ldots\emat) > S(\bmat{cccc}c_1&\ldots&c_i+\beta& \ldots\emat).\]}
\end{thm}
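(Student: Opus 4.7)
The plan is to pick $\beta$ large enough that adding any positive $\alpha$ to $c_i + \beta$ does not disturb the sorted ordering, and then to differentiate the Gini Index explicitly with respect to $\alpha$. Once the ordering is frozen, the Gini formula becomes a smooth rational function of $\alpha$, and showing $\partial S / \partial \alpha > 0$ reduces to a simple weight comparison.

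Concretely, I would choose $\beta > \max_{j\neq i} c_j - c_i$, so that $c_i + \beta + \alpha > c_j$ for every $j \neq i$ and every $\alpha > 0$. Under this choice the sorted index $(N)$ is always attained by the $i$-th coordinate, while the remaining sorted coefficients $c_{(1)} \leq \cdots \leq c_{(N-1)}$ are precisely the $\{c_j\}_{j \neq i}$ listed in ascending order, independent of $\alpha$. Writing $L = \sum_{j \neq i} c_j = \sum_{k=1}^{N-1} c_{(k)}$ and $A = \sum_{k=1}^{N-1} c_{(k)} \frac{N-k+1/2}{N}$, both quantities are constants in $\alpha$, and the Gini Index becomes
\[
S(\alpha) = 1 - \frac{2}{L + c_i + \beta + \alpha}\left( A + \frac{c_i + \beta + \alpha}{2N}\right).
\]

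A direct quotient-rule computation then gives
\[
\frac{\partial S}{\partial \alpha} = \frac{2\bigl(A - L/(2N)\bigr)}{(L + c_i + \beta + \alpha)^2},
\]
so the claim reduces to $A > L/(2N)$. This is the key estimate and it is where the structural content of the Gini Index enters: the weight $(N-k+1/2)/N$ attached to the largest coordinate (rank $N$) is the minimum weight $1/(2N)$, while every other rank $k \leq N-1$ carries weight at least $3/(2N)$. Hence $A \geq \frac{3}{2N}\sum_{k=1}^{N-1} c_{(k)} = \frac{3L}{2N} > \frac{L}{2N}$ whenever $L > 0$, which happens as soon as at least one of the remaining coordinates is nonzero.

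The only place the argument can stumble is the degenerate case $L = 0$, i.e.\ all other coefficients are zero. There the vector is already at the maximally sparse configuration $S = 1 - 1/N$ and $S$ is constant in $\alpha$, so the strict inequality in \De{} cannot hold; however, this case is excluded implicitly by the setup of Bill Gates (which tacitly assumes a non-trivial background population), and it agrees with the exclusion in \Dc{} of the all-equal case. Aside from handling this edge case cleanly, the argument is essentially bookkeeping, and I expect the only real subtlety to be making the freeze-the-ordering step rigorous by committing to the explicit threshold $\beta > \max_{j\neq i} c_j - c_i$ at the outset.
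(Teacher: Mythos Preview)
Your proof is correct and reaches the same core inequality as the paper, though via the derivative route (what the paper calls a ``Type~B'' proof) rather than the direct comparison the paper uses (``Type~A''). The paper also absorbs the ``make $c_i$ the largest'' step into $\beta$ and then compares $S$ before and after augmenting the top coefficient; after cancellation it obtains $\sum_{k=1}^{N} c_{(k)}(N-k) > 0$, which is exactly your condition $A > L/(2N)$ once you note $A - L/(2N) = \tfrac{1}{N}\sum_{k=1}^{N-1} c_{(k)}(N-k)$. Your bound $A \ge 3L/(2N)$ is a slightly coarser but perfectly adequate way to conclude. Two places where your write-up is actually tidier than the paper's: you commit up front to the explicit threshold $\beta > \max_{j\neq i} c_j - c_i$, and you flag the degenerate case $L=0$ (all other coordinates zero), which the paper's proof silently glosses over when asserting the final strict inequality.
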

 \begin{proof}
We use the following notation,
\begin{eqnarray*}
  \vc& = &\{ c_{(1)}, c_{(2)}, \ldots, c_{(N)}+\beta\}.
\end{eqnarray*}
Without loss of generality we have chosen to perform the operation on $c_{(N)}$
as $\beta$ can absorb the additive value needed to change any of the $c_{(i)}$
to $c_{(N)}$.

We wish to show that
{\footnotesize
\begin{eqnarray*}
  &&1-2\sum_{i=1}^{N} \frac{c_{(i)}}{\|\vc\|_1}\left(\frac{N-i+\half}{N}\right) \\&&~~~<
1-2\sum_{i=1}^{N} \frac{c_{(i)}}{\|\vc\|_1+\beta}\left(\frac{N-i+\half}{N}\right) - \frac{\beta}{N(\|\vc\|_1+\beta)}.
\end{eqnarray*}}
We can simplify the above to
{\footnotesize
\begin{eqnarray*}
\sum_{i=1}^{N} c_{(i)} \left(\frac{N-i+\half}{N}\right)\left(\frac{1}{\|\vc\|_1} - \frac{1}{\|\vc\|_1+\beta} \right) &>& \frac{\beta}{2N(\|\vc\|_1+\beta)}\\
\sum_{i=1}^{N} c_{(i)} \left(N-i+\half \right) &>& \frac{\|\vc\|_1}{2}=
\frac{1}{2}\sum_{i=1}^N{} c_{(i)} \\
\sum_{i=1}^{N} c_{(i)} \left(N-i\right) &>& 0.
\end{eqnarray*}
}
Hence, the Gini Index satisfies \De.
\end{proof}
\subsubsection{ Gini   and \Df   }\hfill
\label{sec:gini6}
\begin{thm}
The Gini Index satisfies satisfy \[S(\vc||0) > S(\vc).\]
\end{thm}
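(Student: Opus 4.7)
The plan is to exploit the fact that appending a zero simply inserts a new smallest coefficient with zero mass, so the sum defining the Gini Index is easy to compare term by term before and after the augmentation. Assume $\vc$ is nonnegative with $\|\vc\|_1 > 0$ and sorted in ascending order as $c_{(1)} \leq \cdots \leq c_{(N)}$. Appending $0$ gives a sorted vector of length $N+1$ whose first entry is $0$ and whose remaining entries are $c_{(1)}, \ldots, c_{(N)}$. The $\ell^1$ norm is unchanged, so the denominator $\|\vc\|_1$ is the same in both Gini sums.

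First, I would write down the two Gini expressions explicitly. For the augmented vector, the $k=1$ term vanishes because the smallest coefficient is $0$, and re-indexing via $j = k-1$ converts the remaining sum over $k = 2, \ldots, N+1$ into a sum over $j = 1, \ldots, N$ with weights $\frac{N-j+\tfrac12}{N+1}$ (note the denominator is $N+1$ now, not $N$). The original Gini sum has the same coefficients $c_{(j)}$ but weights $\frac{N-j+\tfrac12}{N}$.

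The inequality $S(\vc\|0) > S(\vc)$ then reduces, after cancelling the constants $1$ and the factor $2/\|\vc\|_1$, to the elementwise comparison
\begin{equation*}
\sum_{j=1}^N c_{(j)}\left(\frac{N-j+\tfrac12}{N}\right) \;>\; \sum_{j=1}^N c_{(j)}\left(\frac{N-j+\tfrac12}{N+1}\right).
\end{equation*}
For every $j = 1, \ldots, N$ the weight $N-j+\tfrac12$ is strictly positive, and $\tfrac{1}{N} > \tfrac{1}{N+1}$, so each term on the left strictly dominates the matching term on the right once $c_{(j)} > 0$. Because $\|\vc\|_1 > 0$, at least one $c_{(j)}$ is strictly positive, so the strict inequality survives the sum. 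This completes the argument.

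The only subtlety is the edge case where $\|\vc\|_1 = 0$: then the Gini Index of $\vc$ is not well-defined (there is a $0/0$), so the statement is vacuous and can be excluded. No significant obstacle is anticipated; the proof is essentially a one-line term-by-term comparison once the re-indexing after appending the zero is made explicit.
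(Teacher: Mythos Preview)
Your proposal is correct and follows essentially the same route as the paper: both re-index the augmented sorted vector so that the appended $0$ occupies rank $1$, observe that $\|\vc\|_1$ is unchanged, and reduce $S(\vc\|0)>S(\vc)$ to the termwise comparison $\frac{N-j+\tfrac12}{N}>\frac{N-j+\tfrac12}{N+1}$. Your explicit remark that at least one $c_{(j)}>0$ is needed for strictness, and your handling of the $\|\vc\|_1=0$ edge case, are small clarifications the paper leaves implicit.
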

 \begin{proof}
 Let us define
\begin{equation}
  \nonumber \vd = \vc||0 = \bmat{cccccc} c_1& c_2&c_3&\cdots&c_N&0 \emat
\end{equation}
and we note that $\|\vd\|_1 = \|\vc\|_1$. Without loss of generality
we assign the lowest rank to the added coefficient $0$, that is,
$d_{N+1} = d_{(1)}$. We can now make the assertion $d_{(i+1)} =
c_{(i)}$, yielding
\begin{eqnarray*}
\nonumber S(\vd)=&&1 -2\sum_{k=2}^{N+1}\frac{d_{(k)}}{|\vd|}\left(\frac{N+1-k+\half}{N+1}\right)\\&&~~~ -2\frac{0}{|\vd|}\left(\frac{N+1-1+\half}{N+1}\right).
\end{eqnarray*}
Making the substitution $i = k-1$ we get
\begin{eqnarray*}
   S(\vd)&=&1 -2\sum_{i=1}^{N}\frac{d_{(i+1)}}{|\vd|}\left(\frac{N+1-i+\half}{N+1}\right)\\
 &=&1 -2\sum_{i=1}^{N}\frac{c_{(i)}}{\|\vc\|_1}\left(\frac{N-i+\half}{N+1}\right)\\
&>&1 -2\sum_{i=1}^{N}\frac{c_{(i)}}{\|\vc\|_1}\left(\frac{N-i+\half}{N}\right) \\
&=& S(\vc).
\end{eqnarray*}
\end{proof}
%\end{onecolumn}
%\twocolumn

%%------------------------------------------------------%%
%%------------------------------------------------------%%
%%------------------------------------------------------%%
%\addcontentsline{toc}{section}{References}
{%\footnotesize
\bibliography{sparse_journal_arXiv}
%\bibliography{c:/references_niall}
\bibliographystyle{IEEEtran}
}

%% \begin{IEEEbiography}{Michael Shell}
%% Biography text here.
%% \end{IEEEbiography}

%% % if you will not have a photo at all:
%% \begin{IEEEbiographynophoto}{John Doe}
%% Biography text here.
%% \end{IEEEbiographynophoto}

% that's all folks
\end{document}